\DeclareOldFontCommand{\bf}{\normalfont\bfseries}{\mathbf}
\newcommand{\cmark}{\ding{51}}
\newcommand{\xmark}{\ding{55}}
\newcommand{\yes}{\textcolor{green!50!black}{\cmark}}
\newcommand{\no}{\textcolor{red!50!black}{\xmark}}
\newtheorem{theorem}{Theorem}[section]
\newtheorem{corollary}[theorem]{Corollary}
\newtheorem{proposition}[theorem]{Proposition}
\newtheorem{lemma}[theorem]{Lemma}
\theoremstyle{definition}
\newtheorem{definition}[theorem]{Definition}
\begin{document}

\title{Picking Sequences and Monotonicity in Weighted Fair Division}

\author{
Mithun Chakraborty\\University of Michigan
\and
Ulrike Schmidt-Kraepelin\\Technische Universit\"{a}t Berlin
\and
Warut Suksompong\\National University of Singapore
}

\date{\vspace{-3ex}}

\maketitle

\begin{abstract}
We study the problem of fairly allocating indivisible items to agents with different entitlements, which captures, for example, the distribution of ministries among political parties in a coalition government.
Our focus is on picking sequences derived from common apportionment methods, including five traditional divisor methods and the quota method.
We paint a complete picture of these methods in relation to known envy-freeness and proportionality relaxations for indivisible items as well as monotonicity properties with respect to the resource, population, and weights.
In addition, we provide characterizations of picking sequences satisfying each of the fairness notions, and show that the well-studied maximum Nash welfare solution fails resource- and population-monotonicity even in the unweighted setting.
Our results serve as an argument in favor of using picking sequences in weighted fair division problems.
\end{abstract}

\section{Introduction}

After a national election, the parties forming a coalition government are faced with the task of dividing the ministries among themselves.
How can they perform this task in a fair manner, taking into account both their preferences on ministries and the votes that they received in the election?

The study of fairly allocating resources to interested agents (in this case, parties), commonly known as \emph{fair division}, has a long history dating back several decades \citep{BramsTa96,Moulin03}.
Among the most prominent fairness criteria are \emph{envy-freeness}---no agent prefers another agent's allocated bundle over her own---and \emph{proportionality}---if there are $n$ agents, then every agent receives at least $1/n$ of her value for the entire resource.
These criteria implicitly assume that all agents have the same entitlement to the resource, an assumption that is made in the vast majority of the fair division literature, yet utterly fails in our ministry example as well as when allotting supplies to districts, organizations, or university departments, which typically have different sizes.
Fortunately, both envy-freeness and proportionality allow for taking the entitlements, or \emph{weights}, into account in a natural way.
For instance, if agent~A's weight is twice that of agent~B, then A will be satisfied with respect to weighted envy-freeness as long as she derives at least twice as much value for her own bundle as for B's bundle.
While such weight-based extensions of classical fairness concepts are appropriate for scenarios with different entitlements, they sometimes cannot be satisfied when allocating indivisible items like ministries (e.g., when every party places all of its value on the same important ministry).
Consequently, recent work has proposed relaxations including \emph{weighted envy-freeness up to one item (WEF1)} \citep{ChakrabortyIgSu20} and \emph{weighted proportionality up to one item (WPROP1)} \citep{AzizMoSa20}, each of which can always be fulfilled.

An attractive class of procedures for allocating items is the class of \emph{picking sequences}: these procedures let agents take turns picking their favorite items according to a prespecified order.
Picking sequences are intuitive, can be implemented efficiently, and help preserve privacy since each agent only has to reveal the picks in her turns as opposed to her full preferences. In fact, several methods for apportioning seats in a parliament---a setting commonly known as \emph{apportionment}---can be formulated as picking sequences \citep{BramsKa04,OlearyGrEl05}.\footnote{Note that apportionment is a special case of our setting where all items are identical \citep{BalinskiYo01}. Some apportionment methods cannot be formulated as picking sequences---see Section~\ref{subsec:relatedwork} for more details.}
For example, Adams' method assigns each pick to an agent $i$ who minimizes $t_i/w_i$, where $t_i$ and $w_i$ denote the number of times that agent~$i$ has picked so far and her weight, respectively.
\citet{BramsKa04} proposed using picking sequences to allocate ministries, noting that such sequences have been used in Northern Ireland and Denmark, and \citet{ChakrabortyIgSu20} showed that the allocation produced by Adams' method always fulfills WEF1 but not necessarily WPROP1. 
It is therefore an important question which fairness criteria, if any, are satisfied by picking sequences based on other prevalent apportionment methods.

In addition to fairness, another desirable set of properties for allocation procedures is monotonicity in terms of the parameters of the setting.
In particular, \emph{resource-monotonicity} means that whenever an extra item is added, no agent receives a lower utility as a result.
Similarly, \emph{population-monotonicity} stipulates that introducing an additional agent should not increase the utility of any existing agent, and \emph{weight-monotonicity} implies that when the weight of an agent increases, her utility does not go down.\footnote{Resource-monotonicity is known as \emph{house-monotonicity} in the context of apportionment; a violation of it is referred to as the \emph{Alabama paradox} \citep{BalinskiYo01}. Likewise, violations of (variants of) population- and weight-monotonicity are called the \emph{new states paradox} and the \emph{population paradox}, respectively.}
\citet{SegalhaleviSz19} showed that for divisible items in the unweighted setting, the \emph{maximum Nash welfare (MNW)} solution, which chooses an allocation maximizing the product of the agents' utilities, is resource- and population-monotone.
How do picking sequences and (a weighted generalization of) MNW perform with respect to monotonicity properties in the weighted allocation of indivisible items?

\subsection{Our Results}

In this paper, we conduct a thorough investigation of picking sequences based on common apportionment methods, as well as the maximum (weighted) Nash welfare solution, in relation to fairness and monotonicity properties.
In addition to WEF1 and WPROP1, we consider \emph{weak weighted envy-freeness up to one item (WWEF1)}, a weakening of WEF1 proposed by \citet{ChakrabortyIgSu20}.
For brevity, we say that an allocation rule satisfies a fairness notion if the allocation that it produces for an arbitrary input instance satisfies that notion.

We begin in Section~\ref{sec:pickseq} by establishing fundamental results on our properties in the context of picking sequences.
In particular, we define three consistency properties with respect to the resource, population, and weights---for example, resource-consistency means that whenever an item is added, the new picking sequence should simply be the old one with an additional pick appended at the end.
We show that resource- and population-consistency imply the respective monotonicity properties for any number of agents, while weight-consistency implies weight-monotonicity only for two agents.
In addition, for each fairness notion, we characterize the picking sequences whose output always satisfies that notion.

With this groundwork laid, we proceed to determine the properties satisfied by different picking sequences.
First, in Section~\ref{sec:divisor}, we consider the picking sequences derived from divisor methods.
These methods assign each pick to an agent~$i$ who minimizes the ratio $f(t_i)/w_i$ capturing the proportion between the number of times the agent has picked so far and the agent's weight, where the function $f$ varies from method to method.
We establish that all divisor methods satisfy resource- and population-monotonicity for any number of agents as well as weight-monotonicity for two agents; however, we construct examples showing that the five traditional divisor methods due to Adams, Jefferson, Webster, Hill, and Dean all fail weight-monotonicity when there are three agents.
On the fairness front, we exhibit that a broad class of divisor methods called \emph{weighted power-mean divisor methods}, which include all five traditional methods, satisfy WWEF1---this illustrates a concrete manner in which divisor methods guarantee fairness in the weighted item allocation setting. 
We complement this result by proving that  Adams' method is the only divisor method satisfying the stronger notion of WEF1 while Jefferson's is the only one fulfilling WPROP1.\footnote{As we discuss in Section~\ref{subsec:relatedwork}, \citet{ChakrabortyIgSu20} showed that no method can satisfy both WEF1 and WPROP1.}

Next, in Section~\ref{sec:quota}, we address the picking sequence derived from another important apportionment method: the \emph{quota method}.
While not itself a divisor method, the quota method has a definition similar to that of Jefferson's method, which uses the function $f(t) = t+1$, but also imposes a ``quota'' to determine each agent's eligibility.
We show that the quota method exhibits similar monotonicity behavior as the divisor methods, with the notable exception that it fails population-monotonicity.
As for fairness, like Jefferson's method, the quota method satisfies WWEF1 and WPROP1.
In fact, these two rules are the first to have been shown to satisfy both WWEF1 and WPROP1, to the best of our knowledge.

Finally, in Section~\ref{sec:MNW}, we examine the \emph{maximum weighted Nash welfare (MWNW)} solution, which is a natural generalization of the well-studied MNW solution to the weighted setting.
\citet{ChakrabortyIgSu20} already proved that MWNW satisfies WWEF1 but not WEF1; we show that it fails WPROP1.
We then present examples demonstrating that even in the unweighted setting (where MWNW reduces to MNW), the rule fails both resource- and population-monotonicity.
This result stands in stark contrast to the aforementioned result of \citet{SegalhaleviSz19} that MNW is resource- and population-monotone in the context of divisible items, and is perhaps even more striking given that MNW is known to fulfill several desirable properties \citep{CaragiannisKuMo19,HalpernPrPs20,AmanatidisBiFi21}.
On the positive side, MWNW satisfies weight-monotonicity for any number of agents, and is the only rule to do so among the ones we consider in this paper.

Our results are summarized in Table~\ref{table:summary}.
Overall, we believe that they serve as an argument in favor of using picking sequences in division problems with unequal entitlements in view of both fairness and monotonicity considerations.

\begin{table*}[!ht]
\begin{center}
\begin{tabular}{ c|ccc|ccc } 
  \hline
  & \multicolumn{3}{ c| }{Monotonicity} & \multicolumn{3}{ c }{Fairness} \\
  \hline
  & Res.-mon. & Pop.-mon. & Weight-mon. & WEF1 & WWEF1 & WPROP1 \\
  \hline
 Adams & \yes & \yes & \no & \yes$^\dagger$ & \yes$^\dagger$ & \no$^\dagger$  \\
 Jefferson & \yes & \yes & \no & \no & \yes & \yes \\
 Webster & \yes & \yes & \no & \no & \yes & \no \\
 Hill & \yes & \yes & \no & \no & \yes & \no \\
 Dean & \yes & \yes & \no & \no & \yes & \no \\ \hline
 Quota & \yes & \no & \no & \no & \yes & \yes \\ \hline
 MWNW & \no & \no & \yes & \no$^\dagger$ & \yes$^\dagger$ & \no \\ \hline
\end{tabular}
\end{center}
\caption{Summary of our results. 
 \citet{ChakrabortyIgSu20} showed that Adams' method satisfies WEF1 and WWEF1 but not WPROP1, while MWNW satisfies WWEF1 but not WEF1; we indicate these results with dagger symbols. 
All other results are new to this paper.
All rules satisfy weight-monotonicity in the case of two agents.
MWNW fails resource- and population-monotonicity even in the unweighted setting.}
\label{table:summary}
\end{table*}

\subsection{Further Related Work}
\label{subsec:relatedwork}

The fair allocation of indivisible items has received substantial recent attention, notably among researchers in artificial intelligence, multiagent systems, and computational social choice---see the surveys by \citet{BouveretChMa16}, \citet{Markakis17}, \citet{Aziz20}, and \citet{Walsh20}.
A large majority of work assumes that all agents have equal entitlements, in which case the notions \emph{envy-freeness up to one item (EF1)} \citep{LiptonMaMo04,Budish11} and \emph{proportionality up to one item (PROP1)} \citep{ConitzerFrSh17,AzizCaIg19} are often considered.
Both WEF1 and WWEF1 reduce to EF1 in the unweighted setting, while WPROP1 reduces to PROP1.
Even though EF1 implies PROP1, \citet{ChakrabortyIgSu20} showed that no rule can simultaneously satisfy WEF1 and WPROP1.
\citet{AzizMoSa20} gave a protocol satisfying WPROP1 along with the economic efficiency notion of \emph{Pareto optimality}, while \citet{BabaioffNiTa21} considered competitive equilibrium for agents with different budgets representing their weights.
\citet{FarhadiGhHa19} proposed a weighted version of \emph{maximin share fairness} \citep{Budish11,KurokawaPrWa18}, and \citet{AzizChLi19} studied the analogous notion for chores (i.e., items that yield negative utilities).
\citet{BabaioffEzFe21} introduced the notion of \emph{AnyPrice share} and showed that a constant fraction of this share can be guaranteed for agents with arbitrary weights.
Unequal entitlements have also been considered in the context of allocating \emph{divisible} items, also known as \emph{cake cutting} \citep{Segalhalevi19,CrewNaSp20,CsehFl20}. 

Like fair division, apportionment methods have given rise to a long line of work that analyzes their advantages and disadvantages according to various desiderata \citep{BalinskiYo01,Pukelsheim14}.
As \citet{BalinskiYo01} noted, Adams' method tends to favor agents with smaller weights and Jefferson's typically benefits those with larger weights, whereas the other three divisor methods lie in between.
While all divisor methods and some non-divisor methods (e.g., the quota method) are associated with picking sequences, other non-divisor methods such as Hamilton's method do not give rise to a picking sequence and are therefore not useful in our context \citep[p.~149]{BramsKa04}.
Apportionment has also attracted interest in artificial intelligence \citep{BrillLaSk17,BrillGoPe20,BredereckFaFu20} as well as in philosophy \citep{WinteinHe18}.

Finally, picking sequences have been studied by several authors due to their simplicity and practicality \citep{BouveretLa11,BouveretLa14,AzizWaXi15,TominagaToYo16,BeynierBoLe19,XiaoLi20}, with a number of authors investigating manipulation issues.
We assume in this paper that agents are not strategic and always pick their most preferred item available.
In the unweighted setting, a popular picking sequence is the \emph{round-robin algorithm}, which lets agents pick items in cyclic order until the items run out; it is well-known that the output of the round-robin algorithm always satisfies EF1 under additive utilities.

\section{Preliminaries}
\label{sec:prelim}

We consider a discrete resource allocation setting with a set of agents $N=[n]$ and a set of indivisible items $M=[m]$, where $[k] := \{1,2,\dots,k\}$ for any $k\in\mathbb{N}$.
Each agent $i\in N$ is endowed with a \emph{weight} $w_i > 0$ and a \emph{utility function} $u_i: 2^M\rightarrow \mathbb{R}_{\ge 0}$; for convenience, we sometimes write $u_i(j)$ instead of $u_i(\{j\})$ for an item $j\in M$.
As is very common in the fair division literature, we assume that the utility functions are additive, i.e., $u_i(M')=\sum_{j\in M'}u_i(j)$ for all $i\in N$ and $M'\subseteq M$.
An \emph{allocation} $\mathcal{M}=(M_1,\dots,M_n)$ is a partition of the items into $n$ bundles so that agent $i$ receives bundle $M_i$.
An \emph{instance} consists of the agents, items, weights, and utility functions.
When all weights are equal (in which case we can take them to be~$1$ without loss of generality), we refer to the resulting setting as the \emph{unweighted setting}.

We consider the following three fairness notions. 
The first two notions were proposed by \citet{ChakrabortyIgSu20} and the third by \citet{AzizMoSa20}.
\begin{definition}
An allocation $(M_1,\dots,M_n)$ is said to satisfy
\begin{itemize}
\item \emph{weighted envy-freeness up to one item (WEF1)} if for any $i,j\in N$, there exists $B\subseteq M_j$ with $|B|\le 1$ such that 
$\frac{u_i(M_i)}{w_i} \geq \frac{u_i(M_j\setminus B)}{w_j}$;

\item \emph{weak weighted envy-freeness up to one item (WWEF1)} if for any $i,j\in N$, there exists $B\subseteq M_j$ with $|B|\le 1$ such that
$\frac{u_i(M_i)}{w_i} \geq \frac{u_i(M_j\setminus B)}{w_j}$
or
$
\frac{u_i(M_i\cup B)}{w_i} \geq \frac{u_i(M_j)}{w_j}
$;
\item \emph{weighted proportionality up to one item (WPROP1)} if for any $i\in N$, there exists $B\subseteq M\setminus M_i$ with $|B|\le 1$ such that
$
u_i(M_i) \ge \left(\frac{w_i}{\sum_{i'\in N}w_{i'}}\cdot u_i(M)\right) - u_i(B)
$.
\end{itemize}
\end{definition}
\citet{ChakrabortyIgSu20} showed that no rule can simultaneously satisfy WEF1 and WPROP1. 
In particular, consider an instance where $m=n$ and every agent has a nonzero utility for every item.
Any WEF1 allocation has to assign exactly one item to each agent.\footnote{Otherwise an agent with no item will (weighted-)envy an agent with at least two items by more than one item.}
On the other hand, if a certain agent has the same utility for all items and a sufficiently larger weight than every other agent, WPROP1 will require this agent to receive at least $m-1$ items.

A \emph{domain} refers to a set of instances.
A domain may include all instances with any number of agents and items, weights, and utility functions, or it may only include---for example---all instances with two agents, or all instances with equal weights (this corresponds to the unweighted setting).
An \emph{allocation rule} is a function that maps each instance in a given domain to an allocation; it is said to satisfy a fairness notion if the allocation that it produces always fulfills that notion.
We now define the three monotonicity properties that we consider---the first two have been studied by \citet{SegalhaleviSz18,SegalhaleviSz19}, while the third has not been studied in fair division to the best of our knowledge.
\begin{definition}\label{def:mon}
An allocation rule $\mathcal{R}$ with domain $\mathcal{I}$ satisfies
\begin{itemize}
\item \emph{resource-monotonicity} if the following holds: for any instance with $m$ items, when an extra item is added as item $m+1$, if both the original and the modified instance belong to $\mathcal{I}$, then each agent receives no higher utility from the allocation produced by $\mathcal{R}$ in the original instance than in the modified instance;
\item \emph{population-monotonicity} if the following holds: for any instance with $n$ agents, when an extra agent is added as agent $n+1$, if both the original and the modified instance belong to $\mathcal{I}$, then each of the first $n$ agents receives at least as much utility from the allocation produced by $\mathcal{R}$ in the original instance as in the modified instance;
\item \emph{weight-monotonicity} if the following holds: for any instance, when the weight of an agent increases, if both the original and the modified instance belong to $\mathcal{I}$, the utility that the agent receives from the allocation produced by $\mathcal{R}$ does not decrease.
\end{itemize}
\end{definition}
While these monotonicity properties are intuitive and it may seem that any reasonable allocation rule should satisfy them, this is in fact not the case: In Appendix~\ref{app:resmon-failure}, we show that two popular fair division algorithms---the \emph{envy cycle elimination algorithm} and the \emph{adjusted winner procedure}---fail resource-monotonicity even in the unweighted setting.

Next, we provide definitions related to picking sequences.

\begin{definition}
A \emph{picking sequence} on $n$ agents and $m$ items is a sequence $\pi_{n,m,\textbf{w}} = (a_1,a_2,\dots,a_m)$, where $a_i\in N$ for each $i\in M$.
A \emph{family of picking sequences} is a collection $\Pi = \{\pi_{n,m,\textbf{w}}\}$, with at most one picking sequence for each pair of positive integers $n,m$ and weight vector $\textbf{w}=(w_1,\dots,w_n)$.
A family of picking sequences $\Pi$ is called
\begin{itemize}
\item \emph{resource-consistent} if for every $n,m,\textbf{w}$ such that both $\pi_{n,m,\textbf{w}}$ and $\pi_{n,m+1,\textbf{w}}$ belong to $\Pi$, the sequence $\pi_{n,m,\textbf{w}}$ forms a prefix of $\pi_{n,m+1,\textbf{w}}$;
\item \emph{population-consistent} if for every $n,m,\textbf{w},\textbf{w}'$ such that $\textbf{w}' = (w_1,\dots,w_n,w_{n+1}')$ where $w_{n+1}'$ is the weight of agent $n+1$ and both $\pi_{n,m,\textbf{w}}$ and $\pi_{n+1,m,\textbf{w}'}$ belong to $\Pi$, the sequence $\pi_{n+1,m,\textbf{w}'}$ can be obtained from $\pi_{n,m,\textbf{w}}$ by inserting agent $n+1$ in some positions (possibly none) and trimming the suffix of the resulting sequence so that the sequence has length $m$.
\item \emph{weight-consistent} if the following holds: for every $n,m,\textbf{w}$ and $w_i' > w_i$ such that both $\pi_{n,m,\textbf{w}}$ and $\pi_{n,m,\textbf{w}'}$ belong to $\Pi$, where $\textbf{w}' = (w_1,\dots,w_i',\dots,w_n)$, the sequence $\pi_{n,m,\textbf{w}'}$ can be obtained from $\pi_{n,m,\textbf{w}}$ by moving some of agent $i$'s picks earlier (possibly none), inserting agent $i$ in some positions (possibly none), and trimming the suffix of the resulting sequence so that the sequence has length $m$.
\end{itemize}
\end{definition}

Given a picking sequence $(a_1,\dots,a_m)$ and the agents' utility functions, we assume that in the $i$th turn, agent $a_i$ picks her highest-valued item from among the remaining items, breaking ties in a consistent manner (say, in favor of lower-numbered items).
We sometimes drop the subscript from $\pi_{n,m,\textbf{w}}$ when $n,m,\textbf{w}$ are clear from the context.
A family of picking sequences generates an allocation rule, which we will refer to interchangeably with the family itself.
We also refer to a picking sequence $\pi$ interchangeably with the family of picking sequences that consists only of $\pi$.

\section{General Picking Sequences}
\label{sec:pickseq}

We begin by proving results for general picking sequences.
In addition to being interesting in their own right, these results will later help us determine the properties that each apportionment method satisfies.
First, we present characterizations of picking sequences whose output is guaranteed to satisfy each of the fairness notions WEF1, WWEF1, and WPROP1.
In particular, we show that a picking sequence guarantees a fairness notion for agents with arbitrary utility functions if and only if it does so for agents with identical utility functions that put utility $1$ on some items and $0$ on the remaining items.
This means that the fairness guarantees for general additive utilities can be expressed as relatively simple conditions on the number of picks in each prefix of the picking sequence.

\begin{theorem}
\label{thm:WEF1}
A picking sequence $\pi$ satisfies WEF1 if and only if for every prefix of $\pi$ and every pair of agents $i,j$ with $t_j\ge 2$, we have $\frac{t_i}{t_j-1}\ge \frac{w_i}{w_j}$, where $t_i$ and $t_j$ denote the number of agent $i$'s and agent $j$'s picks in the prefix, respectively.
\end{theorem}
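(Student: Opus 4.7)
The plan is to prove both directions by analyzing the order in which picks occur.

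For the ``only if'' direction, I argue by contrapositive: if some prefix of length $L$ violates $t_i/(t_j-1) \ge w_i/w_j$ for a pair $i,j$ with $t_j \ge 2$ (where $t_i, t_j$ count picks within the prefix), then construct the instance with identical additive utilities assigning value $1$ to items $\{1,\dots,L\}$ and value $0$ to items $\{L+1,\dots,m\}$. Tie-breaking in favor of lower-numbered items forces the $\ell$-th pick of $\pi$ to select item $\ell$, so after $\pi$ runs, agent $j$ holds exactly $t_j$ value-$1$ items and agent $i$ holds $t_i$. Removing any single value-$1$ item from $M_j$ yields $u_i(M_j \setminus B) = t_j - 1$, so WEF1 between $i$ and $j$ would require $t_i/w_i \ge (t_j - 1)/w_j$, contradicting the prefix violation.

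For the ``if'' direction, fix agents $i,j$ in the full run of $\pi$; the case $t_j \le 1$ is trivial, so assume $t_j \ge 2$. Let $h_1, \dots, h_{t_i}$ and $g_1, \dots, g_{t_j}$ be the picks of $i$ and $j$ in chronological order, and write $v_\ell = u_i(h_\ell)$, $c_k = u_i(g_k)$. The argument rests on two elementary facts: (a)~$(v_\ell)$ is non-increasing in $\ell$, because $i$ always selects her favorite among the available items and the available set shrinks; and (b)~whenever $i$'s $\ell$-th pick occurs before $j$'s $k$-th pick in $\pi$, $g_k$ is still available at $i$'s $\ell$-th pick, giving $v_\ell \ge c_k$. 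Applying the prefix hypothesis to the prefix that ends at $j$'s $k$-th pick (for each $k \ge 2$) yields $\tau(k) := |\{i\text{'s picks before }g_k\}| \ge w_i(k-1)/w_j$.

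I aggregate these into WEF1 using $B = \{g_1\}$. Sort $c_2, \dots, c_{t_j}$ in non-increasing order as $c^-_{(1)} \ge \dots \ge c^-_{(t_j-1)}$, so that $\sum_s c^-_{(s)} = u_i(M_j) - u_i(g_1)$. Claim: $v_{\lceil w_i s/w_j \rceil} \ge c^-_{(s)}$ for each $s \in \{1,\dots,t_j-1\}$. To see this, let $K_s \subseteq \{2,\dots,t_j\}$ be any set of $s$ indices attaining the top-$s$ values among $c_2,\dots,c_{t_j}$; since $K_s$ consists of $s$ distinct integers all at least $2$, its maximum $k^\ast$ satisfies $k^\ast \ge s+1$, so $\tau(k^\ast) \ge w_i s/w_j$, and then (a)--(b) yield $v_{\lceil w_i s/w_j \rceil} \ge v_{\tau(k^\ast)} \ge c_{k^\ast} \ge c^-_{(s)}$. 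Setting $a_s = \lceil w_i s/w_j \rceil - \lceil w_i(s-1)/w_j \rceil$, monotonicity of $v$ gives $u_i(M_i) \ge \sum_s a_s v_{\lceil w_i s/w_j \rceil} \ge \sum_s a_s c^-_{(s)}$ (using that $\sum_s a_s = \lceil w_i(t_j-1)/w_j\rceil \le t_i$ by the prefix condition on the full sequence). An Abel summation using the partial-sum bound $\sum_{s' \le s} a_{s'} = \lceil w_i s/w_j \rceil \ge w_i s/w_j$ and the non-increasing $c^-_{(s)}$ then rewrites $\sum_s a_s c^-_{(s)} \ge (w_i/w_j) \sum_s c^-_{(s)}$. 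Dividing by $w_i$ yields $u_i(M_i)/w_i \ge (u_i(M_j) - u_i(g_1))/w_j$, which is WEF1.

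The main obstacle is the pigeonhole step in the third paragraph: the $c_k$ are indexed by $j$'s pick order, not sorted by $i$'s valuations, so the per-$k$ bounds $\tau(k) \ge w_i(k-1)/w_j$ cannot be applied directly. Reshuffling them and observing that the top-$s$ indices among $\{2,\dots,t_j\}$ always contain one of value at least $s+1$ is what lets the prefix bound survive the reordering, after which the Abel-summation bookkeeping closes the argument.
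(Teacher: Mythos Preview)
Your argument is correct. The forward (``only if'') direction is exactly the paper's proof: contrapositive with identical $0/1$ utilities on the first $L$ items.

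For the backward (``if'') direction the paper simply cites \citet[Lemma~3.5]{ChakrabortyIgSu20}, so your self-contained proof is a genuine addition. The style of that lemma---visible in the paper's own proofs of Theorems~\ref{thm:WWEF1} and~\ref{thm:WPROP1}---is an induction over the ``phases'' between successive picks of agent~$j$, carrying a running slack term of the form $(1+\sum_\ell \tau_\ell - s\gamma)\max\{\beta_s,\dots\}$. Your route is different: you reorder $c_2,\dots,c_{t_j}$ by value, use a pigeonhole observation (any $s$ indices in $\{2,\dots,t_j\}$ contain one that is at least $s+1$, so the prefix bound at that pick gives $\tau(k^\ast)\ge w_is/w_j$) to obtain the pointwise inequality $v_{\lceil w_is/w_j\rceil}\ge c^-_{(s)}$, and then close with an Abel summation against the non-increasing $c^-_{(s)}$. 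The phase-induction approach generalizes more directly to the WWEF1 and WPROP1 characterizations (where the analogous running-slack structure is reused almost verbatim), whereas your rearrangement-plus-Abel argument is shorter for WEF1 itself and makes the role of the prefix inequality more transparent: it is exactly the partial-sum bound $A_s\ge w_is/w_j$ needed for the Abel step.
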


\begin{proof}
($\Leftarrow$) The backward direction was already established by \citet[Lemma~3.5]{ChakrabortyIgSu20}.

($\Rightarrow$) For the forward direction, we prove the contrapositive.
Assume that the inequality does not hold for some prefix of length $k$ and some agents $i,j \in N$.
Consider an instance where every agent has utility $1$ for each of the first $k$ items and $0$ for the remaining $m-k$ items.
The $k$ valued items are picked in the first $k$ turns, so agent $i$ has utility $t_i$ for bundle $M_i$.
On the other hand, after removing any item from $M_j$, agent $i$ still has value $t_j-1$ for the remaining bundle.
Hence, for any $B\subseteq M_j$ with $|B|\le 1$, we have
$\frac{u_i(M_i)}{u_i(M_j\setminus B)}
\leq \frac{t_i}{t_j-1} < \frac{w_i}{w_j}$,
implying that $\pi$ is not WEF1.
\end{proof}

\begin{theorem}
\label{thm:WWEF1}
A picking sequence $\pi$ satisfies WWEF1 if and only if for every prefix of $\pi$ and every pair of agents $i,j$ with $t_j\ge 2$, both of the following conditions hold:
\begin{itemize}
\item $\frac{t_i}{t_j-1}\ge \frac{w_i}{w_j}$ if $w_i\ge w_j$;
\item $\frac{t_i+1}{t_j}\ge \frac{w_i}{w_j}$ if $w_i\le w_j$,
\end{itemize}
where $t_i$ and $t_j$ denote the number of agent $i$'s and agent $j$'s picks in the prefix, respectively.
\end{theorem}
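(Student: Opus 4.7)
My plan is to follow the same two-direction template as Theorem~\ref{thm:WEF1}: prove ($\Rightarrow$) by contrapositive using a $0/1$ construction, and prove ($\Leftarrow$) by a matching argument between the picks of the two agents, adapted to handle the weaker prefix inequality that now appears.

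For ($\Rightarrow$), suppose that for some prefix of length $k$ and some pair $(i,j)$ with $t_j\ge 2$ one of the two listed inequalities fails. Consider the instance in which every agent values items $1,\dots,k$ at $1$ and items $k+1,\dots,m$ at $0$. Under consistent tie-breaking the first $k$ turns allocate the valued items, so $u_i(M_i)=t_i$ and $u_i(M_j)=t_j$, and WWEF1 for this pair reduces to the disjunction $\frac{t_i}{t_j-1}\ge \frac{w_i}{w_j}$ or $\frac{t_i+1}{t_j}\ge \frac{w_i}{w_j}$. I would show that both options fail, using the elementary fact that $\frac{t_i+1}{t_j}\le \frac{t_i}{t_j-1}$ iff $t_j\le t_i+1$. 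If the first inequality fails with $w_i\ge w_j$, then either $t_i\ge t_j-1$ (and the ratios are ordered so that both lie below $w_i/w_j$) or $t_i\le t_j-2$ (and both ratios are below $1\le w_i/w_j$). If the second inequality fails with $w_i\le w_j$, then $w_i/w_j\le 1$ forces $t_i\le t_j-2$, whence $\frac{t_i}{t_j-1}\le \frac{t_i+1}{t_j}<\frac{w_i}{w_j}$ as well.

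For ($\Leftarrow$), assume both prefix inequalities hold at every prefix. Fix a pair $(i,j)$ and let $p_1,\dots,p_{T_j}$ and $q_1,\dots,q_{T_i}$ be the items picked by $j$ and $i$ in order. At the moment $j$ makes her $k$-th pick, write $\alpha_k$ for the number of items $i$ has already picked; the first $\alpha_k$ picks of $i$ happen before $p_k$, so $u_i(q_\ell)\ge u_i(p_k)$ for every $\ell\le \alpha_k$. If $w_i\ge w_j$, the first prefix inequality $\alpha_k\ge \frac{w_i}{w_j}(k-1)$ is precisely the WEF1 hypothesis of Theorem~\ref{thm:WEF1}, so invoking the matching argument of \citet{ChakrabortyIgSu20} delivers the strong WEF1 inequality for the pair, which in particular implies WWEF1. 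If $w_i\le w_j$, I would take $B=\{p_1\}$ and aim to prove $\frac{u_i(M_i\cup\{p_1\})}{w_i}\ge \frac{u_i(M_j)}{w_j}$; the second prefix inequality now reads $\alpha_k+1\ge \frac{w_i}{w_j}k$, so $p_1$ acts like a phantom pick of $i$ that can absorb one unit of slack in the prefix count, while each subsequent $p_k$ is charged against a weighted share of the real picks $q_1,\dots,q_{\alpha_k}$ via $u_i(q_\ell)\ge u_i(p_k)$.

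The main obstacle is the second case of the backward direction. Because the WWEF1 prefix bound is strictly weaker than the WEF1 bound whenever $w_i<w_j$, one cannot simply import the WEF1 matching: the missing unit of slack has to be absorbed by the extra item $p_1$ that WWEF1 permits on the left-hand side. Designing the weighted charging so that the phantom pick exactly accounts for $p_1$, each real pick $q_\ell$ is used with total weight at most $w_j$, and the ordering constraint $u_i(q_\ell)\ge u_i(p_k)$ is respected, is the technical heart of the proof; the rest is bookkeeping.
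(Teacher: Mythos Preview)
Your forward direction and the case $w_i\ge w_j$ of the backward direction are fine and match the paper. The gap is in the case $w_i<w_j$: choosing $B=\{p_1\}$, the \emph{first} item that agent~$j$ picks, does not work. The phantom-pick heuristic requires the extra item on the left-hand side to dominate (from $i$'s perspective) every $\beta_k$ it is charged against, but agent~$j$ orders her picks by \emph{her} utilities, so $u_i(p_1)$ can be arbitrarily small relative to later $\beta_k$'s. Concretely, take $w_i=1$, $w_j=2$, sequence $(j,j,i,j,j)$; the prefix condition $(\alpha_k+1)/k\ge 1/2$ holds throughout. Let $\beta_1=0$, $\beta_2=100$, $u_i(q_1)=\beta_3=\beta_4=1$ (easily realized by choosing $u_j$ so that $j$ first grabs an item $i$ does not value). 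Then $u_i(M_i\cup\{p_1\})=1$ while $\tfrac{w_i}{w_j}u_i(M_j)=51$, so the inequality you aim for fails with $B=\{p_1\}$.

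The paper's fix is to take $B$ to be an item of $M_j$ with \emph{maximum} value for agent~$i$, so that the extra term on the left is $\max_k\beta_k$ and genuinely dominates every $\beta_k$. With this choice the inductive charging you sketch can be made to go through; the paper carries it out by proving, by induction on $s$, that
\[
\max_k\beta_k+\sum_{\ell=1}^{s}\sum_{x=1}^{\tau_\ell}\alpha_x^\ell\ \ge\ \gamma\sum_{\ell=1}^{s}\beta_\ell+\Bigl(1+\sum_{\ell=1}^{s}\tau_\ell-s\gamma\Bigr)\max\{\beta_s,\dots,\beta_{t_j}\},
\]
where $\tau_\ell$ is the number of $i$'s picks between $j$'s $(\ell-1)$st and $\ell$-th picks. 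Setting $s=t_j$ and using the prefix inequality then yields $u_i(B)+u_i(M_i)\ge\gamma\,u_i(M_j)$. Your overall plan is right; only the identity of $B$ needs to change.
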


\begin{proof}
($\Rightarrow$) We begin with the forward direction.
Assume for contradiction that $\pi$ fulfills WWEF1 but fails (at least) one of the conditions for some prefix of length $k$ and some agents $i,j \in N$ with $t_j \geq 2$.
Consider an instance where every agent has utility $1$ for each of the first $k$ items and $0$ for the remaining $m-k$ items.
The $k$ valued items are picked in the first $k$ turns.

If the first condition fails for the prefix of length $k$, then for any $B\subseteq M_j$ with $|B|\le 1$, we have 
$\frac{u_i(M_i)}{u_i(M_j\setminus B)}
\leq \frac{t_i}{t_j-1} < \frac{w_i}{w_j}$ and
$\frac{u_i(M_i\cup B)}{u_i(M_j)}\leq \frac{t_i+1}{t_j}< \frac{w_i}{w_j}$.
To see the last inequality, note that if $t_i+1\geq t_j$, then $\frac{t_i+1}{t_j} \leq \frac{t_i}{t_j-1} < \frac{w_i}{w_j}$; else, $\frac{t_i+1}{t_j} < 1\leq\frac{w_i}{w_j}$.
So $\pi$ is not WWEF1, a contradiction.

If the second condition fails for the prefix of length $k$, then since $\frac{t_i+1}{t_j} < \frac{w_i}{w_j}\leq 1$, we have $\frac{t_i}{t_j-1} < \frac{t_i+1}{t_j} < \frac{w_i}{w_j}$.
Hence, for any $B\subseteq M_j$ with $|B|\le 1$, the same two chains of inequalities in the previous paragraph hold, and again $\pi$ is not WWEF1, a contradiction.

($\Leftarrow$) We now proceed to the backward direction.
In the case where $w_i\ge w_j$, \citet[Lemma~3.5]{ChakrabortyIgSu20} already showed that the WEF1 condition (and therefore the WWEF1 condition) is fulfilled for agent $i$ towards agent~$j$.
Assume therefore that $w_i < w_j$.
It suffices to show that the WWEF1 condition for agent $i$ towards agent $j$ is fulfilled after every pick by agent $j$.

Let $\gamma = w_i/w_j$, and consider any pick by agent $j$---suppose it is the agent's $t_j$-th pick, where $t_j\ge 2$.
Denote by $\tau_1$ the number of times agent $i$ picks before agent $j$'s first pick, and $\tau_\ell$ the number of times agent $i$ picks between agent $j$'s $(\ell-1)$st and $\ell$-th picks for $2\le \ell\le t_j$.
Let agent $i$'s utility for the items  that she picks herself in phase $\ell$ be $\alpha_1^\ell, \alpha_2^\ell, \dots,\alpha_{\tau_\ell}^\ell$, respectively, where the phases $1,2,\dots,t_j$ are defined as in the previous sentence.
Moreover, let agent $i$'s utility for the items that agent $j$ picks be $\beta_1,\dots,\beta_{t_j}$.
The condition in the theorem statement implies that
\begin{equation*}
1+\sum_{\ell=1}^s \tau_\ell \ge s\gamma
\qquad\qquad \forall s\in\{2,\dots,t_j\}.
\end{equation*}
Note that for $t_j = 1$, we trivially have $\frac{t_i+1}{t_j}\ge 1\ge\frac{w_i}{w_j}$.
Hence, 
\begin{equation}
\label{eq:WWEF-picks}
1+\sum_{\ell=1}^s \tau_\ell \ge s\gamma
\qquad\qquad \forall s\in[t_j].
\end{equation}
Every time agent $i$ picks, she picks an item with the highest utility for her among the available items, which means that
\begin{equation}
\label{eq:WWEF-utils}
\sum_{x=1}^{\tau_\ell}\alpha_x^{\ell} \geq \tau_\ell\max\{\beta_\ell, \beta_{\ell+1},\dots,\beta_{t_j}\}
\qquad \forall \ell\in[t_j].
\end{equation}
Note that (\ref{eq:WWEF-utils}) holds trivially if $\tau_\ell=0$ since both sides are zero.

We claim that for each $1\leq s\leq t_j$, 
\begin{align*}
&\max\{\beta_1,\dots,\beta_{t_j}\} + \sum_{\ell=1}^s\sum_{x=1}^{\tau_{\ell}}\alpha_x^\ell \ge \gamma\sum_{\ell=1}^s\beta_\ell + \left(1+\sum_{\ell=1}^s\tau_\ell - s\gamma\right)\max\{\beta_s,\beta_{s+1},\dots,\beta_{t_j}\}.
\end{align*}
To prove the claim, we use induction on $s$.
For the base case $s=1$, we have from (\ref{eq:WWEF-utils}) that
\begin{align*}
\max\{\beta_1,\dots,\beta_{t_j}\} + \sum_{x=1}^{\tau_1}\alpha_x^1 &\ge (1+\tau_1) \max\{\beta_1,\dots,\beta_{t_j}\} \\
&\ge \gamma\beta_1 + (1+\tau_1-\gamma)\max\{\beta_1,\dots,\beta_{t_j}\}.
\end{align*}
For the inductive step, assume that the claim holds for some $s-1\ge 1$; we will prove it for $s$.
We have
\begin{align*}
\max&\{\beta_1,\dots,\beta_{t_j}\} + \sum_{\ell=1}^s\sum_{x=1}^{\tau_{\ell}}\alpha_x^\ell \\
&= \max\{\beta_1,\dots,\beta_{t_j}\} + \sum_{\ell=1}^{s-1}\sum_{x=1}^{\tau_{\ell}}\alpha_x^\ell + \sum_{x=1}^{\tau_s}\alpha_x^s \\
&\ge \gamma\sum_{\ell=1}^{s-1}\beta_\ell + \left(1+\sum_{\ell=1}^{s-1}\tau_\ell - (s-1)\gamma\right)\max\{\beta_{s-1},\beta_s,\dots,\beta_{t_j}\} + \sum_{x=1}^{\tau_s}\alpha_x^s \\
&\ge \gamma\sum_{\ell=1}^{s-1}\beta_\ell + \left(1+\sum_{\ell=1}^{s-1}\tau_\ell - (s-1)\gamma\right)\max\{\beta_{s-1},\beta_s,\dots,\beta_{t_j}\} \\
&\qquad+ \tau_s\max\{\beta_s,\beta_{s+1},\dots,\beta_{t_j}\} \\
&\ge \gamma\sum_{\ell=1}^{s-1}\beta_\ell + \left(1+\sum_{\ell=1}^{s-1}\tau_\ell - (s-1)\gamma\right)\max\{\beta_s,\beta_{s+1},\dots,\beta_{t_j}\} \\
&\qquad+ \tau_s\max\{\beta_s,\beta_{s+1},\dots,\beta_{t_j}\} \\
&= \gamma\sum_{\ell=1}^{s-1}\beta_\ell + \left(1+\sum_{\ell=1}^s\tau_\ell - (s-1)\gamma\right)\max\{\beta_s,\beta_{s+1},\dots,\beta_{t_j}\} \\
&= \gamma\sum_{\ell=1}^{s-1}\beta_\ell + \gamma\max\{\beta_s,\beta_{s+1},\dots,\beta_{t_j}\}  + \left(1+\sum_{\ell=1}^s\tau_\ell - s\gamma\right)\max\{\beta_s,\beta_{s+1},\dots,\beta_{t_j}\} \\
&\ge \gamma\sum_{\ell=1}^{s-1}\beta_\ell + \gamma\beta_s  + \left(1+\sum_{\ell=1}^s\tau_\ell - s\gamma\right)\max\{\beta_s,\beta_{s+1},\dots,\beta_{t_j}\} \\
&= \gamma\sum_{\ell=1}^s\beta_\ell + \left(1+\sum_{\ell=1}^s\tau_\ell - s\gamma\right)\max\{\beta_s,\beta_{s+1},\dots,\beta_{t_j}\}.
\end{align*}
Here, the first inequality follows from the inductive hypothesis, the second from (\ref{eq:WWEF-utils}), and the third from (\ref{eq:WWEF-picks}) and the fact that $\max\{\beta_{s-1},\beta_s,\dots,\beta_{t_j}\} \ge \max\{\beta_s,\beta_{s+1},\dots,\beta_{t_j}\}$.
This completes the induction and establishes the claim.

Finally, taking $s=t_j$ in the claim, we get
\begin{align*}
\max\{\beta_1,\dots,\beta_{t_j}\} + \sum_{\ell=1}^{t_j}\sum_{x=1}^{\tau_{\ell}}\alpha_x^\ell &\ge \gamma\sum_{\ell=1}^{t_j}\beta_\ell + \left(1+\sum_{\ell=1}^{t_j}\tau_\ell - t_j\gamma\right)\beta_{t_j} \ge \gamma\sum_{\ell=1}^{t_j}\beta_\ell,
\end{align*}
where the second inequality follows from (\ref{eq:WWEF-picks}).
Letting $M_i$ and $M_j$ be the bundles of agents $i$ and $j$ after agent $j$'s $t_j$-th pick, and $B$ be a singleton set consisting of an item in $M_j$ for which agent $i$ has the highest utility, we have $u_i(B) + u_i(M_i) \ge \frac{w_i}{w_j}\cdot u_i(M_j)$.
Hence the WWEF1 condition for agent $i$ towards agent $j$ is fulfilled, as desired.
\end{proof}

As an example of a picking sequence that satisfies WWEF1 but not WEF1, suppose that $n=2$, $w_1=1$, $w_2=2$, and consider the sequence $(1,2,2,2,2)$.
For this sequence, we have $t_1=1$ and $t_2=4$, and therefore $\frac{t_1+1}{t_2} = \frac{w_1}{w_2} > \frac{t_1}{t_2-1}$.

\begin{theorem}
\label{thm:WPROP1}
A picking sequence $\pi$ satisfies WPROP1 if and only if for every prefix of $\pi$ and every agent $i$, we have $t_i\ge \left(\frac{w_i}{\sum_{i'\in N}w_{i'}}\cdot k\right) - 1$, where $t_i$ and $k$ denote the number of agent $i$'s picks in the prefix and the length of the prefix, respectively.
\end{theorem}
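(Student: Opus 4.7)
The plan is to follow the template of Theorems~\ref{thm:WEF1} and~\ref{thm:WWEF1}: for the forward direction ($\Rightarrow$) I would cook up a canonical $0/1$ instance, and for the backward direction ($\Leftarrow$) I would sort items by $i$'s value and run an Abel-summation argument. Let $W := \sum_{i'\in N} w_{i'}$ and $\alpha := w_i/W \leq 1$ throughout.

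For the forward direction I would argue the contrapositive. If some prefix of length $k$ violates the bound for agent $i$, i.e., $t_i < \alpha k - 1$, consider the instance in which every agent values items $1,\dots,k$ at $1$ and the remaining items at $0$. Then $u_i(M) = k$, $u_i(M_i) = t_i$, and the strict inequality $t_i < \alpha k - 1 < k$ guarantees that some valuable item lies in $M \setminus M_i$, so $u_i(B) \leq 1$ for every eligible $B$. The WPROP1 inequality would thus demand $t_i \geq \alpha k - 1$, contradicting the hypothesis.

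For the backward direction, fix agent $i$, relabel items so that $v_j := u_i(j)$ satisfies $v_1 \geq \cdots \geq v_m$ (ties broken by original index), and set $v_{m+1} := 0$. The key observation is that at round $p_s$ only $p_s - 1$ items have been removed, so at least one of the top $p_s$ items is still available and agent $i$'s $s$-th pick has $i$-value at least $v_{p_s}$. Writing $T_i(k)$ for the number of items in $M_i$ with relabeled index at most $k$ and $t_i(k)$ for the number of $i$'s picks in the first $k$ rounds, this gives $T_i(k) \geq t_i(k) \geq \alpha k - 1$ for every $k$ with $v_k > v_{k+1}$---the only such $k$ that will matter in the sum below. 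Abel summation via $v_k = \sum_{\ell \geq k}(v_\ell - v_{\ell+1})$ then yields
\[
u_i(M_i) = \sum_{k=1}^{m} T_i(k)(v_k - v_{k+1}), \qquad u_i(M) = \sum_{k=1}^{m} k(v_k - v_{k+1}).
\]
Letting $q$ be the smallest index not in $M_i$ (or $m+1$ if $M_i = M$), for $k < q$ we have $T_i(k) = k$, making the summand $(\alpha k - T_i(k))(v_k - v_{k+1})$ nonpositive, while for $k \geq q$ it is at most $v_k - v_{k+1}$. Summing and telescoping gives $\alpha u_i(M) - u_i(M_i) \leq v_q$, and taking $B$ to be the item at index $q$ (which lies in $M \setminus M_i$ and realizes $v_q = \max_{j \in M\setminus M_i} v_j$) closes WPROP1; the case $M_i = M$ is trivial since then $v_q = 0$.

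The main obstacle is extracting the right deficit. A naive application of the prefix bound across all $k$ would only yield $u_i(M_i) \geq \alpha u_i(M) - v_1$, which is too weak when the top-valued item already lies in $M_i$. Splitting the Abel sum at $k = q$ exploits the nonpositive early summands accumulated while $i$ still owns the top items; those exactly absorb the loss and pin the deficit down to $v_q$, which is what WPROP1 allows.
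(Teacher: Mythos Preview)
Your argument is correct and takes a genuinely different route from the paper's. The forward direction is the same $0/1$ construction. For the backward direction, the paper decomposes the sequence into phases delimited by agent~$i$'s own picks, names every relevant utility ($\alpha_\ell$ for $i$'s picks, $\beta^\ell_j$ for the others'), and proves by induction on the phase index a running inequality carrying a slack term $1 + s(1-\gamma) - \gamma\sum_{\ell \le s}\tau_\ell$ that comes straight from the prefix hypothesis; the argument is explicit but runs to about a page of bookkeeping. Your Abel-summation route is shorter and more structural: reindexing by $i$'s preference turns the greedy-picking property into the single combinatorial fact $T_i(k) \ge t_i(k)$, and splitting the telescoping sum at the first missing index~$q$ is exactly what pins the deficit down to $v_q$ rather than the cruder $v_1$. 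The paper's phase induction is self-contained and does not rely on a global reindexing; your approach is more conceptual and would transfer readily to other prefix-type conditions.

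One point worth tightening: the jump from ``the $s$-th pick has $i$-value at least $v_{p_s}$'' to ``$T_i(k) \ge t_i(k)$'' actually needs the stronger claim that the $s$-th pick has \emph{relabeled index} at most $p_s$. This holds precisely because you made the relabeling tie-break (by original index) match the paper's picking tie-break, so agent~$i$ always takes the available item of smallest relabeled index. With ties in values the value statement alone would not force the index bound, and $T_i(k) \ge t_i(k)$ could fail; spelling out the index version makes the step airtight.
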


\begin{proof}
($\Rightarrow$) We begin with the forward direction.
Assume for contradiction that $\pi$ fulfills WPROP1 but the inequality does not hold for some prefix of length $k$ and some agent $i \in N$.
Consider an instance where every agent has utility $1$ for each of the first $k$ items and $0$ for the remaining $m-k$ items.
The $k$ valued items are picked in the first $k$ turns, so agent $i$ has utility $t_i$ for bundle $M_i$ and $k$ for the entire set of items.
Hence, for any $B\subseteq M\setminus M_i$ with $|B|\le 1$, we have
\begin{align*}
u_i(M_i) = t_i  
&< \left(\frac{w_i}{\sum_{i'\in N}w_{i'}}\cdot k\right) - 1 \le \left(\frac{w_i}{\sum_{i'\in N}w_{i'}}\cdot u_i(M)\right) - u_i(B),
\end{align*}
implying that $\pi$ is not WPROP1, a contradiction.

($\Leftarrow$) We now proceed to the backward direction.
Let $\gamma = \frac{w_i}{\sum_{i'\in N}w_{i'}}$, and assume that in $\pi$, agent $i$ picks $t_i$ times.
Denote by $\tau_0$ the number of times other agents pick before agent~$i$'s first pick, and $\tau_\ell$ the number of times other agents pick after agent $i$'s $\ell$-th pick and before her $(\ell+1)$st pick for $1\leq\ell\leq t_i$, where we insert a dummy empty pick for agent $i$ at the end of the picking sequence.
Let agent $i$'s utility for the items that other agents pick in phase $\ell$ be $\beta^\ell_1,\beta^\ell_2,\dots,\beta^\ell_{\tau_\ell}$, where the phases $0,1,\dots,t_i$ are defined as in the previous sentence.
Moreover, let agent $i$'s utility for the items that she picks be $\alpha_1,\dots,\alpha_{t_i}$, respectively.
We may assume that at least one other agent has at least one pick; otherwise the WPROP1 condition holds trivially.
Let $B$ be a singleton set consisting of an item in $M\setminus M_i$ for which agent $i$ has the highest utility.
For the prefix before agent $i$'s first pick, the condition in the theorem statement implies that $0\ge \gamma\tau_0-1$, i.e., $\tau_0 \le 1/\gamma$.
By definition of $B$, we have $u_i(B)\ge \beta^\ell_j$ for all $0\leq\ell\leq t_i$ and $1\leq j\leq \tau_\ell$, so $\tau_0u_i(B)\ge \sum_{j=1}^{\tau_0}\beta^0_j$.
It follows that 
\begin{align}
\label{eq:WPROP-init}
u_i(B) &= (\tau_0\gamma + (1-\tau_0\gamma))\cdot u_i(B) \ge \gamma\sum_{j=1}^{\tau_0}\beta^0_j + (1-\tau_0\gamma)\max_{\substack{0\le \ell\le t_i \\ 1\le j\le \tau_\ell}}\beta^\ell_j.
\end{align}

We claim that for each $0\leq s\leq t_i$,
\begin{align*}
&u_i(B) + (1-\gamma)\sum_{\ell=1}^s \alpha_\ell \ge \gamma\sum_{\ell=0}^s\sum_{j=1}^{\tau_\ell}\beta^\ell_j + \left(1+s(1-\gamma)-\gamma\sum_{\ell=0}^s \tau_\ell\right)\max_{\substack{s\le \ell\le t_i \\ 1\le j\le \tau_\ell}}\beta^\ell_j.
\end{align*}
To prove the claim, we use induction on $s$.
The base case $s=0$ follows immediately from (\ref{eq:WPROP-init}) since the terms $(1-\gamma)\sum_{\ell=1}^s \alpha_\ell$ and $s(1-\gamma)$ vanish.
For the inductive step, assume that the claim holds for some $s-1\ge 0$; we will prove it for $s$.
For the prefix before agent $i$'s $s$-th pick, the condition in the theorem statement implies that $s-1 \ge \gamma\left(s-1 + \sum_{\ell=0}^{s-1}\tau_\ell\right)-1$, or equivalently
\begin{equation}
\label{eq:WPROP-picks}
1+(s-1)(1-\gamma)\ge \gamma\sum_{\ell=0}^{s-1} \tau_\ell.
\end{equation}
Every time agent $i$ picks, she picks an item with the highest utility for her among the available items, which means that
\begin{equation}
\label{eq:WPROP-utils}
\alpha_s \geq \beta^s_j
\qquad\qquad \forall j\in[\tau_s].
\end{equation}
We have
\begin{align*}
u_i(B) &+ (1-\gamma)\sum_{\ell=1}^s \alpha_\ell \\
&= u_i(B) + (1-\gamma)\sum_{\ell=1}^{s-1}\alpha_\ell + (1-\gamma)\alpha_s \\
&\ge \gamma\sum_{\ell=0}^{s-1}\sum_{j=1}^{\tau_\ell}\beta^\ell_j + \left(1+(s-1)(1-\gamma)-\gamma\sum_{\ell=0}^{s-1} \tau_\ell\right) \max_{\substack{s-1\le \ell\le t_i \\ 1\le j\le \tau_\ell}} \beta^\ell_j + (1-\gamma)\alpha_s \\
&\ge \gamma\sum_{\ell=0}^{s-1}\sum_{j=1}^{\tau_\ell}\beta^\ell_j + \left(1+(s-1)(1-\gamma)-\gamma\sum_{\ell=0}^{s-1} \tau_\ell\right)\max_{\substack{s\le \ell\le t_i \\ 1\le j\le \tau_\ell}}\beta^\ell_j + (1-\gamma)\alpha_s \\
&= \gamma\sum_{\ell=0}^{s-1}\sum_{j=1}^{\tau_\ell}\beta^\ell_j +  \left(1+s(1-\gamma)-\gamma\sum_{\ell=0}^s \tau_\ell\right)\max_{\substack{s\le \ell\le t_i \\ 1\le j\le \tau_\ell}}\beta^\ell_j \\
&\qquad + \left(\gamma\tau_s - (1-\gamma)\right)\max_{\substack{s\le \ell\le t_i \\ 1\le j\le \tau_\ell}}\beta^\ell_j + (1-\gamma)\alpha_s \\
&\ge \gamma\sum_{\ell=0}^{s-1}\sum_{j=1}^{\tau_\ell}\beta^\ell_j +  \left(1+s(1-\gamma)-\gamma\sum_{\ell=0}^s \tau_\ell\right)\max_{\substack{s\le \ell\le t_i \\ 1\le j\le \tau_\ell}}\beta^\ell_j \\
&\qquad + \frac{\gamma\tau_s - (1-\gamma)}{\tau_s}\cdot\sum_{j=1}^{\tau_s}\beta_j^s + \frac{1-\gamma}{\tau_s}\cdot\sum_{j=1}^{\tau_s}\beta_j^s \\
&= \gamma\sum_{\ell=0}^{s-1}\sum_{j=1}^{\tau_\ell}\beta^\ell_j +  \left(1+s(1-\gamma)-\gamma\sum_{\ell=0}^s \tau_\ell\right)\max_{\substack{s\le \ell\le t_i \\ 1\le j\le \tau_\ell}}\beta^\ell_j  + \gamma\sum_{j=1}^{\tau_s}\beta_j^s \\
&= \gamma\sum_{\ell=0}^{s}\sum_{j=1}^{\tau_\ell}\beta^\ell_j + \left(1+s(1-\gamma)-\gamma\sum_{\ell=0}^s \tau_\ell\right)\max_{\substack{s\le \ell\le t_i \\ 1\le j\le \tau_\ell}}\beta^\ell_j.
\end{align*}
Here, the first inequality follows from the inductive hypothesis, the second from (\ref{eq:WPROP-picks}) and the fact that $\max_{\substack{s-1\le \ell\le t_i \\ 1\le j\le \tau_\ell}}\beta^\ell_j \ge \max_{\substack{s\le \ell\le t_i \\ 1\le j\le \tau_\ell}}\beta^\ell_j$, and the third from (\ref{eq:WPROP-utils}).
This completes the induction and establishes the claim.

Finally, taking $s=t_i$ in the claim, we get
\begin{align*}
u_i(B) + (1-\gamma)\sum_{\ell=1}^{t_i} \alpha_\ell &\ge \gamma\sum_{\ell=0}^{t_i}\sum_{j=1}^{\tau_\ell}\beta^\ell_j + \left(1+t_i(1-\gamma)-\gamma\sum_{\ell=0}^{t_i} \tau_\ell\right)\max_{1\le j\le \tau_{t_i}}\beta^\ell_j \\
&\ge \gamma\sum_{\ell=0}^{t_i}\sum_{j=1}^{\tau_\ell}\beta^\ell_j,
\end{align*}
where the second inequality follows from the condition in the theorem statement for the entire picking sequence.
Adding $\gamma\sum_{\ell=1}^{t_i} \alpha_\ell$ to both sides, we find that
\[
u_i(B) + \sum_{\ell=1}^{t_i} \alpha_\ell \ge \gamma\left(\sum_{\ell=1}^{t_i} \alpha_\ell + \sum_{\ell=0}^{t_i}\sum_{j=1}^{\tau_\ell}\beta^\ell_j\right).
\]
In other words, $u_i(B) + u_i(M_i) \ge \frac{w_i}{\sum_{i'\in N}w_{i'}}\cdot u_i(M)$.
This means that the allocation produced by $\pi$ is WPROP1, as desired.
\end{proof}

Next, we establish a strong relationship between resource- and population-consistency and the corresponding monotonicity notions.

\begin{theorem}
\label{thm:resmon}
Any resource-consistent family of picking sequences satisfies resource-monotonicity.
\end{theorem}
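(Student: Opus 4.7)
The plan is to strengthen the claim so that it can be proved by induction on $m$ without reference to the family $\Pi$. Specifically, I would prove the following: for any picking sequence $\pi = (a_1, \dots, a_m)$, any single-pick extension $\pi' = (a_1, \dots, a_m, a_{m+1})$, any item set $M$ with $|M| = m$, any additional item $y \notin M$, and any additive utilities, each agent's utility from running $\pi'$ on $M \cup \{y\}$ is at least her utility from running $\pi$ on $M$. Resource-monotonicity of $\Pi$ then follows immediately by taking $\pi = \pi_{n,m,\textbf{w}}$ and $\pi' = \pi_{n,m+1,\textbf{w}}$, since resource-consistency guarantees that the latter extends the former by a single appended pick.

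For the induction, the base case $m = 0$ is immediate: the run on $M$ distributes nothing while the run on $M \cup \{y\}$ gives item $y$ to agent $a_1$, whose utility is nonnegative. In the inductive step, I would compare the first picks in both runs. Agent $a_1$ picks her favorite from $M$ in the first run (call it $x_1$) and her favorite from $M \cup \{y\}$ in the second (call it $y_1$). Either $y_1 = x_1$, or else $y_1 = y$ (the only new item), in which case $u_{a_1}(y) \ge u_{a_1}(x_1)$ by her selection rule. In both subcases, the remaining picks form a shorter picking sequence on an $(m{-}1)$-item set compared with its one-pick extension on the same set together with one extra item: in the first subcase, the sub-sequences are $(a_2, \dots, a_m)$ and $(a_2, \dots, a_{m+1})$ applied to $M \setminus \{x_1\}$ and $(M \setminus \{x_1\}) \cup \{y\}$ respectively; in the second subcase, the same sub-sequences are applied to $M \setminus \{x_1\}$ and $M = (M \setminus \{x_1\}) \cup \{x_1\}$. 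The inductive hypothesis then yields an agent-by-agent comparison of the sub-run utilities, which combined with the comparison at turn~1 (identical picks in the first subcase, or $u_{a_1}(y) \ge u_{a_1}(x_1)$ for $a_1$ in the second) gives the desired inequality for every agent.

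The main obstacle is recognizing that the inductive hypothesis must be formulated for arbitrary picking sequences rather than only those belonging to $\Pi$, since stripping the first element of a sequence in $\Pi$ need not yield another sequence in $\Pi$, and the reduced sub-instance need not preserve any consistency structure. Once this strengthened statement is in place, the case analysis at turn~1 and the utility bookkeeping that separates agent $a_1$'s contribution from the sub-run contributions of the remaining agents are essentially routine.
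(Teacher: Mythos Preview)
Your proof is correct. The paper takes a closely related but organizationally different route: instead of inducting on the sequence length $m$ and recursing on the tail, it inducts on the turn index $j$ and maintains the explicit invariant $M_{\text{old},j}\subseteq M_{\text{new},j}$, where $M_{\text{old},j}$ and $M_{\text{new},j}$ are the sets of items still available at turn~$j$ in the $m$-item and $(m{+}1)$-item runs, respectively. From this invariant it is immediate that each pick yields at least as much utility in the augmented run, and additivity finishes the argument. Your two subcases at turn~1 (either $y_1=x_1$ or $y_1=y$) are exactly the two subcases of the paper's inductive step, just phrased from the head of the sequence rather than at an arbitrary turn. The paper's formulation has the minor advantage that no strengthening to arbitrary picking sequences is needed---the superset invariant can be stated and proved directly for the two fixed sequences $\pi_{n,m,\mathbf{w}}$ and $\pi_{n,m+1,\mathbf{w}}$---whereas your recursion forces you to generalize so that the tail sub-instances remain in scope. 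Conversely, your formulation makes the ``one floating extra item'' picture very explicit and would adapt smoothly to other recursive arguments on picking sequences.
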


\begin{proof}
Let $\Pi$ be a resource-consistent family of picking sequences, and consider any $n,m,\textbf{w}$ such that both $\pi_{n,m,\textbf{w}}$ and $\pi_{n,m+1,\textbf{w}}$ belong to $\Pi$; since we will not vary the agents or weights, we drop $n$ and $\textbf{w}$ from the subscripts.
By resource-consistency, $\pi_{m}=(a_1,\dots,a_m)$ forms a prefix of $\pi_{m+1}=(a_1,\dots,a_m,a_{m+1})$.
For each $j\in[m]$, denote by $M_{\text{old},j}$ and $M_{\text{new},j}$ the set of available items at agent $a_j$'s pick in $\pi_{m}$ and $\pi_{m+1}$, respectively.

We prove by induction that $M_{\text{old},j}\subseteq M_{\text{new},j}$ for every $j\in[m]$.
The base case $j=1$ holds since we add an item and do not remove any when going from the instance for $\pi_{m}$ to that for $\pi_{m+1}$.
Assume that the subset relation holds for some $j\in[m-1]$.
If agent $a_j$ picks an item from $M_{\text{new},j}\setminus M_{\text{old},j}$ in $\pi_{m+1}$, then we immediately have $M_{\text{old},j+1}\subseteq M_{\text{old},j}\subseteq M_{\text{new},j+1}$.
Else, agent~$a_j$ picks an item $\ell$ from $M_{\text{old},j}$ in $\pi_{m+1}$.
In this case, since $M_{\text{old},j}\subseteq M_{\text{new},j}$ and ties are broken in a consistent manner, agent $a_j$ must also pick the same item $\ell$ in $\pi_{m}$.
We therefore have $M_{\text{old},j+1} = M_{\text{old},j}\setminus\{\ell\}\subseteq M_{\text{new},j}\setminus\{\ell\} = M_{\text{new},j+1}$, completing the induction.

Now, $M_{\text{old},j}\subseteq M_{\text{new},j}$ implies that agent $a_j$ receives at least as much utility from this pick in $\pi_{m+1}$ as from the corresponding pick in $\pi_{m}$.
Since $\pi_{m}$ forms a prefix of $\pi_{m+1}$ and utilities are additive, it follows that every agent is no worse off in the instance for $\pi_{m+1}$ than for $\pi_{m}$.
This means that $\Pi$ is resource-monotone, as desired.
\end{proof}

\begin{theorem}
\label{thm:popmon}
Any population-consistent family of picking sequences satisfies population-monotonicity.
\end{theorem}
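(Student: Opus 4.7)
The plan is to adapt the proof of Theorem~\ref{thm:resmon}, using population-consistency to align the two sequences. Write $\pi_{\text{old}} := \pi_{n,m,\textbf{w}} = (a_1,\dots,a_m)$ and $\pi_{\text{new}} := \pi_{n+1,m,\textbf{w}'}$. Population-consistency guarantees that deleting all of agent $n+1$'s picks from $\pi_{\text{new}}$ yields a prefix $(a_1,\dots,a_{m^*})$ of $\pi_{\text{old}}$ for some $m^* \le m$. In particular, the $k$-th non-$(n+1)$ pick in $\pi_{\text{new}}$ is performed by the same agent $a_k$ as the $k$-th pick of $\pi_{\text{old}}$, for every $k \in [m^*]$.

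The core step is to prove by induction on $k \in [m^*]$ that $M_{\text{new},k}^{*} \subseteq M_{\text{old},k}$, where $M_{\text{old},k}$ denotes the set of items available at the $k$-th pick of $\pi_{\text{old}}$ and $M_{\text{new},k}^{*}$ denotes the set of items available at the $k$-th non-$(n+1)$ pick of $\pi_{\text{new}}$. Note that the containment goes in the opposite direction from the one in Theorem~\ref{thm:resmon}, since now it is $\pi_{\text{new}}$ that carries extra picks. The base case $k=1$ is immediate, because the only difference so far is that agent $n+1$ may already have removed some items. For the inductive step, let $\ell$ and $\ell'$ be the items picked by agent $a_k$ in $\pi_{\text{new}}$ and $\pi_{\text{old}}$, respectively. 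If $\ell = \ell'$, then $M_{\text{new},k+1}^{*} \subseteq M_{\text{new},k}^{*} \setminus \{\ell\} \subseteq M_{\text{old},k} \setminus \{\ell\} = M_{\text{old},k+1}$, where the first containment absorbs any additional agent-$(n+1)$ picks between the $k$-th and $(k+1)$-th non-$(n+1)$ picks of $\pi_{\text{new}}$. If $\ell \ne \ell'$, then consistent tiebreaking together with $M_{\text{new},k}^{*} \subseteq M_{\text{old},k}$ forces $\ell' \notin M_{\text{new},k}^{*}$ (otherwise $a_k$ would also pick $\ell'$ in $\pi_{\text{new}}$), so $M_{\text{new},k}^{*} \subseteq M_{\text{old},k} \setminus \{\ell'\}$, and the same chain yields $M_{\text{new},k+1}^{*} \subseteq M_{\text{old},k+1}$.

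With the subset claim in hand, the item $a_k$ receives at the $k$-th non-$(n+1)$ pick of $\pi_{\text{new}}$ has utility at most that of her $k$-th pick in $\pi_{\text{old}}$, since each is a maximum-utility element of the respective available set. For any original agent $i \in [n]$, the picks she makes in $\pi_{\text{new}}$ correspond exactly to the first $t_i^{\text{new}} \le t_i^{\text{old}}$ of her picks in $\pi_{\text{old}}$, so summing and using additivity together with $u_i \ge 0$ gives $u_i(M_i^{\text{new}}) \le u_i(M_i^{\text{old}})$, which is population-monotonicity.

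The main obstacle I anticipate is the tiebreaking step in the case $\ell \ne \ell'$: one has to argue carefully that the globally consistent tiebreaking rule really does rule out $\ell' \in M_{\text{new},k}^{*}$, since otherwise the induction breaks down. Everything else is essentially structural bookkeeping that mirrors the proof of Theorem~\ref{thm:resmon} with the inclusion flipped.
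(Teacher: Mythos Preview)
Your proposal is correct and follows essentially the same approach as the paper: both arguments align the non-$(n+1)$ picks of $\pi_{\text{new}}$ with the picks of $\pi_{\text{old}}$, prove by induction that the available set on the new side is contained in the available set on the old side, and use consistent tie-breaking for the key step. Your case split ($\ell=\ell'$ versus $\ell\ne\ell'$) is just the contrapositive of the paper's split (whether the old pick lies in $M_{\text{new}}$ or not), and your worry about the tie-breaking step is unfounded---since $\ell'$ is $a_k$'s favorite in $M_{\text{old},k}$ and $M_{\text{new},k}^{*}\subseteq M_{\text{old},k}$, if $\ell'\in M_{\text{new},k}^{*}$ it would remain her favorite there, exactly as the paper argues.
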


\begin{proof}
Let $\Pi$ be a population-consistent family of picking sequences, and consider any $n,m,\textbf{w}$ such that both $\pi_{n,m,\textbf{w}}$ and $\pi_{n+1,m,\textbf{w}}$ belong to $\Pi$; since we will not vary the items or weights, we drop $m$ and $\textbf{w}$ from the subscripts.
Let $\pi_{n} = (a_1,\dots,a_m)$.
By population-consistency, $\pi_{n+1}$ can be obtained from $\pi_{n}$ by inserting agent~$n+1$ in some positions and trimming the suffix of the resulting sequence.
For each $j\in[m]$, denote by $M_{\text{old},j}$ the set of available items at agent $a_j$'s pick in $\pi_{n}$, and by $M_{\text{new},j}$ the set of available items at agent $a_j$'s corresponding pick in $\pi_{n+1}$ after we have inserted agent $n+1$ in some positions of $\pi_{n}$.
(If the corresponding pick does not exist in $\pi_{n+1}$ because it has been trimmed, we simply let $M_{\text{new},j}=\emptyset$.)

We prove by induction that $M_{\text{old},j}\supseteq M_{\text{new},j}$ for every $j\in[m]$.
The base case $j=1$ holds since $M_{\text{old},j}$ consists of all $m$ items.
Assume that the superset relation holds for some $j\in[m-1]$.
If agent $a_j$ picks an item from $M_{\text{old},j}\setminus M_{\text{new},j}$ in $\pi_{n}$, then we immediately have $M_{\text{old},j+1}\supseteq M_{\text{new},j}\supseteq M_{\text{new},j+1}$.
Else, agent $a_j$ picks an item $\ell$ from $M_{\text{new},j}$ in $\pi_{n}$.
In this case, since $M_{\text{old},j}\supseteq M_{\text{new},j}$ and ties are broken in a consistent manner, agent $a_j$ must also pick the same item $\ell$ in the corresponding pick of $\pi_{n+1}$.
We therefore have $M_{\text{old},j+1} = M_{\text{old},j}\setminus\{\ell\}\supseteq M_{\text{new},j}\setminus\{\ell\}\supseteq M_{\text{new},j+1}$, where the last relation is a superset instead of an equality because agent $n+1$ may get to pick between agent $a_j$'s pick and agent $a_{j+1}$'s pick in $\pi_{n+1}$.
This completes the induction.

Now, $M_{\text{old},j}\supseteq M_{\text{new},j}$ implies that agent $a_j$ receives at least as much utility from this pick in $\pi_{n}$ as from the corresponding pick in $\pi_{n+1}$ (if the latter pick does not exist in $\pi_{n+1}$, the utility is $0$).
Since utilities are additive, it follows that every agent is no better off in the instance for $\pi_{n+1}$ than for $\pi_{n}$.
This means that $\Pi$ is population-monotone, as desired. 
\end{proof}

The relationship between weight-consistency and weight-monotonicity is less straightforward: we show that the former implies the latter in the case of two agents.
As we will see later (Proposition~\ref{prop:weightmon-negative}), this relationship breaks down when there are three agents.

\begin{theorem}
\label{thm:weightmon-2}
For two agents, any weight-consistent family of picking sequences satisfies weight-monotonicity.
\end{theorem}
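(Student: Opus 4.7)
The plan is to translate weight-consistency into a combinatorial condition on the picking sequence and then prove a slightly stronger statement by induction on the number of items. Without loss of generality assume it is agent~1's weight that increases. For two agents, each of the three operations permitted by weight-consistency---moving agent~1's picks earlier, inserting new agent~1 picks, and trimming the suffix---weakly increases $b_k$, the number of agent~1 picks in the first $k$ positions, at every $k \in [m]$. Hence if $\pi = \pi_{2,m,\textbf{w}}$ and $\pi' = \pi_{2,m,\textbf{w}'}$, then $b_k' \ge b_k$ for every $k$. It therefore suffices to prove the following stronger claim: for any two 2-agent picking sequences of length $m$ satisfying this prefix-count inequality, agent~1's utility under $\pi'$ is at least her utility under $\pi$.

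I would prove this by induction on $m$. When $\pi_1 = \pi_1'$, both sequences begin with the same player picking the same item (her favorite in $M$), and after removing that item the remaining length-$(m-1)$ sequences still satisfy the prefix inequality, so the inductive hypothesis closes the case. The only nontrivial situation is $\pi_1 = 2$ and $\pi_1' = 1$ (the reverse is ruled out by $b_1' \ge b_1$). If $\pi$ contains no agent~1 pick at all, the claim is trivial; otherwise, letting $q$ be agent~1's first pick in $\pi$, I would form an intermediate sequence $\pi^*$ by moving this pick up to position~$1$ (shifting the intervening agent~2 picks one step right). A quick check gives $b_k^* \ge b_k$ and $b_k' \ge b_k^*$ for all $k$, so it suffices to show $u_1^\pi \le u_1^{\pi^*} \le u_1^{\pi'}$. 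The inequality $u_1^{\pi^*} \le u_1^{\pi'}$ follows from the shared-first-pick reduction above together with the inductive hypothesis, while $u_1^\pi \le u_1^{\pi^*}$ is obtained by decomposing the upward move of agent~1 from position~$q$ to position~$1$ into $q-1$ adjacent swaps, each of the form $(2,1)\mapsto(1,2)$, and invoking an adjacent-swap lemma.

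The adjacent-swap lemma is the core of the argument. Letting $S$ be the set of items available just before the swap at positions $p, p+1$, define $\alpha, \gamma$ to be agent~2's and agent~1's favorites in $S$, $\beta$ agent~1's favorite in $S\setminus\{\alpha\}$, and $\delta$ agent~2's favorite in $S\setminus\{\gamma\}$. A short case analysis shows that if $\gamma\ne\alpha$ then $\gamma=\beta$ and $\delta=\alpha$, so both runs produce identical states from position $p+2$ onward; if $\gamma=\alpha$ and $\beta=\delta$ the residual item pools coincide while agent~1 gains $u_1(\alpha)-u_1(\beta)\ge 0$ at the swap. The real obstacle is the subcase $\gamma=\alpha$ with $\beta\ne\delta$: here the residual pools differ by a single swap, since the $\pi$-pool contains $\delta$ but not $\beta$, while the $\pi'$-pool contains $\beta$ but not $\delta$.

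I plan to dispatch this obstacle via an auxiliary coupling lemma proved by a secondary induction on the length of the remaining sequence: if two 2-agent picking sequences are simulated on item sets $R\cup\{\delta\}$ and $R\cup\{\beta\}$, where $\beta$ (respectively $\delta$) is agent~1's (respectively agent~2's) favorite in $R\cup\{\beta,\delta\}$, then agent~1 obtains at least as much utility in the run on $R\cup\{\beta\}$. A case split on whether agent~1 or agent~2 makes the next pick shows that after the pick the two item sets either coincide outright or retain the same structural form with possibly new items playing the roles of $\beta$ and $\delta$, so the inductive hypothesis recurses cleanly. Combining this coupling lemma with $u_1(\alpha)\ge u_1(\beta)$ finishes the adjacent-swap lemma, which in turn closes the outer induction and establishes the theorem.
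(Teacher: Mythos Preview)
Your proof is correct and follows essentially the same approach as the paper: both arguments hinge on an adjacent-swap lemma established via a coupling invariant, and your auxiliary coupling lemma is precisely the paper's conditions~(i)/(ii) in Lemma~\ref{lem:weightmon-switch}. The only organizational difference is the outer reduction---the paper pads both sequences with dummy zero-utility items so that $\pi'_{\text{ext}}$ is reachable from $\pi_{\text{ext}}$ purely by adjacent $(2,1)\mapsto(1,2)$ swaps, whereas you extract the prefix-count inequality $b_k'\ge b_k$ and run an induction on $m$ with an intermediate sequence $\pi^{*}$; both routes are equally valid.
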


To establish the theorem, we first need the following lemma concerning a single adjacent swap in a picking sequence.

\begin{lemma}
\label{lem:weightmon-switch}
Suppose that there are two agents and $m$ items.
Let $\pi$ and $\pi'$ be picking sequences such that the only difference between them is that for some $1\leq i\leq m-1$, agent~$2$ picks in the $i$-th turn of $\pi$ and the $(i+1)$st turn of $\pi'$, whereas agent~$1$ picks in the $i$-th turn of $\pi'$ and the $(i+1)$st turn of $\pi$.
Then, for any utility functions of the two agents, agent~$1$ receives at least as much utility from $\pi'$ as from $\pi$.
\end{lemma}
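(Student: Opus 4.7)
The plan is to induct on the number of items $m$. When $i \ge 2$, both sequences agree on the pick at position~$1$ (made by the same agent choosing her favorite from the same set), so that pick can be peeled off to reduce the problem to a smaller instance where the swap is now at positions $i-1, i$; the inductive hypothesis applies. The essential case is therefore $i = 1$, in which $\pi = (2,1,a_3,\ldots,a_m)$ and $\pi' = (1,2,a_3,\ldots,a_m)$.

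For $i = 1$, I would split on how the two agents' favorites in the initial item set $A$ relate. Let $x$ and $x'$ denote agent~2's and agent~1's respective favorites in $A$. If $x \neq x'$, the first two turns produce the same two-item allocation $\{x,x'\}$ (with agent~$1$ receiving $x'$) and the same remaining set in both $\pi$ and $\pi'$, so the identical suffix yields identical utilities for agent~1. If $x = x' =: \hat x$, let $y$ and $y'$ denote agent~1's and agent~2's respective favorites in $A \setminus \{\hat x\}$: when $y = y'$, the remaining sets again coincide and agent~$1$ obtains $\hat x$ rather than $y$ in $\pi'$, a nonnegative improvement of $u_1(\hat x) - u_1(y)$ preserved by the identical suffix. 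The nontrivial case is therefore $y \neq y'$, where agent~$1$ begins the suffix with a lead of $u_1(\hat x) - u_1(y) \ge 0$ for $\pi'$ but the remaining sets differ by the swap $y \leftrightarrow y'$.

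For this last case, I plan to trace the common suffix while maintaining the invariant that at every step the state is either (A) both remaining sets coincide with agent~$1$'s lead in $\pi'$ nonnegative, or (B) the remaining sets differ by exactly one swapped pair $(\alpha,\alpha')$, with $\alpha$ in $\pi'$'s remaining and $\alpha'$ in $\pi$'s, satisfying $u_1(\alpha) \ge u_1(\alpha')$ and $u_2(\alpha') \ge u_2(\alpha)$ (both also dominating the corresponding utilities over the common part), with agent~$1$'s lead exactly $u_1(\hat x) - u_1(\alpha) \ge 0$. A case analysis of the next turn, on whose pick it is and how her greedy favorite in the common part compares to $\alpha$ and $\alpha'$, shows that each transition either keeps the swap pair, updates it (to $(k^1,\alpha')$ on agent~$1$'s turn, where $k^1$ is her common-part favorite, or analogously to $(\alpha,k^2)$ on agent~$2$'s turn), or collapses to case~(A), with the lead formula updating consistently. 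The main obstacle will be showing that the $u_2$-dominance on $(\alpha,\alpha')$ rules out precisely the one dangerous subcase of agent~$2$'s turn in which she would pick the common favorite in $\pi$ but $\alpha$ in $\pi'$, a scenario that would otherwise destroy the $u_1$-ordering of the new swap pair; this check depends on all three components of the invariant working together. At termination the remaining sets are empty, so we are in case~(A) with nonnegative lead, which gives the desired conclusion $u_1(A_1^{\pi'}) \ge u_1(A_1^\pi)$.
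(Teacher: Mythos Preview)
Your proposal is correct and follows essentially the same route as the paper's proof. The paper also reduces to the suffix after position $i-1$ (by simply noting those turns are identical, rather than your induction on $m$), performs the same case split on the picks at positions $i$ and $i+1$, and in the nontrivial subcase maintains precisely your invariant: the two remaining sets either coincide or differ by a single swapped pair, where the item exclusive to $\pi'$'s remaining set is agent~$1$'s favorite in the union and the item exclusive to $\pi$'s remaining set is agent~$2$'s favorite in the union. The only cosmetic difference is that the paper concludes by observing agent~$1$ does at least as well \emph{turn by turn}, whereas you track the cumulative lead with the explicit formula $u_1(\hat x) - u_1(\alpha)$; both bookkeeping choices work.
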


\begin{proof}
Since $\pi$ and $\pi'$ are identical up to the $(i-1)$st turn and ties are broken in a consistent manner, picking proceeds identically for the two sequences, so we may ignore these picks.
Assume that in $\pi$, agent~$2$ picks item $\ell_2$ in the $i$-th turn and agent~$1$ picks item $\ell_1$ in the $(i+1)$st turn.
Observe that in the $i$-th turn of $\pi'$, agent~$1$ must pick either $\ell_1$ or $\ell_2$.
If agent~$1$ picks $\ell_1$, agent~$2$ will pick $\ell_2$ in the $(i+1)$st turn, and the remaining picks will be identical for both sequences.
Hence, in this case, agent~$1$'s utilities from $\pi$ and $\pi'$ are identical.
Suppose therefore that agent~$1$ picks $\ell_2$ in the $i$-th turn of $\pi'$, which means that $u_1(\ell_2)\ge u_1(\ell_1)$.
If agent~$2$ picks $\ell_1$ in the $(i+1)$st turn of $\pi'$, then the remaining picks will again be identical for both sequences, and agent~$1$'s utility from $\pi'$ is at least as high as from $\pi$.

Assume now that in $\pi'$, agent~$1$ picks $\ell_2$ in the $i$-th turn and agent~$2$ picks $\ell_3\ne\ell_1$ in the $(i+1)$st turn.
We claim that before each subsequent turn $j\ge i+2$ of the two sequences, if we let $M_{\pi,j}$ and $M_{\pi',j}$ be the respective set of remaining items, then exactly one of the following two conditions holds:
\begin{enumerate}[label=(\roman*)]
\item $M_{\pi,j} = M_{\pi',j}$;
\item $|M_{\pi,j}\setminus M_{\pi',j}| = |M_{\pi',j}\setminus M_{\pi,j}| = 1$, the item in $M_{\pi,j}\setminus M_{\pi',j}$ is agent~$2$'s highest-valued item in $M_{\pi,j}\cup M_{\pi',j}$ (taking tie-breaking into account), and the item in $M_{\pi',j}\setminus M_{\pi,j}$ is agent~$1$'s highest-valued item in $M_{\pi,j}\cup M_{\pi',j}$ (taking tie-breaking into account).
\end{enumerate}
We prove this claim by induction on $j$. For $j=i+2$, we have $M_{\pi,j}\setminus M_{\pi',j} = \{\ell_3\}$, and this item is agent~$2$'s highest-valued item in $M_{\pi,j}\cup M_{\pi',j}$ due to the agent's choice in the $(i+1)$st turn of $\pi'$; an analogous argument holds for $M_{\pi',j}\setminus M_{\pi,j}$.
Suppose that the claim holds for some $j\ge i+2$; we will prove it for $j+1$.
If condition~(i) holds for $j$, then it obviously also holds for $j+1$.
Assume that (ii) holds for $j$ and, without loss of generality, that agent~$1$ picks in the $j$-th turn.
Let $M_{\pi,j}\setminus M_{\pi',j} = \{\ell\}$ and $M_{\pi',j}\setminus M_{\pi,j} = \{\ell'\}$.
By condition~(ii), agent~$1$ picks item~$\ell'$ in the $j$-th turn of $\pi'$.
If agent~$1$ picks $\ell$ in the $j$-th turn of $\pi$, then $M_{\pi,j+1} = M_{\pi',j+1}$ and condition~(i) holds for $j+1$.
Hence, suppose that agent~$1$ picks $\ell''\not\in\{\ell,\ell'\}$ in the $j$-th turn of $\pi$.
In this case, we have $M_{\pi',j+1}\setminus M_{\pi,j+1} = \{\ell''\}$ and $M_{\pi,j+1}\setminus M_{\pi',j+1} = \{\ell\}$.
By definition of $\ell''$, we know that $\ell''$ is agent~$1$'s highest-valued item in $M_{\pi,j+1}\cup M_{\pi',j+1} = (M_{\pi,j}\cup M_{\pi',j})\setminus\{\ell'\} = M_{\pi,j}$.
Moreover, since $\ell\in (M_{\pi,j+1}\cup M_{\pi',j+1})\subseteq (M_{\pi,j}\cup M_{\pi',j})$, we have that $\ell$ must still be agent~$2$'s highest-valued item in $M_{\pi,j+1}\cup M_{\pi',j+1}$.
It follows that condition~(ii) holds for $j+1$, completing the induction.

To finish the proof of the lemma, observe that for each turn $j\ge i+2$ in which agent~$1$ picks, if condition~(i) holds, the agent receives equal utility in both $\pi$ and $\pi'$, whereas if condition~(ii) holds, the agent receives at least as much utility from $\pi'$ as from $\pi$.
Furthermore, as we mentioned earlier, agent~$1$'s utility from the first $i+1$ turns of $\pi'$ is no less than that of $\pi$.
The desired result follows.
\end{proof}

We now proceed to prove Theorem~\ref{thm:weightmon-2}.

\begin{proof}[Proof of Theorem~\ref{thm:weightmon-2}]
Let $\Pi$ be a weight-consistent family of picking sequences, and consider any $m,\textbf{w}$ such that both $\pi_{2,m,\textbf{w}}$ and $\pi_{2,m,\textbf{w}'}$ belong to $\Pi$, where $\textbf{w}'=(w_1',w_2)$ with $w_1' > w_1$; the assumption that $\textbf{w}'$ and $\textbf{w}$ differ in the first coordinate is without loss of generality.
Since we will not vary the agents or items, we drop $n=2$ and $m$ from the subscripts.
Let $\pi_{\textbf{w}} = (a_1,\dots,a_m)$.
By weight-consistency, $\pi_{\textbf{w}'}$ can be obtained from $\pi_{\textbf{w}}$ by moving some of agent~$1$'s picks earlier, inserting agent~$1$ in some positions, and trimming the suffix of the resulting sequence.
Suppose that the sequence for $\pi_{\textbf{w}'}$ before trimming the suffix has length $m'\ge m$.
We insert agent~$1$'s picks at the end of $\pi_{\textbf{w}}$ so that the sequence also has length $m'$, and add $m'-m$ dummy items for which both agents have utility $0$ such that they are all chosen after the $m$ real items with respect to tie-breaking.
Clearly, the extended picking sequences $\pi_{\textbf{w},\text{ext}}$ and $\pi_{\textbf{w}',\text{ext}}$ with dummy items yield the same utility to both agents as the respective original picking sequences $\pi_{\textbf{w}'}$ and $\pi_{\textbf{w}}$, so we may work with them instead.

We claim that $\pi_{\textbf{w}',\text{ext}}$ can be obtained from $\pi_{\textbf{w},\text{ext}}$ by a series of switches of the following form: for some $1\leq i\leq m'-1$ such that agent~$2$ picks in the $i$-th turn and agent~$1$ in the $(i+1)$st turn of the picking sequence, we switch the position of the two picks.
Indeed, this is immediate when we move agent~$1$'s pick earlier in $\pi_{\textbf{w}}$, whereas inserting agent~$1$'s pick in $\pi_{\textbf{w}}$ is equivalent to moving one of the agent's added picks at the end of $\pi_{\textbf{w},\text{ext}}$ earlier to the appropriate position.
By Lemma~\ref{lem:weightmon-switch}, agent~$1$ receives at least as much utility from $\pi_{\textbf{w}',\text{ext}}$ as from $\pi_{\textbf{w},\text{ext}}$, completing the proof.
\end{proof}

\section{Divisor Methods}
\label{sec:divisor}

As we explained in the introduction, a divisor apportionment method gives rise to a picking sequence that, in each turn, lets an agent $i$ with the smallest $f(t_i)/w_i$ pick the next item (breaking ties in a consistent manner, say, in favor of lower-numbered agents), where $t_i$ denotes the number of times that agent~$i$ has picked so far and $f:\mathbb{Z}_{\ge 0}\rightarrow\mathbb{R}_{\ge 0}$ is a strictly increasing function specific to the method such that $t\leq f(t)\leq t+1$. 
We will refer to the divisor methods and their associated families of picking sequences interchangeably.
By definition, it is clear that every divisor method yields a family of picking sequences (for all $n,m,\textbf{w}$) that are resource-, population-, and weight-consistent.
Theorems~\ref{thm:resmon}, \ref{thm:popmon}, and \ref{thm:weightmon-2} therefore imply the following:

\begin{corollary}
\label{cor:divisor-allmon}
Every divisor method satisfies resource-monotonicity and population-monotonicity; it also satisfies weight-monotonicity when there are two agents.
\end{corollary}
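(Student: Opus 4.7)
My plan is to verify that every divisor method yields a family of picking sequences that is resource-, population-, and weight-consistent, and then to invoke Theorems~\ref{thm:resmon}, \ref{thm:popmon}, and \ref{thm:weightmon-2}. The cleanest way to do this is via an equivalent description of divisor methods that I will establish first. For each agent $j\in N$ and each $t\in\mathbb{Z}_{\ge 0}$, introduce the \emph{quota} $q_{j,t} := f(t)/w_j$. Since every turn chooses an agent minimizing $f(t_j)/w_j$ and $f$ is strictly increasing, a short induction on $k$ shows that the $k$-th pick of $\pi_{n,m,\mathbf{w}}$ is the agent whose quota is the $k$-th smallest element of the multiset $\{q_{j,t} : j\in N,\ t\in\mathbb{Z}_{\ge 0}\}$, with the consistent tie-break resolving equal quotas.

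With this sorted-quota viewpoint, each consistency property becomes almost immediate. Resource-consistency follows because the first $m$ entries of the sorted quota sequence do not depend on $m$, so $\pi_{n,m,\mathbf{w}}$ is the length-$m$ prefix of $\pi_{n,m+1,\mathbf{w}}$. For population-consistency, introducing agent $n+1$ merely adds the new quotas $\{q_{n+1,t}\}_{t\ge 0}$ to the multiset; these interleave into the sorted order but do not disturb the relative order among the quotas of agents $1,\dots,n$, so $\pi_{n+1,m,\mathbf{w}'}$ is obtained from $\pi_{n,m,\mathbf{w}}$ by inserting picks of agent $n+1$ at some positions and trimming to length $m$. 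For weight-consistency, raising $w_i$ to $w_i'$ strictly shrinks every quota $q_{i,t}$ while leaving all other quotas fixed, so agent $i$'s quotas can only shift earlier in the sorted order while the relative order among the non-$i$ quotas is preserved; this exactly matches the allowed transformation of moving some agent-$i$ picks earlier, inserting new agent-$i$ picks, and then trimming.

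Once the three consistency properties are in place, I simply apply the corresponding theorems: Theorems~\ref{thm:resmon} and \ref{thm:popmon} deliver resource- and population-monotonicity for any number of agents, while Theorem~\ref{thm:weightmon-2} delivers weight-monotonicity in the $n=2$ case, completing the proof.

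The one place that requires some care---but which I do not expect to be a genuine obstacle---is the handling of ties $q_{j,t}=q_{j',t'}$ under the tie-break in favor of lower-numbered agents. Specifically, I will need to verify that raising $w_i$ cannot accidentally push an agent-$i$ pick \emph{later} across a newly formed tie, and that when agent $n+1$ enters the instance, ties involving the new agent do not perturb the preexisting relative order among agents $1,\dots,n$. A brief case analysis on the sign of the pertinent quota gap before and after the modification settles these checks cleanly, since a previously strict inequality $q_{i,t}<q_{j,s}$ remains strict after shrinking $q_{i,t}$, while a previously strict inequality in the other direction can only collapse into a tie that is either already resolved in $i$'s favor or leaves the ordering unchanged.
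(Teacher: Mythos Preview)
Your proof is correct and follows the same approach as the paper: establish resource-, population-, and weight-consistency for divisor methods, then invoke Theorems~\ref{thm:resmon}, \ref{thm:popmon}, and \ref{thm:weightmon-2}. The paper simply asserts the three consistency properties ``by definition'' without further justification, so your sorted-quota viewpoint and tie-break case analysis are a welcome elaboration of what the paper takes for granted.
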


The five traditional divisor methods of Adams, Jefferson, Webster, Hill, and Dean have the function $f(t)$ equal to $t$, $t+1$, $t+\frac{1}{2}$, $\sqrt{t(t+1)}$, and $\frac{t(t+1)}{t+\frac{1}{2}}$, respectively \citep[p.~99]{BalinskiYo01}.
We prove that, perhaps surprisingly, all five methods fail weight-monotonicity in the case of three agents.\footnote{\citet[p.~157]{BramsKa04} showed that for $n=3$, an agent can do worse when her picks move earlier in the picking sequence. However, their example does not correspond to a weight increase with respect to a divisor method and moreover assumes that agents are strategic rather than truthful.}
This also means that weight-consistency does not imply weight-monotonicity beyond two agents.

\begin{proposition}
\label{prop:weightmon-negative}
None of the five traditional divisor methods satisfies weight-monotonicity even when there are three agents.
\end{proposition}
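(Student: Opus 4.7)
The plan is to construct, for each of the five traditional divisor methods (Adams, Jefferson, Webster, Hill, and Dean), an explicit instance with three agents witnessing that a strict increase in an agent's weight can strictly decrease her utility. Concretely, for each method I would specify weight vectors $\mathbf{w}=(w_1,w_2,w_3)$ and $\mathbf{w}'=(w_1',w_2,w_3)$ with $w_1'>w_1$, together with additive utility functions $u_1,u_2,u_3$, such that agent~$1$'s utility in the allocation produced by that method strictly drops when her weight is raised from $w_1$ to $w_1'$.

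The conceptual mechanism driving the counterexamples is as follows. By weight-consistency, agent~$1$'s picks in $\pi_{\mathbf{w}'}$ occur at positions that are weakly earlier than the corresponding ones in $\pi_{\mathbf{w}}$, possibly with additional picks inserted. However, even though the relative order of the picks of agents~$2$ and~$3$ is preserved (since the ratios $f(t_j)/w_j$ for $j\in\{2,3\}$ do not depend on $w_1$), the items selected by these two agents at each of their turns can differ between $\pi_{\mathbf{w}}$ and $\pi_{\mathbf{w}'}$, because their available sets differ once agent~$1$ has picked different (or more) items. This cascade can cause an item~$x$ with high value to agent~$1$ to be snatched by agent~$2$ or~$3$ in the new sequence even though it was left for agent~$1$ in the old one; tuning utilities so that this loss outweighs any gain from agent~$1$'s new or earlier picks then yields the desired violation.

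Given this high-level idea, for each of the five methods the procedure is:
\begin{enumerate}[label=(\roman*)]
\item Pick weights $\mathbf{w}$ and $\mathbf{w}'$ tailored to the method's function $f$ so that the induced picking sequences exhibit the cascade described above at some turn.
\item Compute $\pi_{\mathbf{w}}$ and $\pi_{\mathbf{w}'}$ explicitly, using the divisor rule and the stated tie-breaking convention (favouring the lower-indexed agent and the lower-indexed item).
\item Identify an item $x$ that changes hands between the two allocations, then choose $u_1,u_2,u_3$ so that each agent's top-picking behaviour in both sequences is consistent and agent~$1$'s total utility under $\pi_{\mathbf{w}'}$ is strictly smaller than under $\pi_{\mathbf{w}}$.
\end{enumerate}

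The main obstacle is step~(iii): agent~$1$'s picks in $\pi_{\mathbf{w}'}$ are at least as many and at least as early as in $\pi_{\mathbf{w}}$, so the loss must come entirely from the subtle change in which items agents~$2$ and~$3$ take as a consequence of the shift in agent~$1$'s picks. One must carefully engineer $u_1$ so that agent~$1$ places high value on the specific item snatched by the other agents and low value on the items she gains from her new or earlier picks, all while keeping $u_2$ and $u_3$ consistent with the prescribed picking sequences (so that each agent's stated top-picking is indeed the top of her current available set). Because the functions $f$ differ across Adams, Jefferson, Webster, Hill, and Dean, the concrete choices of $(\mathbf{w},\mathbf{w}')$ and utilities will generally need to be tailored to each method, producing five separate (but structurally similar) constructions.
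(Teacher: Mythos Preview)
Your plan is sound and would produce a valid proof, but the paper takes a more unified route that avoids five separate constructions. The paper fixes a \emph{single} utility table on five items and three agents and shows that agent~$1$'s utility drops from $25$ to $19$ when the picking sequence changes from $\pi_1=(1,2,1,3,1)$ to $\pi_2=(1,1,2,3,1)$. It then proves an abstract claim: for \emph{any} divisor method with $f(0)>0$ and $f(1)^2>f(0)\,f(2)$, one can choose weights $w_1'>w_1>w_2>w_3$ (given by explicit inequalities in terms of $f(0),f(1),f(2)$) so that $(w_1,w_2,w_3)$ induces $\pi_1$ and $(w_1',w_2,w_3)$ induces $\pi_2$. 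This immediately handles Jefferson and Webster (and indeed any stationary method $f(t)=t+c$ with $c>0$). For methods with $f(0)=0$ (Adams, Hill, Dean) the paper simply prepends three high-value items that are taken in a fixed order $(1,2,3)$ and applies the same argument with all pick-counts shifted by one, verifying the analogous condition $f(2)^2>f(1)\,f(3)$.

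What the paper's approach buys is economy and generality: one utility profile, a parametrized family of weights expressed directly in terms of $f$, and coverage of broad classes of divisor methods rather than just the five named ones. Your approach would work too, but the insight you may be missing is that the utilities need not be tailored to the method at all---only the weights do, and those can be chosen uniformly via simple inequalities in $f(0),f(1),f(2)$ (or $f(1),f(2),f(3)$ in the $f(0)=0$ case). If you carry out your plan, you will likely discover this structural similarity anyway; the paper just exploits it from the outset.
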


\begin{proof}
First, consider five items and three agents with the following utilities:
\vspace{3mm}
\begin{center}
\begin{tabular}{ c|ccccc } 
  & Item 1 & Item 2 & Item 3 & Item 4 & Item 5 \\
  \hline
 Agent 1 & $10$ & $9$ & $8$ & $7$ & $0$ \\
 Agent 2 & $7$ & $10$ & $8$ & $9$ & $0$ \\
 Agent 3 & $0$ & $7$ & $10$ & $8$ & $9$ \\
\end{tabular}
\end{center}
\vspace{3mm}
If the picking sequence is $\pi_1 = (1,2,1,3,1)$, then agent~$1$ gets items~$1$, $3$, and $4$, and therefore receives a utility of $25$.
On the other hand, if the picking sequence is $\pi_2 = (1,1,2,3,1)$, the agent gets items~$1$, $2$, and $5$, and receives a utility of $19$. 

We claim that for any divisor method whose function $f$ satisfies $f(0) > 0$ and $f(1)^2 > f(0)\cdot f(2)$, there exist $w_1' > w_1 > w_2 > w_3 > 0$ such that the weights $(w_1,w_2,w_3)$ yield the sequence $\pi_1$, while increasing agent~$1$'s weight to $w_1'$ results in the sequence $\pi_2$; the above utility functions then show that the method violates weight-monotonicity.
To prove the claim, let $w_3 = 1$, and
\begin{equation}
\label{eq:weightmon-w2}
1 < w_2 < \frac{f(2)}{f(1)}
\end{equation}
\begin{equation}
\label{eq:weightmon-w1'}
\frac{f(1)}{f(0)}\cdot w_2 < w_1' < \frac{f(2)}{f(0)}
\end{equation}
\begin{equation}
\label{eq:weightmon-w1}
\max\left\{\frac{f(2)}{f(1)}\cdot w_2,\frac{f(1)}{f(0)}\right\} < w_1 < \frac{f(1)}{f(0)}\cdot w_2
\end{equation}
Note that (\ref{eq:weightmon-w1'}) is feasible because $\frac{f(1)}{f(0)}\cdot w_2 < \frac{f(2)}{f(0)}$ by (\ref{eq:weightmon-w2}), and (\ref{eq:weightmon-w1}) is feasible because $f(1)^2 > f(0)\cdot f(2)$.
With the weights $(w_1,w_2,w_3)$, the first pick goes to agent~$1$ because $w_1 > w_2 > w_3$, the second pick goes to agent~$2$ because $\frac{f(0)}{w_2} < \frac{f(1)}{w_1}$ by (\ref{eq:weightmon-w1}), the third pick goes to agent~$1$ because $\frac{f(0)}{w_3} = f(0) > \frac{f(1)}{w_1}$ by (\ref{eq:weightmon-w1}), the fourth pick goes to agent~$3$ because $\frac{f(0)}{w_3} = f(0) < \frac{f(2)}{w_1'} < \frac{f(2)}{w_1} < \frac{f(1)}{w_2}$ by (\ref{eq:weightmon-w1'}) and (\ref{eq:weightmon-w1}), and the fifth pick goes to agent~$1$ because $\frac{f(2)}{w_1} < \frac{f(1)}{w_2}$ by (\ref{eq:weightmon-w1}).
Similarly, with the weights $(w_1',w_2,w_3)$, the first pick goes to agent~$1$ because $w_1' > w_2 > w_3$, the second pick goes to agent~$1$ because $\frac{f(1)}{w_1'} < \frac{f(0)}{w_2}$ by (\ref{eq:weightmon-w1'}), the third pick goes to agent~$2$ because $\frac{f(2)}{w_1'} > f(0) > \frac{f(0)}{w_2}$ by (\ref{eq:weightmon-w1'}), the fourth pick goes to agent~$3$ because $\frac{f(0)}{w_3} = f(0) < \frac{f(2)}{w_1'} < \frac{f(2)}{w_1} < \frac{f(1)}{w_2}$ by (\ref{eq:weightmon-w1'}) and (\ref{eq:weightmon-w1}), and the fifth pick goes to agent~$1$ because $\frac{f(2)}{w_1'} < \frac{f(2)}{w_1} < \frac{f(1)}{w_2}$ by (\ref{eq:weightmon-w1}).
This establishes the claim.
Since the conditions $f(0) > 0$ and $f(1)^2 > f(0)\cdot f(2)$ are met for Jefferson's and Webster's methods, as well as any method with $f(t)=t+c$ for some constant $c > 0$, these methods do not satisfy weight-monotonicity for three agents.

Next, we show that any divisor method whose function $f$ satisfies $f(0) = 0$ and $f(2)^2 > f(3)\cdot f(1)$ also fails weight-monotonicity.
To this end, we add three extra items to the above instance---the extra items are of value $100$, $99$, and $98$ to all agents, respectively. 
Since $f(0) = 0$, with lexicographic tie-breaking, the first three picks are $(1,2,3)$, and these agents pick the extra items in decreasing order of utility.
The rest of the picking sequence proceeds as in the previous instance, only with the number of items picked shifted upwards by one.
Hence, we may select $w_1',w_1,w_2,w_3$ so that the remaining picks are $(1, 2, 1, 3, 1)$ for the weights $(w_1,w_2,w_3)$ and $(1, 1, 2, 3, 1)$ for the weights $(w_1',w_2,w_3)$, again violating weight-monotonicity.
Finally, one can verify that the conditions $f(0)=0$ and $f(2)^2 > f(3)\cdot f(1)$ are indeed met for Adams', Hill's, and Dean's methods.
\end{proof}

We now explore how divisor methods fare with respect to the three fairness notions.
We start by showing that Adams' method is the unique option if WEF1 is desired.

\begin{theorem}
\label{thm:divisor-WEF1}
Among all divisor methods, Adams' method is the only one satisfying WEF1.
\end{theorem}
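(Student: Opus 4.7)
The plan is as follows. The backward direction---that Adams' method satisfies WEF1---is due to \citet{ChakrabortyIgSu20} (indicated with a dagger in Table~\ref{table:summary}), so I focus on the forward direction: for any divisor method whose function $f$ is not the identity, I exhibit a $2$-agent instance whose picking sequence violates the WEF1 characterization of Theorem~\ref{thm:WEF1}, i.e., has some prefix where $t_i/(t_j-1) < w_i/w_j$. Because the constraint $t \le f(t) \le t+1$ combined with strict monotonicity forces the identity to be the unique $f$ with $f(t)=t$ at every $t$, the non-Adams case splits cleanly into three mutually exclusive subcases: (1)~$f(0) > 0$; (2)~$f(0)=0$ and $f(1)>1$; (3)~$f(0)=0$, $f(1)=1$, and $f(t_0)>t_0$ for the smallest such $t_0 \ge 2$.

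Cases~1 and~2 are dispatched by engineering agent~$2$ to take several consecutive picks at the start. For Case~1 I choose $w_2/w_1 > f(1)/f(0)$: agent~$2$ wins position~$1$ because $f(0)/w_2 < f(0)/w_1$, and wins position~$2$ because $f(1)/w_2 < f(0)/w_1$, so the prefix has $t_1=0$, $t_2=2$ and hence $0 = t_1/(t_2-1) < w_1/w_2$. For Case~2 I fix an integer $a > 1/(f(1)-1)$, so that $f(a) \le a+1 < a f(1)$ and the interval $(f(a)/f(1),\, a)$ is nonempty, and set $w_1 = 1$, $w_2 \in (f(a)/f(1),\, a)$. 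After agent~$1$'s tie-broken first pick, agent~$2$ picks $a+1$ consecutive times because her pre-pick ratio $f(k)/w_2$ stays below $f(1)/w_1 = f(1)$ for all $k = 0, 1, \dots, a$ (using $w_2 > f(a)/f(1) \ge f(k)/f(1)$). The resulting prefix of length $a+2$ has $t_1 = 1$, $t_2 = a+1$, so $1/a = t_1/(t_2-1) < 1/w_2 = w_1/w_2$.

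Case~3 is the main subtlety: since $f$ agrees with the identity at every $t < t_0$, the ratio comparisons behave exactly as they would under Adams' method for the first several picks, and one needs a deviation only at argument $t_0$ to break things. The plan is to set $w_1 = r$, $w_2 = 1$ with $r \in (t_0, f(t_0))$---an interval that is nonempty precisely because $f(t_0) > t_0$. By minimality of $t_0$ together with $r > t_0$, the sequence unfolds as agent~$1$ at position~$1$ (tie), agent~$2$ at position~$2$, agent~$1$ at positions $3, 4, \dots, t_0+1$ (because $f(k)/r = k/r < 1 = f(1)/w_2$ for each $k < t_0$), and finally agent~$2$ at position $t_0+2$ (because $f(t_0)/r > 1$ by $r < f(t_0)$). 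This yields a prefix of length $t_0+2$ with $t_1 = t_0$, $t_2 = 2$, and $t_0 = t_1/(t_2-1) < r = w_1/w_2$. The hardest part of this case is arranging the weight window $(t_0, f(t_0))$ so that it simultaneously keeps agent~$1$ in control through her $t_0$-th pick yet flips decisively against her at the one critical comparison involving $f(t_0)$---exactly where $f$ deviates from the identity.
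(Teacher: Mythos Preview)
Your proof is correct, and it takes a genuinely different route from the paper's.  The paper argues in two structural steps: first it shows that any WEF1 divisor method must satisfy $f(0)=0$ (using an $n$-agent, $n$-item instance), and then that it must satisfy the ratio inequality $f(a)/f(b)\ge a/b$ for all $1\le b<a$ (via a two-agent counterexample when the inequality fails).  From this ratio condition the paper finishes \emph{algebraically}: if $f(t)=t+\delta$ for some $t$, then $f(kt)\ge kf(t)>kt+1$ for large $k$, contradicting the divisor-method bound $f(kt)\le kt+1$.  Thus the paper never constructs a WEF1 violation when $f$ first exceeds the identity at some $t_0\ge 2$ with $f(1)=1$; it instead amplifies to an impossibility.

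You instead split on the first index where $f$ leaves the identity and exhibit a direct two-agent WEF1 violation in every case.  Two differences are worth noting.  First, your Case~1 handles $f(0)>0$ with only two agents, which is strictly simpler than the paper's $n$-agent argument.  Second---and this is the real novelty---your Case~3 bypasses amplification entirely: by placing $w_1/w_2=r$ in the window $(t_0,f(t_0))$, you force agent~$2$'s second pick to arrive exactly one slot too early, so the prefix of length $t_0+2$ has $t_1/(t_2-1)=t_0<r=w_1/w_2$.  This exploits the opposite inequality ($f(t_0)/f(1)>t_0/1$) from the one the paper's ratio lemma targets, and yields a concrete violating instance rather than an algebraic contradiction with the bound $f\le t+1$.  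The trade-off: the paper's approach isolates a clean reusable condition ($f(a)/f(b)\ge a/b$), while yours is fully constructive, uses only two agents throughout, and avoids the amplification step.
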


\begin{proof}
\citet{ChakrabortyIgSu20} already showed that Adams' method satisfies WEF1.
To establish uniqueness, consider a divisor method with function $f$ that satisfies WEF1.
We first claim that $f(0) = 0$; suppose for contradiction that $f(0) > 0$.
Consider an instance where $m=n$, every agent has a positive utility for every item, and $w_n > \frac{f(1)}{f(0)}\cdot w_i$ for all $i\in[n-1]$.
With these weights, the first two picks go to agent~$n$.
Since the number of items is equal to the number of agents, there exists an agent who does not receive any item.
Such an agent will (weighted-)envy agent~$n$ by more than one item, a contradiction.
Hence $f(0)=0$.

Next, assume for contradiction that $\frac{f(a)}{f(b)} < \frac{a}{b}$ for some $1\le b < a$.
Consider an instance with $a+b+1$ items and two agents such that $\frac{f(a)}{f(b)} < \frac{w_1}{w_2} < \frac{a}{b}$.
Assume that agent~$2$ has utility $1$ for every item.
Since $\frac{f(a)}{w_1} < \frac{f(b)}{w_2}$,  agent~$2$ picks at most $b$ times while agent~$1$ picks at least $a+1$ times.
Hence, for any $B\subseteq M_1$ with $|B|\le 1$, we have
\[
\frac{u_2(M_2)}{w_2} \le \frac{b}{w_2} < \frac{a}{w_1} \le \frac{u_2(M_1\setminus B)}{w_1},
\]
contradicting the assumption that the divisor method satisfies WEF1.
It follows that  $\frac{f(a)}{f(b)} \ge \frac{a}{b}$ for all $1\le b < a$.

We are now ready to show that $f(t)=t$ for all $t\in\mathbb{Z}_{\ge 0}$, which suffices to complete the proof.
The case $t=0$ was already handled in the first paragraph.
If $f(t) > t$ for some $t$, say $f(t) = t+\delta$ with $\delta > 0$, then the property established in the previous paragraph implies that $f(kt) \ge kf(t) = kt + k\delta$ for every positive integer $k$.
When $k > 1/\delta$, we have $f(kt)\ge kt + k\delta > kt+1$, violating the definition of divisor methods.
Hence $f(t)=t$ for all $t$, as desired.
\end{proof}

Next, we show that interestingly, a broad class of divisor methods, which includes all five traditional divisor methods, satisfies WWEF1.
To this end, we first derive a characterization of divisor methods fulfilling WWEF1.

\begin{theorem}
\label{thm:divisor-WWEF1-char}
A divisor method with function $f$ satisfies WWEF1 if and only if both of the following conditions hold:
\begin{itemize}
\item $\frac{f(a)}{f(b)}\le\frac{a}{b}$ for any integers $1\le b\le a$;
\item $\frac{f(a)}{f(b)}\le\frac{a+1}{b+1}$ for any integers $0\le a\le b$ with $b\ne 0$.
\end{itemize}
\end{theorem}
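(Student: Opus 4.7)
The plan is to reduce the claim to Theorem~\ref{thm:WWEF1} via a simple sandwich observation about divisor methods. Because picks in a divisor method occur in increasing order of the values $\{f(\ell)/w_i : i \in N,\, \ell \ge 0\}$ (with consistent tie-breaking), at any reached state $(t_1, \ldots, t_n)$ and any two agents $i, j$ with $t_j \ge 1$, agent~$j$'s $t_j$-th pick value $f(t_j-1)/w_j$ has already been used while agent~$i$'s next candidate value $f(t_i)/w_i$ has not, so
\[
\frac{w_i}{w_j} \;\le\; \frac{f(t_i)}{f(t_j-1)}.
\]

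For the backward direction, I assume both conditions hold and verify the counting condition of Theorem~\ref{thm:WWEF1} at an arbitrary prefix with $t_j \ge 2$ and an arbitrary pair $i,j$. When $w_i \ge w_j$, the sandwich forces $f(t_i) \ge f(t_j-1)$, hence $t_i \ge t_j-1 \ge 1$ by strict monotonicity of $f$, and then condition~1 (with $a = t_i$, $b = t_j-1$) yields $\frac{w_i}{w_j} \le \frac{f(t_i)}{f(t_j-1)} \le \frac{t_i}{t_j-1}$, matching the first clause of Theorem~\ref{thm:WWEF1}. When $w_i \le w_j$, either $t_i \ge t_j$ (so $\frac{t_i+1}{t_j} > 1 \ge \frac{w_i}{w_j}$ trivially) or $t_i \le t_j-1$ (including the edge case $t_i=0$), in which case condition~2 (with $a = t_i$, $b = t_j-1$) gives $\frac{w_i}{w_j} \le \frac{f(t_i)}{f(t_j-1)} \le \frac{t_i+1}{t_j}$, matching the second clause.

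For the forward direction, I prove each condition by engineering a two-agent instance whose picking sequence reaches a prefix violating Theorem~\ref{thm:WWEF1}'s counting condition whenever the corresponding $f$-inequality fails. For condition~1, given $1 \le b < a$ (the case $a=b$ is trivial), I set $w_1 = f(a) - \delta$ and $w_2 = f(b)$ with small $\delta > 0$ so that $w_1 > w_2$; a sorted-order inspection of $\{f(\ell)/w_i\}$ using strict monotonicity of $f$ confirms that the first $a+b+1$ picks are exactly agent~1's first $a$ picks together with agent~2's first $b+1$ picks, reaching state $(a, b+1)$. Theorem~\ref{thm:WWEF1} then demands $\frac{a}{b} \ge \frac{f(a)-\delta}{f(b)}$, and sending $\delta \to 0^+$ yields condition~1. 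A symmetric construction with $w_1 = f(a)$ and $w_2 = f(b) + \delta$ again reaches state $(a, b+1)$ after $a+b+1$ picks; here $w_1 \le w_2$, so Theorem~\ref{thm:WWEF1} demands $\frac{a+1}{b+1} \ge \frac{f(a)}{f(b)+\delta}$, and $\delta \to 0^+$ gives condition~2 (the degenerate subcase $a=0$, $f(0)=0$ is automatic).

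The main obstacle is the sorted-order verification in the forward direction: one must rule out any other candidate value slipping into the first $a+b+1$ positions of the sorted list, which requires exploiting the strictness of $f$'s monotonicity (so that $f(a-1) < f(a) - \delta$ for small $\delta$) and carefully comparing the boundary values $f(a-1)/(f(a)-\delta)$, $f(b)/f(b)$, $f(a)/f(a)$, and $f(b)/(f(b)+\delta)$ against the threshold $1$.
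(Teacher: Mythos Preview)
Your proof is correct and the backward direction matches the paper's argument essentially verbatim (the paper likewise uses the inequality $\frac{f(t_j-1)}{w_j}\le\frac{f(t_i)}{w_i}$ at any prefix and then splits into the same two cases). The forward direction differs in technique: the paper argues by contradiction, assuming the relevant $f$-inequality fails \emph{strictly} and then choosing $w_1/w_2$ strictly between $f(a)/f(b)$ and the target ratio, whereas you give a direct argument with the specific weights $(f(a)-\delta,f(b))$ or $(f(a),f(b)+\delta)$ and pass to the limit $\delta\to 0^+$. The paper's contradiction route is marginally lighter because it only needs the coarse bound ``agent~1 picks at most $a$ times'' (immediate from $f(a)/w_1>f(b)/w_2$), so it sidesteps your sorted-order verification that the prefix lands \emph{exactly} at state $(a,b+1)$; your direct route trades that ease for avoiding the case split on whether $w_1\le w_2$ in the second bullet. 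Both are short and valid.
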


\begin{proof}
($\Rightarrow$) We begin with the forward direction. 
Consider a divisor method satisfying WWEF1, and assume for contradiction that (at least) one of the two conditions is violated.
\begin{itemize}
\item Suppose that $\frac{f(a)}{f(b)}>\frac{a}{b}$ for some $1\le b\le a$.
Consider an instance with $a+b+1$ items and two agents such that $\frac{f(a)}{f(b)} > \frac{w_1}{w_2} > \frac{a}{b} \ge 1$.
Since $\frac{f(a)}{w_1} > \frac{f(b)}{w_2}$, agent~$1$ picks at most $a$ times while agent~$2$ picks at least $b+1 \ge 2$ times.
The relation $\frac{w_1}{w_2} > \frac{a}{b}$ implies that the first condition of Theorem~\ref{thm:WWEF1} is violated when $i=1$ and $j=2$, a contradiction.
\item Suppose that $\frac{f(a)}{f(b)} > \frac{a+1}{b+1}$ for some $0\le a\le b$ with $b\ne 0$.
Consider an instance with $a+b+1$ items and two agents such that $\frac{f(a)}{f(b)} > \frac{w_1}{w_2} > \frac{a+1}{b+1}$.
Since $\frac{f(a)}{w_1} > \frac{f(b)}{w_2}$, agent~$1$ picks at most $a$ times while agent~$2$ picks at least $b+1 \ge 2$ times.
If $w_1 \le w_2$, the relation $\frac{w_1}{w_2} > \frac{a+1}{b+1}$ implies that the second condition of Theorem~\ref{thm:WWEF1} is violated when $i=1$ and $j=2$, a contradiction.
Else, $w_1 > w_2$, and so $\frac{a}{b} \le 1 < \frac{w_1}{w_2}$, contradicting the first condition of Theorem~\ref{thm:WWEF1}.
\end{itemize}

($\Leftarrow$) We now proceed to the backward direction.
Consider a divisor method whose function~$f$ satisfies both conditions in the theorem statement, and fix a pair of agents $i,j$.
It suffices to show that every time agent~$j$ picks an item starting from her second pick, the conditions in Theorem~\ref{thm:WWEF1} are satisfied. 
By definition of the divisor method, after agent $j$'s pick it holds that $\frac{f(t_j-1)}{w_j}\le\frac{f(t_i)}{w_i}$; otherwise agent~$i$ should have picked instead of agent~$j$.
Consider two cases as in the conditions of Theorem~\ref{thm:WWEF1} with $t_j\ge 2$.

\underline{Case 1}: $w_i\ge w_j$.
Since $f$ is strictly increasing, we have $f(t_j-1) > f(0) \ge 0$, and so $1\le\frac{w_i}{w_j}\le\frac{f(t_i)}{f(t_j-1)}$.
This means that $t_i\ge t_j-1$, and our assumption on $f$ implies that $\frac{w_i}{w_j}\le\frac{f(t_i)}{f(t_j-1)} \le \frac{t_i}{t_j-1}$, as desired.

\underline{Case 2}:
$w_i\le w_j$.
We have $f(t_j-1) > 0$ and $\frac{w_i}{w_j}\le\frac{f(t_i)}{f(t_j-1)}$.
If $t_i\ge t_j-1$, then  $\frac{w_i}{w_j}\le 1\le\frac{t_i+1}{t_j}$.
Otherwise, $t_i < t_j-1$, and our assumption on $f$ implies that $\frac{w_i}{w_j}\le\frac{f(t_i)}{f(t_j-1)} \le \frac{t_i+1}{(t_j-1)+1} = \frac{t_i+1}{t_j}$, as desired.
\end{proof}

The characterization in \Cref{thm:divisor-WWEF1-char} involves two variables.
We now present an alternative characterization that involves only a single variable.

\begin{corollary}
\label{cor:divisor-WWEF1-char-alt}
A divisor method with function $f$ satisfies WWEF1 if and only if 
\[
\frac{t}{t+1} \le \frac{f(t)}{f(t+1)} \le \frac{t+1}{t+2}
\]
for all integers $t\ge 0$.
\end{corollary}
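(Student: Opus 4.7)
The plan is to derive this single-variable characterization directly from the two-variable characterization of Theorem~\ref{thm:divisor-WWEF1-char}, using a telescoping-product argument. The key observation is that the two inequalities in the statement of the corollary are exactly the ``adjacent'' instances of the two conditions in Theorem~\ref{thm:divisor-WWEF1-char}: the left inequality $\frac{t}{t+1} \le \frac{f(t)}{f(t+1)}$ rearranges to $\frac{f(t+1)}{f(t)} \le \frac{t+1}{t}$, which is the first condition of Theorem~\ref{thm:divisor-WWEF1-char} with $a = t+1$ and $b = t \ge 1$, while the right inequality $\frac{f(t)}{f(t+1)} \le \frac{t+1}{t+2}$ is precisely the second condition with $a = t$ and $b = t+1$.

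For the forward direction (WWEF1 implies the two adjacent-pair inequalities), I would simply apply Theorem~\ref{thm:divisor-WWEF1-char} with these choices of $a$ and $b$, noting that the edge case $t = 0$ for the left inequality holds trivially since $f(t)/f(t+1) \ge 0$ (recall $f$ takes values in $\mathbb{R}_{\ge 0}$ and is strictly increasing, so $f(t+1) > 0$ for all $t \ge 0$).

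For the backward direction, assume the adjacent-pair inequalities hold. To recover the first condition of Theorem~\ref{thm:divisor-WWEF1-char}, take any $1 \le b \le a$ and write
\[
\frac{f(a)}{f(b)} \;=\; \prod_{k=b}^{a-1} \frac{f(k+1)}{f(k)} \;\le\; \prod_{k=b}^{a-1} \frac{k+1}{k} \;=\; \frac{a}{b},
\]
where the inequality uses the rearranged left-hand condition (and the product is empty, hence equal to~$1$, when $a = b$). Similarly, to recover the second condition, take any $0 \le a \le b$ with $b \ne 0$ and write
\[
\frac{f(a)}{f(b)} \;=\; \prod_{k=a}^{b-1} \frac{f(k)}{f(k+1)} \;\le\; \prod_{k=a}^{b-1} \frac{k+1}{k+2} \;=\; \frac{a+1}{b+1},
\]
using the right-hand condition. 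These two derivations re-establish the hypotheses of Theorem~\ref{thm:divisor-WWEF1-char}, so WWEF1 follows.

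There is no real obstacle here beyond verifying that the telescoping is valid, which it is because $f(t+1) > f(t) \ge 0$ for all $t \ge 0$ (so no division by zero occurs in the products, which only involve $f(k)$ for $k \ge 1$ in the denominators except in the trivial case handled separately). Thus the proof reduces to the two telescoping identities together with Theorem~\ref{thm:divisor-WWEF1-char}.
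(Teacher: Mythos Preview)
Your proposal is correct and follows essentially the same approach as the paper: both directions are reduced to Theorem~\ref{thm:divisor-WWEF1-char}, with the forward direction specializing to adjacent pairs $(a,b)=(t+1,t)$ and $(a,b)=(t,t+1)$ (handling $t=0$ trivially for the left inequality), and the backward direction recovering the general two-variable conditions via the same telescoping products.
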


\begin{proof}
We first show that our condition is necessary for WWEF1.
The right inequality follows from the second condition of \Cref{thm:divisor-WWEF1-char} by taking $a=t$ and $b=t+1$.
The left inequality holds trivially for any divisor method when $t=0$, while for $t\ge 1$, it follows from the first condition of \Cref{thm:divisor-WWEF1-char} by taking $a=t+1$ and $b=t$.

Next, we show that our condition is sufficient for WWEF1 by showing that it implies the two conditions of \Cref{thm:divisor-WWEF1-char} when $a\ne b$ (the case $a=b$ is trivial).
First, for $1\le b< a$, our left inequality implies that
\begin{align*}
\frac{f(a)}{f(b)} 
= \frac{f(b+1)}{f(b)}\cdot \frac{f(b+2)}{f(b+1)} \cdot \cdots \cdot \frac{f(a)}{f(a-1)} \le \frac{b+1}{b}\cdot \frac{b+2}{b+1} \cdot \cdots \cdot \frac{a}{a-1} = \frac{a}{b}.
\end{align*}
Second, for $0\le a< b$, our right inequality implies that
\begin{align*}
\frac{f(a)}{f(b)} 
= \frac{f(a)}{f(a+1)}\cdot \frac{f(a+1)}{f(a+2)} \cdot \cdots \cdot \frac{f(b-1)}{f(b)} \le \frac{a+1}{a+2}\cdot \frac{a+2}{a+3} \cdot \cdots \cdot \frac{b}{b+1} = \frac{a+1}{b+1}.
\end{align*}
This completes the proof.
\end{proof}

Not all divisor methods satisfy WWEF1, for example:
\[
f(t) = \begin{cases}
    t & \text{for } 0 \leq t \leq 1; \\
    t+1 & \text{for } t \ge 2.
  \end{cases}
\]
Indeed, this can be readily verified by taking $t=1$ in \Cref{cor:divisor-WWEF1-char-alt}.
Nevertheless, we show next that a large class of divisor methods (including all five traditional divisor methods)
satisfies WWEF1, meaning that any method from this class can guarantee fairness beyond the setting of identical items (i.e., apportionment).
Specifically, for any $p\in\mathbb{R}$ and $w\in[0,1]$, consider the divisor method defined by
\[
f_{p,w}(t) = \begin{cases}
    \left(w\cdot t^p + (1-w)\cdot (t+1)^p\right)^{1/p} & \text{for } p \neq 0; \\
    t^w(t+1)^{1-w} & \text{for } p = 0; \\
  \end{cases}
\]
When $t=0$ and $p \le 0$, we define $f_{p,w}(0) = 0$.
Note that the expression for $p = 0$ can also be derived by taking the limit of the general formula.
When $p=1$, we obtain the class of \emph{stationary divisor methods}, where $f(t) = t+c$ for some constant $c\in[0,1]$.
When $w=1/2$, we obtain the class of \emph{power-mean divisor methods} \citep[p.~68]{Pukelsheim14}.
We will refer to our class of divisor methods as the \emph{weighted power-mean divisor methods}.
Notice that this class includes all five traditional divisor methods, which correspond to taking $(p,w)=(1,1)$, $(-1,1/2)$, $(0,1/2)$, $(1,1/2)$, and $(1,0)$ for Adams, Dean, Hill, Webster, and Jefferson, respectively.

\begin{theorem}
\label{thm:divisor-WWEF1}
All weighted power-mean divisor methods satisfy WWEF1.
\end{theorem}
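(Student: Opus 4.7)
The plan is to verify the one-variable criterion from Corollary~\ref{cor:divisor-WWEF1-char-alt}, namely
\[
\frac{t}{t+1} \;\le\; \frac{f_{p,w}(t)}{f_{p,w}(t+1)} \;\le\; \frac{t+1}{t+2}
\]
for every integer $t\ge 0$, by exploiting the fact that $f_{p,w}(t)$ is precisely the weighted power mean $M_p(t,t+1;w,1-w)$ of the two consecutive integers $t$ and $t+1$. The two standard properties I would use are \emph{positive homogeneity}, $M_p(ca,cb;w,1-w)=c\,M_p(a,b;w,1-w)$ for $c>0$, and \emph{monotonicity}, namely that $M_p(\,\cdot\,,\,\cdot\,;w,1-w)$ is non-decreasing in each argument separately. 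Both are immediate from the definition for $p\ne 0$ and carry over to $p=0$ either by direct computation with $M_0(a,b)=a^w b^{1-w}$ or by taking limits.

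For the upper bound, I would factor out $t+1$ from the first mean and $t+2$ from the second to write
\[
\frac{f_{p,w}(t)}{f_{p,w}(t+1)} \;=\; \frac{t+1}{t+2}\cdot\frac{M_p\!\left(\tfrac{t}{t+1},1;w,1-w\right)}{M_p\!\left(\tfrac{t+1}{t+2},1;w,1-w\right)}.
\]
The desired inequality then reduces to $M_p(t/(t+1),1)\le M_p((t+1)/(t+2),1)$, which follows from monotonicity in the first argument together with the elementary fact that $t/(t+1)\le (t+1)/(t+2)$ (equivalently, $t(t+2)\le (t+1)^2$).

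For the lower bound, the case $t=0$ is immediate since $f_{p,w}(0)\ge 0$ and $f_{p,w}(1)>0$. For $t\ge 1$, I would factor out $t$ from the first mean and $t+1$ from the second, obtaining
\[
\frac{f_{p,w}(t)}{f_{p,w}(t+1)} \;=\; \frac{t}{t+1}\cdot\frac{M_p\!\left(1,\tfrac{t+1}{t};w,1-w\right)}{M_p\!\left(1,\tfrac{t+2}{t+1};w,1-w\right)},
\]
so the desired inequality reduces to $M_p(1,(t+1)/t)\ge M_p(1,(t+2)/(t+1))$. This follows from monotonicity in the second argument together with $(t+1)/t\ge (t+2)/(t+1)$, i.e.\ the same elementary inequality $(t+1)^2\ge t(t+2)$ used above.

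The one genuinely delicate point, rather than an obstacle, is the boundary behaviour at $t=0$ when $p\le 0$ and $w>0$, where the paper fixes the convention $f_{p,w}(0)=0$; there the upper-bound reduction still goes through because the inequality becomes $0\le 1/2$ directly, which matches the value $M_p(0,1;w,1-w)=0$ that the scaling argument would produce in the limit. Apart from this sanity check at the boundary and the routine verification that the power-mean properties extend continuously through $p=0$, the proof is essentially a one-line consequence of the homogeneity and monotonicity of weighted power means, which is precisely the right level of abstraction for this family of divisor functions.
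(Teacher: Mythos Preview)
Your proof is correct and takes a genuinely different route from the paper's own argument. The paper verifies the two-variable characterization of Theorem~\ref{thm:divisor-WWEF1-char} directly: for each of the two conditions it rewrites $\frac{f_{p,w}(a)}{f_{p,w}(b)}\le\frac{a}{b}$ (respectively $\frac{a+1}{b+1}$) as an explicit inequality in $a,b,p,w$, splits into the cases $p>0$, $p<0$, and $p=0$, and in each case reduces algebraically to $\frac{a+1}{b+1}\le\frac{a}{b}$ (or its reverse). You instead appeal to the one-variable criterion of Corollary~\ref{cor:divisor-WWEF1-char-alt} and exploit that $f_{p,w}(t)=M_p(t,t+1;w,1-w)$ is a weighted power mean, so that positive homogeneity and coordinatewise monotonicity of $M_p$ do all the work uniformly in $p$. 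Your approach is more conceptual---it explains \emph{why} the inequalities hold (they are a structural consequence of power means being homogeneous and monotone) and avoids the sign-of-$p$ case split except at the single boundary point $t=0$, $p\le 0$. The paper's approach is more self-contained in that it does not invoke named properties of power means, but at the cost of repeating essentially the same algebra three times. Both arguments ultimately bottom out in the same elementary inequality $(t+1)^2\ge t(t+2)$.
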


\begin{proof}
It suffices to show that the functions $f_{p,w}$ of all weighted power-mean divisor methods satisfy the two conditions of \Cref{thm:divisor-WWEF1-char}.

First, assume that $p\neq 0$.
If $1\le b\le a$, the condition $\frac{f_{p,w}(a)}{f_{p,w}(b)} \le \frac{a}{b}$ can be rewritten as
\begin{equation}
\label{eq:weighted-power-mean}
\left(\frac{w\cdot a^p + (1-w)\cdot (a+1)^p}{w\cdot b^p + (1-w)\cdot (b+1)^p}\right)^{1/p} \le \frac{a}{b}.
\end{equation}
For $p > 0$, this is equivalent to 
\[
\frac{w\cdot a^p + (1-w)\cdot (a+1)^p}{w\cdot b^p + (1-w)\cdot (b+1)^p} \le \left(\frac{a}{b}\right)^p,
\]
which is equivalent to $\left(\frac{a+1}{b+1}\right)^p \le \left(\frac{a}{b}\right)^p$, or $\frac{a+1}{b+1}\le\frac{a}{b}$; the latter holds since $a\ge b$.
On the other hand, for $p < 0$, the inequality~\eqref{eq:weighted-power-mean} is equivalent to 
\[
\frac{w\cdot a^p + (1-w)\cdot (a+1)^p}{w\cdot b^p + (1-w)\cdot (b+1)^p} \ge \left(\frac{a}{b}\right)^p,
\]
which is equivalent to $\left(\frac{a+1}{b+1}\right)^p \ge \left(\frac{a}{b}\right)^p$, or $\frac{a+1}{b+1}\le\frac{a}{b}$; the latter again holds since $a\ge b$.
This shows that the first condition of \Cref{thm:divisor-WWEF1-char} is fulfilled.
The second condition can be established in an analogous manner---at the end we obtain the opposite inequality $\frac{a}{b} \le \frac{a+1}{b+1}$, which holds because $a\le b$.

Next, consider $p=0$.
If $1\le b\le a$, the condition $\frac{f_{p,w}(a)}{f_{p,w}(b)} \le \frac{a}{b}$ can be rewritten as
\[
\frac{a^w(a+1)^{1-w}}{b^w(b+1)^{1-w}} \le \frac{a}{b}.
\]
This is equivalent to $\frac{a+1}{b+1}\le\frac{a}{b}$, which holds since $a\ge b$.
Consider now $0\le a\le b$ with $b\ne 0$.
If $a=0$, then since we define $f_{0,w}(0) = 0$, the condition $\frac{f_{p,w}(a)}{f_{p,w}(b)} \le \frac{a+1}{b+1}$ holds trivially.
For $a\ge 1$, this condition can be rewritten as
\[
\frac{a^w(a+1)^{1-w}}{b^w(b+1)^{1-w}} \le \frac{a+1}{b+1},
\]
which is equivalent to $\frac{a}{b}\le\frac{a+1}{b+1}$; the latter holds since $a\le b$.
\end{proof}

The following corollary is an immediate consequence of \Cref{thm:divisor-WWEF1} and the remarks preceding it.

\begin{corollary}
\label{cor:divisor-WWEF1-traditional}
All five traditional divisor methods satisfy WWEF1.
\end{corollary}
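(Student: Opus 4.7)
The plan is to exhibit each of the five traditional divisor methods as an instance of the weighted power-mean family and then simply invoke \Cref{thm:divisor-WWEF1}. Concretely, I would run through the table of parameters given just before the statement of \Cref{thm:divisor-WWEF1}, namely $(p,w) = (1,1)$ for Adams, $(1,0)$ for Jefferson, $(1,1/2)$ for Webster, $(0,1/2)$ for Hill, and $(-1,1/2)$ for Dean, and check in each case that the corresponding $f_{p,w}$ coincides with the stated $f$ of the traditional method.

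The verifications are one-line algebraic substitutions. For $p=1$, the formula $f_{1,w}(t) = w\cdot t + (1-w)\cdot(t+1)$ immediately yields $f(t) = t$, $t+1$, and $t+\tfrac{1}{2}$ when $w$ is $1$, $0$, and $\tfrac{1}{2}$, giving Adams, Jefferson, and Webster. For $p=0$ and $w=\tfrac{1}{2}$, $f_{0,1/2}(t) = t^{1/2}(t+1)^{1/2} = \sqrt{t(t+1)}$, which is Hill. For $p=-1$ and $w=\tfrac{1}{2}$, the harmonic-mean formula gives
\[
f_{-1,1/2}(t) = \left(\tfrac{1}{2}t^{-1} + \tfrac{1}{2}(t+1)^{-1}\right)^{-1} = \frac{2t(t+1)}{2t+1} = \frac{t(t+1)}{t+\tfrac{1}{2}},
\]
which is Dean (with the convention $f_{-1,1/2}(0)=0$ matching $f(0)=0$).

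Having identified each traditional method as a weighted power-mean divisor method, the WWEF1 guarantee follows at once from \Cref{thm:divisor-WWEF1}. There is no real obstacle here: the entire content of the corollary is the bookkeeping that the five classical $f$'s fit the parametric form introduced just above, and \Cref{thm:divisor-WWEF1} has already done the work of showing that every such $f_{p,w}$ satisfies the two inequalities of \Cref{thm:divisor-WWEF1-char}. So the proof proposal is essentially ``check five substitutions, then apply the theorem.''
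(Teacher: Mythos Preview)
Your proposal is correct and matches the paper's approach exactly: the paper states the corollary as an immediate consequence of \Cref{thm:divisor-WWEF1} together with the preceding remarks identifying the five traditional methods as weighted power-mean methods with the very parameters $(p,w)$ you list. Your explicit verification of the five substitutions simply spells out what the paper leaves implicit.
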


To see that the class of divisor methods satisfying WWEF1 is strictly larger than that of weighted power-mean methods, consider
\[
f(t) = \begin{cases}
    t+\frac{1}{2} & \text{for } 0 \leq t \leq 1; \\
    t+1 & \text{for } t \ge 2.
  \end{cases}
\]
Indeed, for weighted power-mean methods, we have $f_{p,0}(t) = t+1$ for all $p,t$, and $f_{p,w}(t) < t+1$ for all $w\in (0,1]$ and all $p,t$, so none of these methods coincides with $f$.
The fact that $f$ satisfies WWEF1 can be readily verified using \Cref{cor:divisor-WWEF1-char-alt}.

Finally, we turn to WPROP1, where we illustrate a strong relationship with a notion from the apportionment setting.
A picking sequence $\pi_{n,m,\textbf{w}}$ is said to satisfy \emph{lower quota} if for any $i\in N$, it holds that $t_i\ge\left\lfloor\frac{w_i\cdot m}{\sum_{i'\in N}w_{i'}}\right\rfloor$, where $t_i$ denotes the number of picks in $\pi_{n,m,\textbf{w}}$ assigned to agent~$i$.

\begin{proposition}
\label{prop:lower-quota-WPROP1}
Let $\pi$ be a picking sequence such that every prefix of $\pi$ satisfies lower quota. 
Then $\pi$ satisfies WPROP1.
\end{proposition}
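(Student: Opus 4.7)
The plan is to reduce this directly to the prefix characterization of WPROP1 established in Theorem~\ref{thm:WPROP1}. That characterization says that $\pi$ is WPROP1 iff for every prefix of $\pi$ (say of length $k$) and every agent $i$, the number $t_i$ of picks assigned to $i$ in that prefix satisfies
\[
t_i \;\ge\; \frac{w_i}{\sum_{i'\in N} w_{i'}}\cdot k \;-\; 1.
\]
So the task is to show that the hypothesis---every prefix of $\pi$ satisfies lower quota---implies exactly this inequality.

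The key step is the elementary observation that $\lfloor x\rfloor > x - 1$ for every real $x$. Fixing an arbitrary prefix of length $k$ and an arbitrary agent $i$, the lower-quota hypothesis for this prefix gives
\[
t_i \;\ge\; \left\lfloor \frac{w_i \cdot k}{\sum_{i'\in N} w_{i'}} \right\rfloor \;>\; \frac{w_i \cdot k}{\sum_{i'\in N} w_{i'}} - 1.
\]
Since $t_i$ is an integer (in particular, a real number), the strict inequality above yields the WPROP1 prefix condition of Theorem~\ref{thm:WPROP1}. Because this holds for every prefix and every agent, Theorem~\ref{thm:WPROP1} concludes that $\pi$ satisfies WPROP1.

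There is no real obstacle here: the proposition is essentially a one-line consequence of the characterization plus the trivial fact that the floor of a real number exceeds that number minus one. The only thing worth noting in the write-up is that we must apply the lower-quota hypothesis to \emph{every} prefix (not just the whole sequence), which matches exactly the quantifier structure of the WPROP1 characterization.
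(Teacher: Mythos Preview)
Your proof is correct and matches the paper's own argument essentially verbatim: apply the lower-quota hypothesis to each prefix, use $\lfloor x\rfloor > x-1$, and invoke Theorem~\ref{thm:WPROP1}.
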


\begin{proof}
For any prefix of such a picking sequence $\pi$ and any $i\in N$, we have
$t_i\ge \left\lfloor\frac{w_i\cdot k}{\sum_{i'\in N}w_{i'}}\right\rfloor > \frac{w_i\cdot k}{\sum_{i'\in N}w_{i'}} - 1$, where $t_i$ and $k$ denote the number of agent~$i$'s picks in the prefix and the length of the prefix, respectively.
Theorem~\ref{thm:WPROP1} implies that $\pi$ satisfies WPROP1.
\end{proof}

Since Jefferson's method satisfies lower quota \citep[p.~130]{BalinskiYo01} and is resource-consistent, any prefix of its associated picking sequence also satisfies lower quota.
By Proposition~\ref{prop:lower-quota-WPROP1}, the method satisfies WPROP1.
We prove that it is the only divisor method to do so.

\begin{theorem}
\label{thm:divisor-WPROP1}
Among all divisor methods, Jefferson's method is the only one satisfying WPROP1.
\end{theorem}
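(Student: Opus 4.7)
The forward direction is immediate from the remarks preceding the theorem. Jefferson's method satisfies lower quota classically; since it is resource-consistent, so does each of its prefixes, and Proposition~\ref{prop:lower-quota-WPROP1} yields WPROP1.

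For the converse, let $f$ be any divisor method with $f \not\equiv t+1$, and set $t_0 := \min\{t\ge 0 : f(t) < t+1\}$ and $\delta := (t_0+1)-f(t_0) > 0$; by minimality, $f(t)=t+1$ for $0\le t<t_0$. The plan is to exhibit an instance whose picking sequence under $f$ violates the WPROP1 condition of Theorem~\ref{thm:WPROP1} at some prefix.

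I consider $n+1$ agents: one ``heavy'' agent of weight $w^*$ together with $n$ identical ``light'' agents of weight $1$, for $n$ large, to be chosen. In the main case $t_0 \ge 1$, take $w^* := t_0+1-\delta/2 \in (f(t_0),\,t_0+1)$; then under $f$ the first $t_0+1$ items all go to the heavy agent, since $(t+1)/w^* < 1 = f(0)/1$ for $t<t_0$ and $f(t_0)/w^* < 1$ by construction. An inductive argument shows the sequence falls into repeating ``super-rounds'' of length $t_0+1+n$, each giving the heavy agent $t_0+1$ picks and then one pick to each light in cyclic order. After $R$ such super-rounds the prefix length is $k = R(t_0+1+n)$ and each light agent has received only $R$ items, producing a WPROP1 deficit of $\frac{R\,\delta/2}{w^*+n} - 1$, which is positive once $R > 2(w^*+n)/\delta$. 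The boundary case $t_0=0$ (so $f(0)<1$) uses $w^* \in (f(0),1)$: the super-round now awards $n$ picks to the lights and one to the weight-$w^*$ agent, and the analogous deficit $\frac{R(1-w^*)}{w^*+n} - 1$ again becomes positive for $R$ large. The subcase $f(0)=0$ (Adams' method) admits an especially clean counterexample: with weights $(W,1,\ldots,1)$ and $m=n+1$ items, the tie-breaking at the initial picks awards each agent exactly one item, and WPROP1 fails for the heavy agent once $W$ is sufficiently large, precisely as in the WEF1--WPROP1 impossibility argument discussed in Section~\ref{sec:prelim}.

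The principal obstacle is the inductive verification of the super-round recurrence, which reduces to comparing $f(R(t_0+1)+j)/w^*$ with $f(R)$ at each super-round boundary. These strict inequalities follow from the divisor-method constraints $t\le f(t)\le t+1$ together with the precise choice of $w^*$; should any of them degenerate to an equality for a particular $f$, a slight perturbation of $w^*$ restores the pattern while keeping the deficit linear in $R$.
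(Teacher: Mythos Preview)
The super-round recurrence at the heart of your argument does not follow from the divisor-method constraints $t \le f(t) \le t+1$. Take $f(t) = t+1$ for all $t \ne 1$ and $f(1) = 3/2$, so that $t_0 = 1$, $\delta = 1/2$, and $w^* = 7/4$. The first two picks indeed go to the heavy agent, and then each light picks once. But at the start of the second alleged super-round the heavy agent's ratio is $f(2)/w^* = 12/7$, whereas each light's ratio is $f(1) = 3/2 < 12/7$: the lights pick again, not the heavy agent, and the pattern collapses immediately. No perturbation of $w^*$ repairs this, since the obstruction is a strict inequality pointing the wrong way, not a tie. An analogous breakdown occurs in the $t_0 = 0$ branch with $f(0) = 1/2$ and $f(t) = t+1$ for $t \ge 1$. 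The underlying problem is that your construction fixes $w^*$ using only $f(t_0)$, while the recurrence you need depends on \emph{every} value of $f$, about which you have assumed nothing beyond the generic bounds. Worse, when the pattern breaks in these examples it does so in favor of the light agents, so the deficit you are trying to grow actually shrinks.

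The paper sidesteps this by never tracking the full sequence. After disposing of the case $f(0)=0$ (essentially your Adams subcase), it proves the ratio inequality $f(a)/f(b) \le (a+1)/(b+1)$ for all $0 \le b < a$: if this failed for some pair, one picks $w_1/w_2$ strictly between $(a+1)/(b+1)$ and $f(a)/f(b)$, takes $n-1$ clones of agent~$2$ for $n$ large, and sets $m = a + (n-1)(b+1)$. The single inequality $f(a)/w_1 > f(b)/w_2$ then caps agent~$1$ at $a$ picks over the whole sequence, and the WPROP1 prefix condition of Theorem~\ref{thm:WPROP1} fails at $k=m$ once $n$ is large enough. A short scaling step (take $a = kt$, $b = t$, and let $k \to \infty$) then forces $f(t)=t+1$. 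The key difference is that this argument needs only one carefully chosen ratio $f(a)/f(b)$, not control of the entire picking sequence.
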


\begin{proof}
As explained before the theorem statement, Jefferson's method satisfies WPROP1.

To establish uniqueness, consider a divisor method with function $f$ that satisfies WPROP1.
We first claim that $f(0) > 0$; suppose for contradiction that $f(0) = 0$.
Consider an instance where $m=n\ge 3$ and agent~$1$ has utility~$1$ for every item.
Regardless of the weights, each agent will receive one item.
Assume that $2/n < w_1 < 1$ and $w_2=w_3=\dots=w_n = \frac{1-w_1}{n-1}$.
Then, for any $B\subseteq M\setminus M_1$ with $|B|\le 1$, we have
\[
u_1(M_1) = 1 < w_1\cdot n - 1 \le \frac{w_1}{w_1+\dots+w_n}\cdot u_1(M) - u_1(B),
\]
meaning that the method fails WPROP1, a contradiction.
Hence $f(0) > 0$.

Next, we claim that $\frac{f(a)}{f(b)} \le \frac{a+1}{b+1}$ for all $0\le b < a$.
Assume for contradiction that $\frac{f(a)}{f(b)} > \frac{a+1}{b+1}$ for some $0\le b < a$.
Choose $w_1,w_2$ such that $\frac{f(a)}{f(b)} > \frac{w_1}{w_2} > \frac{a+1}{b+1}$, choose $n$ large enough so that $\frac{w_1}{w_2} > \frac{a+1}{b+1-\frac{1}{n-1}}$, and consider an instance with $a+(n-1)(b+1)$ items and $n$ agents such that $w_2 = w_3 = \dots = w_n$.
Assume that agent~$1$ has utility $1$ for every item.
Since $\frac{f(a)}{w_1} > \frac{f(b)}{w_2}$, agent~$1$ picks at most $a$ times---indeed, if agent~$1$ picks at least $a+1$ times, then one of the remaining agents will pick at most $b$ times, which is impossible due to the inequality.
Moreover, since $\frac{w_1}{w_2} > \frac{a+1}{b+1-\frac{1}{n-1}}$, we have
\[
w_1+(a+1)(n-1)w_2 < w_1(n-1)(b+1),
\]
which implies that
\[
(a+1)(w_1+(n-1)w_2) < w_1(a+(n-1)(b+1)),
\]
or equivalently 
\[
a < \frac{w_1}{w_1+(n-1)w_2}\cdot(a+(n-1)(b+1))-1.
\]
Hence, for any $B\subseteq M\setminus M_1$ with $|B|\le 1$, it holds that
\begin{align*}
u_1(M_1) \le a &< \frac{w_1}{w_1+(n-1)w_2}\cdot (a+(n-1)(b+1)) - 1 \\
&\le \frac{w_1}{w_1+\dots+w_n}\cdot u_1(M) - u_1(B).
\end{align*}
This shows that the divisor method fails WPROP1, a contradiction.
It follows that $\frac{f(a)}{f(b)} \le \frac{a+1}{b+1}$ for all $0\le b < a$.

We are now ready to show that $f(t)=t+1$ for all $t\in\mathbb{Z}_{\ge 0}$, which suffices to complete the proof.
Suppose for contradiction that $f(t) = t+1-\delta$ with $\delta > 0$ for some $t$.
If $t > 0$, the property established in the previous paragraph implies that $f(kt) \le \frac{kt+1}{t+1}\cdot f(t) = \frac{kt+1}{t+1}\cdot(t+1-\delta)$ for every positive integer $k$; when $k > \frac{t+1-\delta}{\delta t}$, we have $f(kt) \le \frac{kt+1}{t+1}\cdot(t+1-\delta) < kt$, violating the definition of divisor methods. 
Else, $t=0$, and the property in the previous paragraph implies that $f(1) \le 2f(0) < 2$, so we can continue as in the $t > 0$ case by taking $t=1$ and again obtain a contradiction.
Hence $f(t)=t+1$ for all $t\in\mathbb{Z}_{\ge 0}$, as desired.
\end{proof}

\section{Quota Method}
\label{sec:quota}

Although divisor methods are widely used in practice, they do come with an axiomatic downside: no divisor method satisfies an arguably natural axiom known as \emph{quota} \citep[p.~130]{BalinskiYo01}. 
A picking sequence satisfies the quota axiom if for every $i \in N$, it holds that $\left\lfloor \frac{w_i \cdot m }{\sum_{i' \in N}w_{i'}} \right\rfloor \leq  t_i \leq \left\lceil \frac{w_i \cdot m}{\sum_{i' \in N}w_{i'}} \right\rceil,$ where $t_i$ is the number of picks assigned to agent $i$ by the picking sequence---note that the lower bound simply corresponds to the lower quota notion introduced before Proposition~\ref{prop:lower-quota-WPROP1}.
Motivated by this observation, \cite{BalinskiYo75} proposed the \emph{quota method} which satisfies the quota axiom as well as resource-consistency.
Intuitively, this method can be seen as a constrained version of Jefferson's method where we choose an agent $i$ minimizing $(t_i+1)/w_i$ over a restricted subset of ``eligible'' agents.
The picking sequence for the quota method is determined iteratively.
For each round $k \in [m]$, let $t_i$ be the number of times agent~$i$ has picked in rounds $1,\dots, k-1$. 
An agent is \emph{eligible} if she would not exceed her upper bound in the quota axiom upon getting an additional pick in round $k$. 
Equivalently, the set of eligible agents is $U(\textbf{w},\textbf{t},k) = \left\{i \in N \,\middle|\, t_i < \frac{w_i \cdot k }{ \sum_{i' \in N} w_{i'}}\right\}$, where $\textbf{t}=(t_1,\dots,t_n)$. 
Among all eligible agents, the next pick is assigned to an agent minimizing $(t_i+1)/w_i$, breaking ties in a consistent manner. 
The method trivially satisfies resource-consistency, which by Theorem~\ref{thm:resmon} implies the following:

\begin{corollary}
The quota method satisfies resource-monotonicity. 
\end{corollary}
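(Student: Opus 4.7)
The plan is to establish resource-consistency for the quota method and then invoke Theorem~\ref{thm:resmon}, which states that every resource-consistent family of picking sequences is resource-monotone. This is essentially the route signalled by the sentence preceding the corollary, but it is worth spelling out why the resource-consistency step genuinely is trivial so that the appeal to Theorem~\ref{thm:resmon} is justified.

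First, I would unpack the definition of the quota method in the form of an iterative rule: in round $k$, the pick depends only on (i) the weight vector $\mathbf{w}$, (ii) the current round index $k$, and (iii) the tally vector $\mathbf{t} = (t_1, \dots, t_n)$ recording how many items each agent has taken in rounds $1, \dots, k-1$. Crucially, neither the eligibility set $U(\mathbf{w}, \mathbf{t}, k)$ nor the tie-breaking rule among eligible agents refers in any way to the total number of items $m$. Consequently, if we fix $n$ and $\mathbf{w}$ and run the method for $m$ versus $m+1$ items, the first $m$ rounds are governed by identical update rules on identical data, so they produce identical picks.

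Formally, I would argue by induction on $k \in [m]$: assuming $\pi_{n,m,\mathbf{w}}$ and $\pi_{n,m+1,\mathbf{w}}$ agree on rounds $1, \dots, k-1$, the tallies $\mathbf{t}$ at the start of round $k$ coincide in both runs, hence so do the eligibility set $U(\mathbf{w}, \mathbf{t}, k)$ and the value of $(t_i+1)/w_i$ for each eligible $i$; the consistent tie-breaking rule then selects the same agent in round $k$ of both sequences. This shows that $\pi_{n,m,\mathbf{w}}$ is a prefix of $\pi_{n,m+1,\mathbf{w}}$, which is precisely the definition of resource-consistency for the family generated by the quota method.

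The corollary then follows immediately from Theorem~\ref{thm:resmon}. There is no genuine obstacle here: the substantive work is already contained in Theorem~\ref{thm:resmon}, and the only thing to verify for the quota method is that its definition is ``prefix-stable'' in $m$, which is immediate from the fact that no quantity used in defining round $k$ refers to $m$.
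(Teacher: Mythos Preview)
Your proposal is correct and follows exactly the same approach as the paper: the paper simply notes that the quota method ``trivially satisfies resource-consistency'' and then invokes Theorem~\ref{thm:resmon}. Your write-up merely spells out the trivial resource-consistency step in slightly more detail than the paper does, but the argument is identical.
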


However, satisfying the quota axiom comes at a price: 
in contrast to all divisor methods (Corollary~\ref{cor:divisor-allmon}), the quota method fails population-monotonicity.
Moreover, like the five traditional divisor methods (Proposition~\ref{prop:weightmon-negative}), the quota method fails weight-monotonicity for $n=3$. 

\begin{proposition}
\label{prop:quota-popmon}
The quota method does not satisfy population-monotonicity. 
In addition, it does not satisfy weight-monotonicity even when there are three agents. 
\end{proposition}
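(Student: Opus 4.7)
The plan is to exhibit two explicit counterexamples, one for each failure claimed in the proposition.

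For population-monotonicity, I would look for an instance with $n=2$ agents in which, under the quota method, agent~$1$ wins a ``middle'' round only because her quota is not yet saturated, so that agent~$2$'s pick is deferred to a later round where fewer attractive items remain. Adding a third agent with a small weight enlarges the denominator $\sum_{i'} w_{i'}$, which shrinks agent~$1$'s per-round quota and can render her \emph{ineligible} in precisely that middle round. The corresponding pick is then handed to agent~$2$, who ends up obtaining a higher-valued item than in the original sequence, violating population-monotonicity. A natural candidate is $\textbf{w} = (3,1)$ with $m=5$: here the quota method produces the sequence $(1,1,1,2,1)$, while for $\textbf{w}' = (3,1,1)$ with $m=5$ it produces $(1,1,2,1,3)$. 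Choosing agent~$2$'s utility to place a large positive value on the item picked in round~$3$, and value~$0$ on the items picked in rounds~$4$ and~$5$, then yields strictly greater utility to agent~$2$ in the three-agent instance.

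For weight-monotonicity with three agents, I would follow the template of Proposition~\ref{prop:weightmon-negative}. Since Jefferson's function $f(t)=t+1$ satisfies $f(0)>0$ and $f(1)^{2}>f(0)f(2)$, the five-item, three-agent utility tableau constructed in that proof already witnesses weight-monotonicity failure for the bare Jefferson rule. I would select weights $w_{1}<w_{1}'$ and $w_{2},w_{3}$ such that (i)~the quota constraint is slack in every relevant round, so that the quota method coincides with Jefferson's method on this instance, and (ii)~the two resulting picking sequences are $(1,2,1,3,1)$ and $(1,1,2,3,1)$, exactly as in Proposition~\ref{prop:weightmon-negative}. The same utility tableau then immediately produces an agent whose utility strictly decreases when her weight increases.

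The main obstacle is verifying the picking sequences round by round. Unlike a pure divisor method, at each step one must check both (a)~that the intended picker lies in the eligible set $U(\textbf{w},\textbf{t},k)=\{i:t_{i}<w_{i}k/\sum_{i'}w_{i'}\}$, and (b)~that she minimizes $(t_{i}+1)/w_{i}$ on that set. Because the normalizer $\sum_{i'}w_{i'}$ changes both when an agent is added and when a weight is raised, the quota thresholds shift in directions that interact nontrivially with the Jefferson-style priority ordering; in particular, for the population-monotonicity example it is crucial that agent~$1$ becomes ineligible at exactly round~$3$ in the three-agent instance (since $2 \not< 3\cdot 3/5$) while remaining eligible there in the two-agent instance (since $2 < 3\cdot 3/4$). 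Once the sequences are pinned down by such an analysis, assigning utilities that realize the monotonicity violations is a routine step.
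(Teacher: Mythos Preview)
Your proposal is correct. For weight-monotonicity you take essentially the same route as the paper: reuse the five-item, three-agent tableau from Proposition~\ref{prop:weightmon-negative} and pick weights so that the quota and Jefferson sequences coincide, yielding $(1,2,1,3,1)$ versus $(1,1,2,3,1)$. The paper does exactly this with $w_1=9/18,\ w_2=5/18,\ w_3=4/18$ and $w_1'=11/18$. One caveat: the quota constraint is \emph{not} literally ``slack in every relevant round'' for any weights producing these sequences (agent~$1$ is inevitably ineligible in some rounds after consecutive picks); what you actually need, and what the paper verifies round by round, is the weaker condition that the Jefferson-selected agent is always eligible. Your phrasing should be adjusted accordingly.

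For population-monotonicity you take a genuinely different and more economical route. The paper starts from four agents with weights $(1/2,1/6,1/6,1/6)$, three items, and adds a fifth agent of weight $1/3$; the sequence changes from $(1,2,1)$ to $(1,5,1)$, and agent~$1$'s utility goes \emph{up}. Your construction starts from only two agents with weights $(3,1)$ and five items, adding a third agent of weight~$1$; the sequence changes from $(1,1,1,2,1)$ to $(1,1,2,1,3)$, and agent~$2$'s pick moves earlier, so her utility can go up. Both exploit the same mechanism (adding an agent inflates the denominator $\sum_{i'} w_{i'}$, shrinking someone's quota and reshuffling eligibility), but yours needs fewer agents. Two small points to tighten: your round-$3$ selection in the two-agent case and round-$4$ in the three-agent case rest on ties ($3/3 = 1/1$), so you are implicitly invoking the paper's lexicographic tie-breaking; and you should spell out agent~$1$'s utilities (e.g., strictly decreasing over the items) as well as agent~$2$'s, since ``the item picked in round~$3$'' only becomes well-defined once all utilities are fixed.
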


\begin{proof}
We begin with population-monotonicity.
First, consider the first four agents and all three items in the following instance: 
\vspace{3mm}
\begin{center}
\begin{tabular}{ c|ccc } 
  & Item 1 & Item 2 & Item 3  \\
  \hline
 Agent 1 & $2$ & $1$ & $0$  \\
 Agents 2, 3, 4 & $0$ & $1$ & $0$ \\ \hdashline
 Agent 5 & $0$ & $0$ & $1$ \\
\end{tabular}
\end{center}
\vspace{3mm}
Let $w_1 = 1/2$ and $w_2 = w_3 = w_4 = 1/6$, and note that $\sum_{i=1}^4w_i = 1$. 
We claim that the picking sequence selected by the quota method is $(1,2,1)$. 
Indeed, in the first round all four agents are eligible and agent $1$ uniquely minimizes $1/w_i$. 
In the second round, all agents except agent $1$ are eligible and agent $2$ is selected by tie-breaking. 
In the last round, the set of eligible agents is $\{1,3,4\}$ and agent $1$ is selected since $4 = (t_1  + 1)/w_1< (t_3 + 1)/w_3 = 6$. The picking sequence $(1,2,1)$ induces a utility of $2$ for agent $1$. 

In the second instance, agent $5$ with $w_5 = 1/3$ is added to the previous instance. 
The weights of the agents now sum up to $4/3$. 
We claim that the picking sequence selected by the quota method in the modified example is $(1,5,1)$. 
Indeed, in the first round, all agents are eligible and agent $1$ is selected as it uniquely minimizes $1/w_i$. 
In the second round, all agents but agent $1$ are eligible and agent $5$ minimizes $1/w_i$. 
In the last round, the set of eligible agents is $\{1,2,3,4\}$ and agent $1$ is selected again since $4=(t_1 + 1)/w_1 < (t_2 + 1)/w_2 = 6$. 
Since the picking sequence $(1,5,1)$ induces a utility of $3$ for agent $1$, the example shows that the arrival of agent $5$ induced an increase in the utility of agent $1$. 
Hence, the quota method fails population-monotonicity. 

Next, we address weight-monotonicity.
Recall the example with five items and three agents from the proof of Proposition~\ref{prop:weightmon-negative}. 
We will use the same example to demonstrate that the quota method fails weight-monotonicity. 
To this end, we construct weights $w'_1 > w_1 > w_2 > w_3$ such that the picking sequences $\pi_1 = (1,2,1,3,1)$ and $\pi_2=(1,1,2,3,1)$ are selected by the quota method for the instances with weights $(w_1,w_2,w_3)$ and $(w'_1,w_2,w_3)$, respectively.
As we argued in Proposition~\ref{prop:weightmon-negative}, $\pi_2$ induces a smaller utility than $\pi_1$ for agent~$1$, yielding the desired violation of weight-monotonicity.

The weights of the first instance are $w_1=9/18$, $w_2=5/18$ and $w_3=4/18$; we go through the five rounds of the quota method to show that $\pi_1$ is selected. 
In the first round, all three agents are eligible and agent~$1$  uniquely minimizes $1/w_i$. 
In the second round, all but agent~$1$ are eligible and agent~$2$ minimizes $1/w_i$, so the second pick goes to agent $2$. In the third round, the set of eligible agents is $\{1,3\}$ and since $2/w_1 = 4 < 9/2 = 1/w_3$, agent~$1$ gets another pick. 
In the fourth round, agents~$2$ and $3$ are eligible and since $1/w_3 < 2/w_2$, agent $3$ is selected in this round. 
Finally, all agents are eligible and since $3/w_1 = 6 < 36/5=2/w_2$, the last pick goes to agent $1$.  
Hence $\pi_1$ is selected.

For the modified instance we increase the weight of agent~$1$ to $w'_1=11/18$; we show that the quota method selects $\pi_2$. 
In the first round, all agents are eligible and agent~$1$ uniquely minimizes $1/w_i$. 
Contrary to the first instance, agent~$1$ is still eligible for the second round; moreover, since $2/w_1<1/w_2$, the second pick also goes to agent~$1$.
In the third round, agents~$2$ and $3$ are eligible and since $1/w_2<1/w_3$, the third pick goes to agent~$2$. 
In the fourth round, agents~$1$ and $3$ are eligible and since $1/w_3<3/w_1$, the pick is assigned to agent~$3$. 
Finally, the set of eligible agents is $\{1,2\}$ and as $3/w_1<2/w_2$, the last pick goes to agent~$1$ again, meaning that $\pi_2$ is indeed chosen.
\end{proof}

As we observed in Section~\ref{sec:divisor}, all divisor methods are weight-consistent for any number of agents by definition, and therefore weight-monotone for two agents by Theorem~\ref{thm:weightmon-2}.
In contrast, we show in Proposition~\ref{prop:quota-weights} that the quota method is not weight-consistent.
However, for two agents, we prove that the method is weight-consistent and hence weight-monotone.

\begin{theorem}
\label{thm:quota-weight-consistency}
The quota method satisfies weight-consistency and weight-monotonicity when there are two agents.
\end{theorem}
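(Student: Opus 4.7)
The plan is to prove weight-consistency directly and then deduce weight-monotonicity from Theorem~\ref{thm:weightmon-2}. Fix $m$, weights $\textbf{w} = (w_1, w_2)$ and $\textbf{w}' = (w_1', w_2)$ with $w_1' > w_1$, and let $\pi, \pi'$ denote the corresponding quota-method picking sequences. Write $W = w_1 + w_2$ and $W' = w_1' + w_2$, and for each sequence $\sigma$ and each $k$, let $t_i^\sigma(k)$ denote the number of agent-$i$ picks among the first $k$ terms. The heart of the argument is the monotonicity claim $t_1^{\pi'}(k) \geq t_1^\pi(k)$ for every $k \in \{0, \ldots, m\}$.

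I would prove this claim by induction on $k$. The only case that requires work is when agent~$1$ picks in round $k$ of $\pi$ and $t_1^{\pi'}(k-1) = t_1^\pi(k-1)$; under this hypothesis it suffices to show that agent~$1$ also picks in round $k$ of $\pi'$. Let $t_1$ denote the common agent-$1$ count and $t_2 = k - 1 - t_1$. Two elementary inequalities drive the argument: $w_1/W < w_1'/W'$ (because $x \mapsto x/(x+w_2)$ is strictly increasing), and $w_2/W > w_2/W'$ (because $W < W'$). The first inequality transfers agent~$1$'s eligibility from $\pi$ to $\pi'$, i.e., $t_1 < w_1 k/W$ gives $t_1 < w_1' k/W'$. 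For the selection step I would split on whether agent~$2$ is eligible in round $k$ of $\pi$: if not, then $t_2 \geq w_2 k/W > w_2 k/W'$, so agent~$2$ is ineligible in $\pi'$ as well and agent~$1$ is selected by default; if so, the selection rule (together with ties resolved in favor of agent~$1$) yields $(t_1+1)/w_1 \leq (t_2+1)/w_2$, whence $(t_1+1)/w_1' < (t_1+1)/w_1 \leq (t_2+1)/w_2$, so agent~$1$ is strictly preferred in $\pi'$ regardless of whether agent~$2$ is eligible there.

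To obtain weight-consistency, I would translate the monotonicity claim into the required transformation. Denoting the positions of agent~$1$'s picks in $\pi$ by $a_1 < \cdots < a_p$ and those in $\pi'$ by $a_1' < \cdots < a_q'$, applying the claim with $k = a_j$ gives $a_j' \leq a_j$ for each $j \in [p]$, and applying it with $k = m$ gives $p \leq q$. Hence $\pi'$ arises from $\pi$ by first moving each of agent~$1$'s first $p$ picks from $a_j$ to $a_j'$, then inserting $q - p$ additional agent-$1$ picks at the positions prescribed by $a_{p+1}', \ldots, a_q'$, and finally trimming the resulting suffix back to length $m$. Weight-monotonicity then follows from Theorem~\ref{thm:weightmon-2}. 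The main obstacle I expect is the careful bookkeeping of strict versus non-strict inequalities in the inductive step, so that the transition from $W$ to $W'$ never inadvertently changes agent~$2$'s eligibility or flips a tie in the wrong direction; once these are handled, the final construction of moves, insertions, and trimming is routine.
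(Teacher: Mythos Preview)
Your proposal is correct and follows essentially the same approach as the paper: both prove by induction on the prefix length that agent~$1$ receives at least as many picks in $\pi'$ as in $\pi$, then convert this prefix-dominance into the required sequence of moves, insertions, and trimming, and finally invoke Theorem~\ref{thm:weightmon-2} for weight-monotonicity. The only cosmetic differences are that the paper organizes the inductive case split around the full eligibility set while you split on agent~$2$'s eligibility, and the paper phrases the second step via a perfect matching after appending picks at the end rather than via your direct position-by-position description; neither difference is substantive.
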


\begin{proof}
By Theorem~\ref{thm:weightmon-2}, it suffices to establish weight-consistency.
Let $\Pi$ be the family of picking sequences induced by the quota method and consider any $m,\textbf{w}$ such that both $\pi_{2,m,\textbf{w}}$ and $\pi_{2,m,\textbf{w}'}$ belong to $\Pi$, where $\textbf{w}' = (w'_1,w_2)$ with $w'_1 > w_1$; the assumption that $\textbf{w}'$ and $\textbf{w}$ differ in the first coordinate is without loss of generality. 
Since we will not vary the agents or items, we drop $n=2$ and $m$ from the subscripts. 
The proof of the theorem proceeds in two steps. 
In the first step, we show that for any two prefixes of the same length of $\pi_{\textbf{w}}$ and  $\pi_{\textbf{w}'}$, the number of picks that agent $1$ gets in the latter is at least as large as that in the former. 
In the second step, we show that this property suffices to guarantee that $\pi_{\textbf{w}'}$ can be constructed from $\pi_{\textbf{w}}$ by the type of modifications that are specified within the definition of weight-consistency. 
Hence, the two steps together complete the proof. 

For the first step, we show by induction that for any $k \in [m]$, the prefix of length $k$ of $\pi_{\textbf{w}'}$ contains at least as many picks of agent $1$ as the corresponding prefix of $\pi_{\textbf{w}}$. 
For the base case $k=1$, observe that both agents are eligible in the first round, and the method simply selects the agent with maximum weight with consistent tie-breaking. 
Hence, it is not possible that agent $1$ gets the first pick in $\pi_{\textbf{w}}$ while agent $2$ gets the first pick in $\pi_{\textbf{w}'}$. 
For the induction step, fix $k \in [m-1]$ and assume that the statement is true for all prefixes of length at most $k$; we show that the statement also holds for $k+1$. 
If the number of agent $1$'s picks in the $k$-length prefix of $\pi_{\textbf{w}'}$ is strictly greater than that in $\pi_{\textbf{w}}$, then the statement is true for the prefixes of length $k+1$ regardless of the $(k+1)$st picks. 
Hence, assume that these two numbers are equal, and call this quantity $t_1$.
Similarly, we define $t_2 = k - t_1$ to be the number of picks that agent~$2$ gets in the $k$-length prefix of $\pi_{\textbf{w}}$ (and $\pi_{\textbf{w}'}$). 
Observe that if agent~$1$ is eligible in the $(k+1)$st round of the original instance, then she is also eligible in the $(k+1)$st round of the modified instance---indeed, this is because $t_1 < \frac{w_1}{w_1 + w_2}\cdot(k+1)$ implies $t_1 < \frac{w'_1}{w'_1 + w_2}\cdot(k+1)$.
Conversely, if agent $2$ is eligible in the modified instance, then she is also eligible in the original instance---this is because $t_2 < \frac{w_2}{w'_1 + w_2}\cdot(k+1)$ implies $t_2 < \frac{w_2}{w_1 + w_2}\cdot(k+1)$. 
We consider a case distinction. 
If agent~$2$ is the only eligible agent in the original instance, then the desired property is satisfied no matter which agent is selected in the modified instance. 
If agent~$1$ is the only eligible agent in the original instance, then, by the above claims she is the only eligible agent in the modified instance and is hence selected. 
In the last case where both agents are eligible in the original instance, we have that either $\{1\}$ or $\{1,2\}$ is the set of eligible agents in the modified instance. 
While the former case is trivial again, consider the latter case. 
If agent~$1$ is selected in the original instance, we have that $(t_1 + 1)/w'_1 < (t_1 + 1) /w_1 \leq (t_2 + 1)/w_2$, and so agent~$1$ is also selected in the modified instance. 
This concludes the induction and the first step of the proof.   

We start the second step by defining $T_1$ and $T'_1$ as the number of times that agent~$1$ is allowed to pick in the entire sequence $\pi_{\textbf{w}}$ and $\pi_{\textbf{w}'}$, respectively. 
We show that we can modify $\pi_{\textbf{w}}$ by shifting turns of agent~$1$ to the front, inserting turns of agent $1$ and trimming, and arrive at the picking sequence $\pi_{\textbf{w}'}$. 
We start by inserting $T'_1 - T_1$ turns of agent $1$ at the end of $\pi_{\textbf{w}}$. 
Now, by the property of the prefixes which we derived in the first part of the proof, there exists a perfect matching between the positions of agent~$1$'s picks in $\pi_{\textbf{w}}$ and the positions of agent~$1$'s picks in $\pi_{\textbf{w}'}$ such that for each of the agent's pick in $\pi_{\textbf{w}}$, the corresponding pick in $\pi_{\textbf{w}'}$ does not occur later.
We now modify $\pi_{\textbf{w}}$ by going through the sequence from left to right and moving every pick of agent $1$ to its ``partner position'' from the perfect matching. 
By the described property, we only move picks towards the front of the sequence. 
Lastly, we trim $\pi_{\textbf{w}}$ so that it is of length $m$ again. 
Since all positions which are not filled with a pick of agent $1$ are automatically filled with a pick of agent $2$, $\pi_{\textbf{w}}$ now coincides with $\pi_{\textbf{w}'}$. 
This concludes the proof. 
\end{proof}

Next, we address fairness criteria for the quota method.

\begin{theorem}
\label{thm:quota-fairness}
The quota method fails WEF1 but satisfies WWEF1 and WPROP1.
\end{theorem}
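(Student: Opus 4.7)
The plan is to handle each of the three claims separately, relying on the characterizations and tools from Section~\ref{sec:pickseq}.

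For the failure of WEF1, I will exhibit a small counterexample: take $n=2$, $w_1=2$, $w_2=1$, $m=2$, and identical utilities (every agent values every item at $1$). A direct check of the quota method's definition shows that in both rounds both agents are eligible, that agent~$1$ strictly minimizes $(t_i+1)/w_i$ in round~$1$ (since $1/2 < 1$), and that agent~$1$ wins the tie at $1$ in round~$2$ by consistent tie-breaking. The induced allocation assigns both items to agent~$1$, and then $u_2(M_2)/w_2 = 0 < 1/2 = u_2(M_1\setminus B)/w_1$ for every $B\subseteq M_1$ with $|B|\le 1$, so WEF1 fails. (Equivalently, the prefix of length $2$ violates the condition of Theorem~\ref{thm:WEF1}.)

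For WPROP1, the plan is a one-line reduction. By definition, the quota method satisfies the lower-quota bound, and it is resource-consistent, so every prefix of its output picking sequence is itself a quota-method allocation on the same agents and weights with fewer items. Every such prefix therefore satisfies lower quota, and Proposition~\ref{prop:lower-quota-WPROP1} gives WPROP1 at once.

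For WWEF1, the plan is to apply Theorem~\ref{thm:WWEF1} prefix by prefix. Since both WWEF1 inequalities are weakly monotone in $t_i$ and are unaffected while $j$ does not pick, it suffices to check them right after each pick by agent~$j$ (for $t_j\ge 2$). Let round $k$ be such a pick and write $W = \sum_{i'}w_{i'}$. Eligibility of $j$ gives $t_j - 1 < w_j k/W$. I will split on whether agent~$i$ was eligible in round $k$. If $i$ was eligible, the quota method's picking rule gives $t_j/w_j \le (t_i+1)/w_i$, which is precisely the case-$2$ inequality $(t_i+1)w_j \ge t_j w_i$; a one-line rearrangement using $w_i\ge w_j$ then yields case~$1$. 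If $i$ was ineligible, then $t_i\ge w_i k/W$; combined with $t_j-1 < w_j k/W$ this gives $(t_j-1)/w_j < k/W \le t_i/w_i$, which is case~$1$, and plugging the same two bounds into $(t_i+1)w_j - t_j w_i$ produces a lower bound of $w_j - w_i \ge 0$ for case~$2$. The main subtlety lies in the ineligible case, where the picking rule alone says nothing about $i$ and one must fall back on the quota inequalities; once that is spelled out, the computation is mechanical.
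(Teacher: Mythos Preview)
Your proof is correct and, for the two positive claims (WPROP1 and WWEF1), follows essentially the same route as the paper: lower quota plus resource-consistency feeds into Proposition~\ref{prop:lower-quota-WPROP1}, and WWEF1 is verified via Theorem~\ref{thm:WWEF1} by splitting on whether $i$ was eligible at $j$'s pick, using the Jefferson-style inequality $t_j/w_j\le (t_i+1)/w_i$ in the eligible case and the two quota bounds $t_i\ge w_i k/W$, $t_j-1<w_j k/W$ in the ineligible case.

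The one genuine difference is your argument for the failure of WEF1. You exhibit a direct two-agent, two-item counterexample; the paper instead argues indirectly: once WPROP1 is established, the incompatibility result of \citet{ChakrabortyIgSu20} (no rule satisfies both WEF1 and WPROP1) immediately forces the quota method to fail WEF1. Your example is perfectly valid under the paper's lexicographic tie-breaking convention, but note that it hinges on the tie in round~$2$ (both ratios equal $1$); a slight perturbation such as $w_1=2+\varepsilon$ removes the tie and makes the example robust. The paper's incompatibility argument sidesteps tie-breaking altogether and is a bit slicker, but yours has the advantage of being self-contained.
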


\begin{proof}
Since the quota method satisfies lower quota, Proposition~\ref{prop:lower-quota-WPROP1} implies that it satisfies WPROP1.
On the other hand, no rule can simultaneously satisfy WPROP1 and WEF1 \citep{ChakrabortyIgSu20}, so the method fails WEF1.

We now show that the quota method satisfies WWEF1.
Fix a pair of agents $i,j$.
It suffices to show that every time $j$ picks an item starting from her second pick, the conditions in Theorem~\ref{thm:WWEF1} are fulfilled. 
If $i$ was eligible just before $j$'s pick, then after $j$'s pick it holds that $\frac{t_j}{w_j}\le\frac{t_i+1}{w_i}$.
In this case, we can proceed as in the backward direction of Theorem~\ref{thm:divisor-WWEF1-char} with $f(t) = t+1$.
Suppose therefore that $i$ was not eligible just before $j$'s pick, and this pick is the $k$-th pick of the sequence.
By the eligibility criterion, we have $t_i\ge \frac{w_i\cdot k}{\sum_{i'\in N}w_{i'}}$.
On the contrary, $j$ was eligible for this pick, so $t_j-1 <  \frac{w_j\cdot k}{\sum_{i'\in N}w_{i'}}$.
Consider two cases as in the conditions of Theorem~\ref{thm:WWEF1} with $t_j\ge 2$.

\underline{Case~1}: $w_i\ge w_j$.
Dividing the first inequality above by the second immediately yields $\frac{w_i}{w_j}\le\frac{t_i}{t_j-1}$.

\underline{Case~2}: $w_i\le w_j$.
If $t_i\ge t_j-1$, then $\frac{w_i}{w_j}\le 1\le\frac{t_i+1}{t_j}$.
Otherwise, $t_i < t_j-1$, and so $\frac{w_i}{w_j}\le\frac{t_i}{t_j-1}\le\frac{t_i+1}{t_j}$, where the first inequality is derived as in Case~1.
\end{proof}

\section{Maximum (Weighted) Nash Welfare}
\label{sec:MNW}

In this section, we turn our attention from picking sequences to another important fair allocation rule.
Given any instance in the unweighted setting, the \emph{maximum Nash welfare (MNW)} solution chooses an allocation that maximizes the \emph{Nash welfare}, i.e., the product of the agents' utilities.
MNW is known to satisfy strong fairness guarantees including EF1 \citep{CaragiannisKuMo19,HalpernPrPs20}.
When weights are present, a natural generalization called \emph{maximum weighted Nash welfare (MWNW)}, which maximizes\footnote{Ties can be broken arbitrarily unless the maximum weighted Nash welfare is $0$ (which occurs, for example, when $m<n$). In this exceptional case, we choose a maximum subset of agents who can be given positive utilities simultaneously, breaking ties in a consistent manner among all such subsets independently of the weights (e.g., lexicographically with respect to the agent indices). We then pick an allocation maximizing the weighted Nash welfare of the agents in this subset.} 
the weighted product $\prod_{i=1}^n u_i(M_i)^{w_i}$, satisfies WWEF1 but not WEF1 \citep{ChakrabortyIgSu20}.

\citet{SegalhaleviSz19} showed that for \emph{divisible} items in the unweighted setting, MNW satisfies both resource- and population-monotonicity.
It is therefore rather surprising that the same is not true for indivisible items.

\begin{proposition}
\label{prop:MNW-negative}
In the unweighted setting, MNW satisfies neither resource-monotonicity nor population-monotonicity.
\end{proposition}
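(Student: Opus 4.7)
The plan is to prove both claims by exhibiting explicit counterexample instances in the unweighted setting, where MWNW reduces to the standard maximum Nash welfare rule. In each case, I would construct an instance on a handful of agents and items, compute (by direct enumeration of partitions) the MNW allocation, then modify the instance either by appending one extra item (for resource-monotonicity) or by introducing one extra agent (for population-monotonicity), recompute MNW, and compare the utilities an existing agent receives before and after.

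For resource-monotonicity, I expect to need at least three agents so that the reshuffling of items across agents can be non-trivial. The idea is to tune the original utility profile so that MNW assigns some designated agent $i$ a bundle $M_i$ of value $u_i(M_i) = U$, and then to choose the values of the newly added item so that in the augmented instance the Nash product is maximized by a substantially different partition in which $i$'s new bundle is worth strictly less than $U$. Because the Nash welfare is a \emph{product} of the agents' utilities, a single new item can push the optimum toward a partition that sacrifices incumbent $i$'s share in order to boost the product as a whole, producing the required monotonicity violation.

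For population-monotonicity, the construction is symmetric: start with an instance in which MNW gives a designated agent a bundle of value $V$, then add a new agent with carefully tuned utilities so that the newly added ``consumer'' reshapes the Nash product optimum and the resulting MNW allocation assigns the designated agent a bundle of value strictly greater than $V$. The new agent effectively redistributes the pressure exerted on the other agents in the product, and one crafts the instance so that this redistribution forces an incumbent to gain rather than to lose, contradicting the ``no gain from new arrivals'' intuition behind population-monotonicity.

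Verifying each counterexample reduces to enumerating the (few) partitions of the items among the agents and comparing their Nash welfare values, which is routine but tedious on instances of the small size we need. The main technical obstacle I anticipate is robustness against tie-breaking: ideally the MNW maximizer is unique on both sides of each comparison, so that no choice of tie-breaking rule can recover monotonicity; otherwise, recalling the paper's stipulation that ties may be broken arbitrarily, it suffices to exhibit a single pair of MNW-optimal allocations (one before, one after) whose utilities witness the violation for the designated agent.
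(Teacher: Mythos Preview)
Your proposal is correct and follows essentially the same approach as the paper: explicit small counterexamples, verified by enumerating partitions, with unique MNW optima on both sides so that tie-breaking is irrelevant. One minor correction: you write that you ``expect to need at least three agents'' for the resource-monotonicity counterexample, but the paper achieves it with only two agents and four items (agent~$1$'s utility drops from~$5$ to~$4$ when the fourth item is added), so the reshuffling is already non-trivial with two agents.
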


\begin{proof}
For resource-monotonicity, consider four items and two agents, both with weight~$1$, with the following utilities:
\vspace{2mm}
\begin{center}
\begin{tabular}{ c|ccc:c } 
  & Item 1 & Item 2 & Item 3 & Item 4 \\
  \hline
 Agent 1 & $3$ & $2$ & $2$ & $2$ \\
 Agent 2 & $2$ & $2$ & $1$ & $1$ \\
\end{tabular}
\end{center}
\vspace{2mm}
With only the first three items available, the unique MNW allocation gives items~$1$ and $3$ to agent~$1$ and item~$2$ to agent~$2$, resulting in a utility of $5$ for agent~$1$.
However, when we add item~$4$, MNW uniquely allocates items~$3$ and $4$ to agent~$1$ and items~$1$ and $2$ to agent~$2$.
So agent~$1$'s utility drops to $4$, violating resource-monotonicity.

For population-monotonicity, consider four items and three agents, all with weight~$1$, with the following utilities:
\vspace{2mm}
\begin{center}
\begin{tabular}{ c|cccc } 
  & Item 1 & Item 2 & Item 3 & Item 4 \\
  \hline
 Agent 1 & $2$ & $3$ & $3$ & $2$ \\
 Agent 2 & $1$ & $2$ & $1$ & $3$ \\ \hdashline
 Agent 3 & $2$ & $1$ & $1$ & $3$ \\
\end{tabular}
\end{center}
\vspace{2mm}
When only the first two agents are present, the unique MNW allocation gives items~$1$ and $3$ to agent~$1$ and items~$2$ and $4$ to agent~$2$, resulting in a utility of $5$ for agent~1.
However, when agent~$3$ joins, MNW uniquely allocates items~$2$ and $3$ to agent~$1$, item~$4$ to agent~$2$, and item~$1$ to agent~$3$.
So agent~$1$'s utility increases to $6$, violating population-monotonicity.
\end{proof}

On the other hand, we prove that MWNW fulfills weight-monotonicity, making it the only rule among the ones we consider in this paper to do so.

\begin{theorem}
\label{thm:MWNW-weightmon}
MWNW satisfies weight-monotonicity.
\end{theorem}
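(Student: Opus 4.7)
The plan is to compare, via the optimality that defines MWNW, the allocation $\mathcal{M} = (M_1, \dots, M_n)$ chosen under the original weight vector $\textbf{w}$ with the allocation $\mathcal{M}' = (M_1', \dots, M_n')$ chosen under the modified vector $\textbf{w}'$, which differs from $\textbf{w}$ only in that $w_i$ is raised to $w_i' > w_i$. The goal is to show $u_i(M_i') \geq u_i(M_i)$.

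First I would dispose of the degenerate case in which the maximum weighted Nash welfare is $0$. By the tie-breaking convention stated in the footnote accompanying the definition of MWNW, the rule first selects the inclusion-maximum subset $S$ of agents who can simultaneously receive positive utility, and this choice depends only on the utility functions, not on the weights. Hence $S$ is identical under $\textbf{w}$ and $\textbf{w}'$. If $i \notin S$, then $u_i(M_i) = u_i(M_i') = 0$ and weight-monotonicity holds trivially. Otherwise, the problem reduces to maximizing $\prod_{j \in S} u_j(M_j)^{w_j}$ over allocations in which every agent in $S$ receives positive utility, so we may proceed as in the positive case below with $N$ replaced by $S$.

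Assuming now that every agent receives positive utility under both $\mathcal{M}$ and $\mathcal{M}'$, optimality of $\mathcal{M}$ under $\textbf{w}$ and of $\mathcal{M}'$ under $\textbf{w}'$ yield
\[
\prod_{j \in N} u_j(M_j)^{w_j} \;\geq\; \prod_{j \in N} u_j(M_j')^{w_j} \qquad \text{and} \qquad \prod_{j \in N} u_j(M_j')^{w_j'} \;\geq\; \prod_{j \in N} u_j(M_j)^{w_j'}.
\]
Multiplying these two inequalities (all factors are positive) and cancelling every factor with index $j \neq i$, for which $w_j = w_j'$, leaves
\[
u_i(M_i')^{w_i' - w_i} \;\geq\; u_i(M_i)^{w_i' - w_i}.
\]
Since $w_i' - w_i > 0$, this is equivalent to $u_i(M_i') \geq u_i(M_i)$, which is exactly weight-monotonicity for agent $i$. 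Note that the same computation works for any choice among the MWNW allocations, so arbitrary tie-breaking in the positive case causes no issue.

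The argument is short: the product inequality does essentially all of the work, exploiting the fact that only the exponent on a single factor changes between the two optimization problems. The only real subtlety is verifying that the tie-breaking convention in the zero-welfare case keeps the set $S$ of supported agents invariant under a weight change, so that the corner case reduces cleanly to the positive subproblem; beyond this bookkeeping I do not anticipate any further obstacle.
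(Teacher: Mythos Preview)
Your proof is correct and follows essentially the same approach as the paper's. Both arguments exploit the optimality of $\mathcal{M}$ under $\textbf{w}$ and of $\mathcal{M}'$ under $\textbf{w}'$, together with the fact that only the exponent on the $i$-th factor changes; the paper phrases this as a ratio comparison inside a contradiction argument, while you multiply the two optimality inequalities directly and cancel, but the underlying computation is identical. Your treatment of the zero-welfare case via the weight-independent choice of the supported set $S$ also matches the paper's handling.
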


\begin{proof}
Fix an instance with $n,m,\textbf{w}=(w_1,\dots,w_n)$, and consider a modified instance with $w_i' > w_i$ for some $i \in N$.
First, assume that the maximum weighted Nash welfare is positive.
Suppose that MWNW chooses the allocation $\mathcal{M} = (M_1,\dots,M_n)$ and $\mathcal{M}' = (M_1',\dots,M_n')$ in the original and modified instance, respectively, and assume for contradiction that $u_i(M_i) > u_i(M_i')$.

In the original instance, the weighted Nash welfare of $\mathcal{M}$ is at least that of $\mathcal{M}'$.
The ratio of the weighted Nash welfare of $\mathcal{M}$ between the modified instance and the original instance is $u_i(M_i)^{w_i'-w_i}$, while the corresponding ratio of $\mathcal{M}'$ is $u_i(M_i')^{w_i'-w_i} < u_i(M_i)^{w_i'-w_i}$.
Since the maximum weighted Nash welfare is positive, in the modified instance, the weighted Nash welfare of $\mathcal{M}$ is strictly greater than that of $\mathcal{M}'$.
This means that MWNW should not have chosen $\mathcal{M}'$ in the modified instance, a contradiction.

Consider now the case where the maximum weighted Nash welfare is $0$.
Observe that since we break ties among the maximum sets of agents receiving positive utilities in a consistent manner, this set is the same before and after agent~$i$'s weight increases.
The conclusion follows by applying a similar argument as above to the utilities of the agents in this set.
\end{proof}

As mentioned, MWNW is known to satisfy WWEF1 but not WEF1.
To complete the picture, we show that it does not satisfy WPROP1.
This contrasts with the unweighted setting, where MNW satisfies PROP1 (since EF1 implies PROP1).

\begin{proposition}
\label{prop:MWNW-WPROP1}
MWNW does not satisfy WPROP1.
\end{proposition}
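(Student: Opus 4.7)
The plan is to exhibit an explicit counterexample on a very small instance. Take $n=3$ agents, $m=3$ items, weights $(w_1,w_2,w_3)=(1,1,W)$ for some integer $W\geq 5$ (e.g.\ $W=100$), and utilities $u_i(j)=1$ for every agent $i$ and every item $j$, so that $u_i(M)=3$ for all $i$.

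First I identify what MWNW must output. Since every utility is positive, any allocation in which some agent receives the empty bundle yields weighted Nash welfare $0$. Any allocation in which every agent receives at least one item yields weighted Nash welfare $1^{1}\cdot 1^{1}\cdot 1^{W}=1>0$. With $m=n=3$, the only way each agent receives at least one item is to give each agent exactly one item, and all such allocations have the same (maximum) weighted Nash welfare $1$. Hence MWNW selects one of them, and the outcome is unique up to tie-breaking; crucially, agent $3$ receives exactly one item and thus has $u_3(M_3)=1$ regardless of tie-breaking (so the special rule from the footnote is not needed, because MWNW is positive here).

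Next I check WPROP1 for agent $3$. Her proportional share is $\frac{w_3}{w_1+w_2+w_3}\cdot u_3(M)=\frac{3W}{W+2}$, and for any $B\subseteq M\setminus M_3$ with $|B|\le 1$ we have $u_3(B)\le 1$. The WPROP1 inequality therefore reduces to
\[
1 \;\ge\; \frac{3W}{W+2}-1,
\]
equivalently $\frac{3W}{W+2}\le 2$, i.e.\ $W\le 4$. Choosing $W=5$ (or $W=100$) makes the inequality false, so WPROP1 fails for agent $3$.

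There is essentially no hard step: the only thing one has to be careful about is the argument that MWNW is forced into the ``one item per agent'' configuration, which follows from the observation that any alternative partition zeros out some factor in $\prod_i u_i(M_i)^{w_i}$. All the rest is a direct arithmetic verification that the heavy agent's share, minus the value of one outside item, still exceeds her actual utility.
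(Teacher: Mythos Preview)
Your proof is correct and follows essentially the same approach as the paper: both take $m=n$ with every agent having positive utility for every item, so that MWNW is forced to give each agent exactly one item, and then choose one agent's weight large enough that her proportional share minus one item's value still exceeds~$1$. The paper simply refers back to the instance used in the proof of Theorem~\ref{thm:divisor-WPROP1} (for Dean's and Hill's methods), while you spell out the concrete case $n=3$ with weights $(1,1,W)$ and $W\ge 5$; conceptually the two arguments are the same.
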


\begin{proof}
We use the same instance as for Dean's and Hill's methods in Theorem~\ref{thm:divisor-WPROP1}.
In order to make the weighted Nash welfare positive, MWNW assigns exactly one item to every agent and thus fails WPROP1.
\end{proof}

\section{Conclusion and Future Work}

In this paper, we have thoroughly investigated picking sequences derived from common apportionment methods, including the five traditional divisor methods and the quota method, in relation to fairness and monotonicity properties.
Our results indicate that picking sequences based on divisor methods provide strong guarantees in weighted fair division scenarios such as allocating ministries to political parties, with Adams' and Jefferson's methods standing out for fulfilling WEF1 and WPROP1, respectively.
Since Jefferson's method tends to favor large parties while Adams' often benefits smaller ones (see Section~\ref{subsec:relatedwork}), an interesting question is whether there are compelling fairness notions in addition to WWEF1 that the other three traditional divisor methods, which intuitively lie somewhere in the middle, satisfy.

A natural direction for future work is to construct rules that exhibit a stronger axiomatic behavior than the ones considered in this paper, or to prove that such rules do not exist.
Satisfying the three monotonicity properties simultaneously is trivial: one can always allocate all items to a fixed agent, or ignore the weights and use the round-robin algorithm with a fixed ordering.
However, this will not result in a fair allocation with respect to the weights.
Does there exist a rule fulfilling the three monotonicity properties along with, say, WWEF1?
Other notions that one could consider include strategyproofness and Pareto optimality---even in the unweighted setting, we are not aware of any rule that simultaneously fulfills EF1, Pareto optimality, and resource- or population-monotonicity.\footnote{We remark here that no rule that relies only on agents' rankings over items, including all picking sequences, can simultaneously satisfy EF1 and Pareto optimality.
To see this, consider six items and two agents who rank them in the order $1,2,\dots,6$ from best to worst.
In order to guarantee EF1, each agent must get one item from each of the sets $\{1,2\},\{3,4\},\{5,6\}$. 
However, any such allocation is not always Pareto optimal. 
Indeed, it could be that agent~$1$ has utilities $100, 99, 3, 2, 1, 0$ for the items, while agent~$2$ has utilities $100, 99, 98, 97, 96, 95$. 
Any allocation satisfying the above condition gives agent~$1$ exactly one of the first two items and agent~$2$ exactly three items, but this is Pareto dominated by the allocation that gives agent~$1$ the first two items and agent~$2$ the last four items.}
Moreover, while we showed that the five traditional divisor methods fail weight-monotonicity, it would be useful to determine whether other (perhaps non-divisor) apportionment methods satisfy this property.
Another avenue is to generalize the \emph{upper quota} notion from apportionment to the item allocation setting---this could result in a notion that disallows agents from receiving too much of the resource, to make sure that other agents would not complain.
Overall, our work leaves many intriguing combinations of properties to investigate, which we hope will lead to more interesting rules for fair resource allocation.

\section*{Acknowledgments}

This work was partially supported by the Deutsche
Forschungsgemeinschaft under grant BR 4744/2-1 and by an NUS Start-up Grant.
We would like to thank Dominik Peters for helpful discussion and the anonymous reviewers of the 30th International Joint Conference on Artificial Intelligence (IJCAI 2021), the 3rd Games, Agents, and Incentives Workshop (GAIW 2021), and Artificial Intelligence Journal for valuable feedback.

\bibliographystyle{plainnat}
\bibliography{main}

\begin{thebibliography}{40}
\providecommand{\natexlab}[1]{#1}
\providecommand{\url}[1]{\texttt{#1}}
\expandafter\ifx\csname urlstyle\endcsname\relax
  \providecommand{\doi}[1]{doi: #1}\else
  \providecommand{\doi}{doi: \begingroup \urlstyle{rm}\Url}\fi

\bibitem[Amanatidis et~al.(2021)Amanatidis, Birmpas, Filos-Ratsikas, Hollender,
  and Voudouris]{AmanatidisBiFi21}
Georgios Amanatidis, Georgios Birmpas, Aris Filos-Ratsikas, Alexandros
  Hollender, and Alexandros~A. Voudouris.
\newblock Maximum {N}ash welfare and other stories about {EFX}.
\newblock \emph{Theoretical Computer Science}, 863:\penalty0 69--85, 2021.

\bibitem[Aziz(2020)]{Aziz20}
Haris Aziz.
\newblock Developments in multi-agent fair allocation.
\newblock In \emph{Proceedings of the 34th AAAI Conference on Artificial
  Intelligence (AAAI)}, pages 13563--13568, 2020.

\bibitem[Aziz et~al.(2015)Aziz, Walsh, and Xia]{AzizWaXi15}
Haris Aziz, Toby Walsh, and Lirong Xia.
\newblock Possible and necessary allocations via sequential mechanisms.
\newblock In \emph{Proceedings of the 24th International Joint Conference on
  Artificial Intelligence (IJCAI)}, pages 468--474, 2015.

\bibitem[Aziz et~al.(2019{\natexlab{a}})Aziz, Caragiannis, Igarashi, and
  Walsh]{AzizCaIg19}
Haris Aziz, Ioannis Caragiannis, Ayumi Igarashi, and Toby Walsh.
\newblock Fair allocation of indivisible goods and chores.
\newblock In \emph{Proceedings of the 28th International Joint Conference on
  Artificial Intelligence (IJCAI)}, pages 53--59, 2019{\natexlab{a}}.

\bibitem[Aziz et~al.(2019{\natexlab{b}})Aziz, Chan, and Li]{AzizChLi19}
Haris Aziz, Hau Chan, and Bo~Li.
\newblock Weighted maxmin fair share allocation of indivisible chores.
\newblock In \emph{Proceedings of the 28th International Joint Conference on
  Artificial Intelligence (IJCAI)}, pages 46--52, 2019{\natexlab{b}}.

\bibitem[Aziz et~al.(2020)Aziz, Moulin, and Sandomirskiy]{AzizMoSa20}
Haris Aziz, Herv\'{e} Moulin, and Fedor Sandomirskiy.
\newblock A polynomial-time algorithm for computing a {P}areto optimal and
  almost proportional allocation.
\newblock \emph{Operations Research Letters}, 48\penalty0 (5):\penalty0
  573--578, 2020.

\bibitem[Babaioff et~al.(2021{\natexlab{a}})Babaioff, Ezra, and
  Feige]{BabaioffEzFe21}
Moshe Babaioff, Tomer Ezra, and Uriel Feige.
\newblock Fair-share allocations for agents with arbitrary entitlements.
\newblock In \emph{Proceedings of the 22nd ACM Conference on Economics and
  Computation (EC)}, page 127, 2021{\natexlab{a}}.

\bibitem[Babaioff et~al.(2021{\natexlab{b}})Babaioff, Nisan, and
  Talgam-Cohen]{BabaioffNiTa21}
Moshe Babaioff, Noam Nisan, and Inbal Talgam-Cohen.
\newblock Competitive equilibrium with indivisible goods and generic budgets.
\newblock \emph{Mathematics of Operations Research}, 46\penalty0 (1):\penalty0
  382--403, 2021{\natexlab{b}}.

\bibitem[Balinski and Young(1975)]{BalinskiYo75}
Michel~L. Balinski and H.~Peyton Young.
\newblock The quota method of apportionment.
\newblock \emph{American Mathematical Monthly}, 82\penalty0 (7):\penalty0
  701--730, 1975.

\bibitem[Balinski and Young(2001)]{BalinskiYo01}
Michel~L. Balinski and H.~Peyton Young.
\newblock \emph{Fair Representation: Meeting the Ideal of One Man, One Vote}.
\newblock Brookings Institution Press, 2001.

\bibitem[Beynier et~al.(2019)Beynier, Bouveret, Lema\^{i}tre, Maudet, Rey, and
  Shams]{BeynierBoLe19}
Aur\'{e}lie Beynier, Sylvain Bouveret, Michel Lema\^{i}tre, Nicolas Maudet,
  Simon Rey, and Parham Shams.
\newblock Efficiency, sequenceability and deal-optimality in fair division of
  indivisible goods.
\newblock In \emph{Proceedings of the 18th International Conference on
  Autonomous Agents and Multiagent Systems (AAMAS)}, pages 900--908, 2019.

\bibitem[Bouveret and Lang(2011)]{BouveretLa11}
Sylvain Bouveret and J\'{e}r\^{o}me Lang.
\newblock A general elicitation-free protocol for allocating indivisible goods.
\newblock In \emph{Proceedings of the 22nd International Joint Conference on
  Artificial Intelligence (IJCAI)}, pages 73--78, 2011.

\bibitem[Bouveret and Lang(2014)]{BouveretLa14}
Sylvain Bouveret and J\'{e}r\^{o}me Lang.
\newblock Manipulating picking sequences.
\newblock In \emph{Proceedings of the 21st European Conference on Artificial
  Intelligence (ECAI)}, pages 141--146, 2014.

\bibitem[Bouveret et~al.(2016)Bouveret, Chevaleyre, and Maudet]{BouveretChMa16}
Sylvain Bouveret, Yann Chevaleyre, and Nicolas Maudet.
\newblock Fair allocation of indivisible goods.
\newblock In Felix Brandt, Vincent Conitzer, Ulle Endriss, J{\'e}r{\^o}me Lang,
  and Ariel~D. Procaccia, editors, \emph{Handbook of Computational Social
  Choice}, chapter~12, pages 284--310. Cambridge University Press, 2016.

\bibitem[Brams and Kaplan(2004)]{BramsKa04}
Steven~J. Brams and Todd~R. Kaplan.
\newblock Dividing the indivisible: Procedures for allocating cabinet
  ministries to political parties in a parliamentary system.
\newblock \emph{Journal of Theoretical Politics}, 16\penalty0 (2):\penalty0
  143--173, 2004.

\bibitem[Brams and Taylor(1996)]{BramsTa96}
Steven~J. Brams and Alan~D. Taylor.
\newblock \emph{Fair Division: From Cake-Cutting to Dispute Resolution}.
\newblock Cambridge University Press, 1996.

\bibitem[Bredereck et~al.(2020)Bredereck, Faliszewski, Furdyna, Kaczmarczyk,
  and Lackner]{BredereckFaFu20}
Robert Bredereck, Piotr Faliszewski, Michal Furdyna, Andrzej Kaczmarczyk, and
  Martin Lackner.
\newblock Strategic campaign management in apportionment elections.
\newblock In \emph{Proceedings of the 29th International Joint Conference on
  Artificial Intelligence (IJCAI)}, pages 103--109, 2020.

\bibitem[Brill et~al.(2017)Brill, Laslier, and Skowron]{BrillLaSk17}
Markus Brill, Jean-Fran\c{c}ois Laslier, and Piotr Skowron.
\newblock Multiwinner approval rules as apportionment methods.
\newblock In \emph{Proceedings of the 31st AAAI Conference on Artificial
  Intelligence (AAAI)}, pages 414--420, 2017.

\bibitem[Brill et~al.(2020)Brill, G\"{o}lz, Peters, Schmidt-Kraepelin, and
  Wilker]{BrillGoPe20}
Markus Brill, Paul G\"{o}lz, Dominik Peters, Ulrike Schmidt-Kraepelin, and Kai
  Wilker.
\newblock Approval-based apportionment.
\newblock In \emph{Proceedings of the 34th AAAI Conference on Artificial
  Intelligence (AAAI)}, pages 1854--1861, 2020.

\bibitem[Budish(2011)]{Budish11}
Eric Budish.
\newblock The combinatorial assignment problem: Approximate competitive
  equilibrium from equal incomes.
\newblock \emph{Journal of Political Economy}, 119\penalty0 (6):\penalty0
  1061--1103, 2011.

\bibitem[Caragiannis et~al.(2019)Caragiannis, Kurokawa, Moulin, Procaccia,
  Shah, and Wang]{CaragiannisKuMo19}
Ioannis Caragiannis, David Kurokawa, Herv\'{e} Moulin, Ariel~D. Procaccia,
  Nisarg Shah, and Junxing Wang.
\newblock The unreasonable fairness of maximum {N}ash welfare.
\newblock \emph{ACM Transactions on Economics and Computation}, 7\penalty0
  (3):\penalty0 12:1--12:32, 2019.

\bibitem[Chakraborty et~al.(2020)Chakraborty, Igarashi, Suksompong, and
  Zick]{ChakrabortyIgSu20}
Mithun Chakraborty, Ayumi Igarashi, Warut Suksompong, and Yair Zick.
\newblock Weighted envy-freeness in indivisible item allocation.
\newblock In \emph{Proceedings of the 19th International Conference on
  Autonomous Agents and Multiagent Systems (AAMAS)}, pages 231--239, 2020.

\bibitem[Conitzer et~al.(2017)Conitzer, Freeman, and Shah]{ConitzerFrSh17}
Vincent Conitzer, Rupert Freeman, and Nisarg Shah.
\newblock Fair public decision making.
\newblock In \emph{Proceedings of the 18th ACM Conference on Economics and
  Computation (EC)}, pages 629--646, 2017.

\bibitem[Crew et~al.(2020)Crew, Narayanan, and Spirkl]{CrewNaSp20}
Logan Crew, Bhargav Narayanan, and Sophie Spirkl.
\newblock Disproportionate division.
\newblock \emph{Bulletin of the London Mathematical Society}, 52\penalty0
  (5):\penalty0 885--890, 2020.

\bibitem[Cseh and Fleiner(2020)]{CsehFl20}
\'{A}gnes Cseh and Tam\'{a}s Fleiner.
\newblock The complexity of cake cutting with unequal shares.
\newblock \emph{ACM Transactions on Algorithms}, 16\penalty0 (3):\penalty0
  29:1--29:21, 2020.

\bibitem[Farhadi et~al.(2019)Farhadi, Ghodsi, Hajiaghayi, Lahaie, Pennock,
  Seddighin, Seddighin, and Yami]{FarhadiGhHa19}
Alireza Farhadi, Mohammad Ghodsi, MohammadTaghi Hajiaghayi, Sebastien Lahaie,
  David Pennock, Masoud Seddighin, Saeed Seddighin, and Hadi Yami.
\newblock Fair allocation of indivisible goods to asymmetric agents.
\newblock \emph{Journal of Artificial Intelligence Research}, 64:\penalty0
  1--20, 2019.

\bibitem[Halpern et~al.(2020)Halpern, Procaccia, Psomas, and
  Shah]{HalpernPrPs20}
Daniel Halpern, Ariel~D. Procaccia, Alexandros Psomas, and Nisarg Shah.
\newblock Fair division with binary valuations: One rule to rule them all.
\newblock In \emph{Proceedings of the 16th Conference on Web and Internet
  Economics (WINE)}, pages 370--383, 2020.

\bibitem[Kurokawa et~al.(2018)Kurokawa, Procaccia, and Wang]{KurokawaPrWa18}
David Kurokawa, Ariel~D. Procaccia, and Junxing Wang.
\newblock Fair enough: Guaranteeing approximate maximin shares.
\newblock \emph{Journal of the ACM}, 64\penalty0 (2):\penalty0 8:1--8:27, 2018.

\bibitem[Lipton et~al.(2004)Lipton, Markakis, Mossel, and Saberi]{LiptonMaMo04}
Richard~J. Lipton, Evangelos Markakis, Elchanan Mossel, and Amin Saberi.
\newblock On approximately fair allocations of indivisible goods.
\newblock In \emph{Proceedings of the 5th ACM Conference on Economics and
  Computation (EC)}, pages 125--131, 2004.

\bibitem[Markakis(2017)]{Markakis17}
Evangelos Markakis.
\newblock Approximation algorithms and hardness results for fair division.
\newblock In Ulle Endriss, editor, \emph{Trends in Computational Social
  Choice}, chapter~12, pages 231--247. AI Access, 2017.

\bibitem[Moulin(2003)]{Moulin03}
Herv\'{e} Moulin.
\newblock \emph{Fair Division and Collective Welfare}.
\newblock MIT Press, 2003.

\bibitem[O'Leary et~al.(2005)O'Leary, Grofman, and Elklit]{OlearyGrEl05}
Brendan O'Leary, Bernard Grofman, and J\o{}rgen Elklit.
\newblock Divisor methods for sequential portfolio allocation in multi-party
  executive bodies: Evidence from {N}orthern {I}reland and {D}enmark.
\newblock \emph{American Journal of Political Science}, 49\penalty0
  (1):\penalty0 198--211, 2005.

\bibitem[Pukelsheim(2014)]{Pukelsheim14}
Friedrich Pukelsheim.
\newblock \emph{Proportional Representation: Apportionment Methods and Their
  Applications}.
\newblock Springer, 2014.

\bibitem[Segal-Halevi(2019)]{Segalhalevi19}
Erel Segal-Halevi.
\newblock Cake-cutting with different entitlements: How many cuts are needed?
\newblock \emph{Journal of Mathematical Analysis and Applications},
  480\penalty0 (1):\penalty0 123382, 2019.

\bibitem[Segal-Halevi and Sziklai(2018)]{SegalhaleviSz18}
Erel Segal-Halevi and Bal\'{a}zs~R. Sziklai.
\newblock Resource-monotonicity and population-monotonicity in connected
  cake-cutting.
\newblock \emph{Mathematical Social Sciences}, 95:\penalty0 19--30, 2018.

\bibitem[Segal-Halevi and Sziklai(2019)]{SegalhaleviSz19}
Erel Segal-Halevi and Bal\'{a}zs~R. Sziklai.
\newblock Monotonicity and competitive equilibrium in cake-cutting.
\newblock \emph{Economic Theory}, 68\penalty0 (2):\penalty0 363--401, 2019.

\bibitem[Tominaga et~al.(2016)Tominaga, Todo, and Yokoo]{TominagaToYo16}
Yuto Tominaga, Taiki Todo, and Makoto Yokoo.
\newblock Manipulations in two-agent sequential allocation with random
  sequences.
\newblock In \emph{Proceedings of the 15th International Conference on
  Autonomous Agents and Multiagent Systems (AAMAS)}, pages 141--149, 2016.

\bibitem[Walsh(2020)]{Walsh20}
Toby Walsh.
\newblock Fair division: the computer scientist’s perspective.
\newblock In \emph{Proceedings of the 29th International Joint Conference on
  Artificial Intelligence (IJCAI)}, pages 4966--4972, 2020.

\bibitem[Wintein and Heilmann(2018)]{WinteinHe18}
Stefan Wintein and Conrad Heilmann.
\newblock Dividing the indivisible: Apportionment and philosophical theories of
  fairness.
\newblock \emph{Politics, Philosophy \& Economics}, 17\penalty0 (1):\penalty0
  51--74, 2018.

\bibitem[Xiao and Ling(2020)]{XiaoLi20}
Mingyu Xiao and Jiaxing Ling.
\newblock Algorithms for manipulating sequential allocation.
\newblock In \emph{Proceedings of the 34th AAAI Conference on Artificial
  Intelligence (AAAI)}, pages 2302--2309, 2020.

\end{thebibliography}

\appendix

\section{Counterexamples for Resource-Monotonicity}
\label{app:resmon-failure}

Resource-monotonicity, as enunciated in Definition~\ref{def:mon}, appears to be an intuitive property that a ``reasonable'' allocation rule should possess. 
However, in this section we show that even in the unweighted setting, (natural versions of) two popular algorithms for fair allocation violate this axiom: the \textit{envy-cycle elimination} algorithm \citep{LiptonMaMo04}, and the \emph{adjusted winner} procedure \citep{BramsTa96}.\footnote{Note that the examples provided in the proofs of Propositions~\ref{prop:lipton_resmon} and~\ref{prop:aw_resmon} may not be counterexamples to resource-monotonicity for other versions of the respective algorithms. The aim of this section is not to produce an exhaustive list of algorithms failing resource-monotonicity, but to demonstrate the surprising non-triviality of this axiom.} 
Both of these algorithms produce EF1 allocations---the first works for any number of agents with arbitrary non-negative, monotone (not necessarily additive) utility functions, whereas the latter produces a Pareto optimal allocation for two agents with additive valuations. 
We refer to the above cited references for detailed descriptions of these algorithms.

Every iteration of the envy cycle elimination algorithm begins by allocating an arbitrary unallocated item to an arbitrary agent who is currently not envied by any other agent (the existence of such an agent is guaranteed by the envy cycle elimination step which follows item allocation)---this maintains the EF1 property as an invariant for the partial allocation at the end of each iteration. 
However, in an actual implementation, we have to use a tie-breaking convention to decide which agent gets which item in case there is a non-unique eligible agent-item pair. 
We will focus on a \textit{prima facie} reasonable tie-breaking rule: select the agent-item pair that results in the maximum gain in utility among all pairs of available items and unenvied agents, breaking further ties lexicographically with respect to agents first, and then with respect to items. We call this the \textit{maximum marginal utility tie-breaking rule}.

\begin{proposition}\label{prop:lipton_resmon}
In the unweighted setting, the envy cycle elimination algorithm with the maximum marginal utility tie-breaking rule does not satisfy resource-monotonicity.
\end{proposition}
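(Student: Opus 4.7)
My plan is to prove the proposition by exhibiting an explicit counterexample. I will construct a small instance---two or three agents and a handful of items---together with an additive utility matrix, then trace the envy cycle elimination algorithm step by step on both the $m$-item instance and the $(m+1)$-item instance obtained by appending one extra item. The goal is to show that some designated agent receives a bundle of strictly smaller utility in the extended instance, directly contradicting resource-monotonicity.

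The mechanism I will exploit is the locality of the maximum-marginal-utility tie-breaking rule: it picks the pair (unenvied agent, available item) maximizing $u_i(j)$, with lexicographic fallback on agents and then items. Adding an item changes the set of available items at every round, so even before the new item is actually picked, the pair achieving the max-marginal value at some step may shift (through a change in tie-breaking). Such a shift at an early round alters the resulting partial bundle, hence the envy graph, hence the unenvied-agent constraint for subsequent rounds. By designing the utilities carefully, I will arrange for a cascade in which an item that the designated agent collects in the base run is instead claimed by a rival agent---who has now become unenvied at the right moment---in the extended run, and the replacement items later picked by the designated agent are strictly less valuable in total.

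The main obstacle is the calibration of the utilities: they must be chosen so that (i)~the two traces diverge at a specific critical round, (ii)~every tie along both traces breaks in the direction intended by the lex rule, and (iii)~the cumulative loss to the designated agent is strict rather than being offset by later picks. Once such values are fixed, the remainder of the proof is a routine side-by-side trace: at each iteration I will list the partial bundles and the envy graph, identify the maximizing (unenvied agent, available item) pair, apply the tie-breaking rule, update the bundle, and recompute envy. Comparing the two terminal utilities of the designated agent then yields the strict inequality that witnesses the failure of resource-monotonicity.
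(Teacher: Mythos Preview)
Your plan---an explicit counterexample with a side-by-side trace of the algorithm on the $m$-item and $(m{+}1)$-item instances---is exactly the paper's approach, and the cascade idea (a rival becomes unenvied at the critical round and snatches the designated agent's item) is the right picture.

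One correction to the mechanism you sketch: the shift cannot happen ``through a change in tie-breaking'' \emph{before} the new item is picked. Item $m{+}1$ carries the highest index, so under lex tie-breaking on items it loses every tie against an existing item; if it is not the strict maximum for some unenvied agent it is simply ignored, and the round proceeds exactly as in the $m$-item run. The only way the new item can perturb the trace is by being the unique max-marginal pick at some round. The paper's construction uses precisely this lever: three agents, four items, with the added item given the single largest entry in the whole table, so agent~$1$ takes it in round one. That first pick reshapes the envy graph so that in round two a different agent (agent~$2$ rather than agent~$3$) holds the top marginal among the unenvied, and agent~$3$---who received a valued item in the three-item run---ends up with nothing in the four-item run. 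Steering your search toward ``new item is the global maximum and is taken first'' will make the calibration much easier than hunting for tie-break flips.
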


\begin{proof}
	Consider four items and three agents, all with weight~$1$, with the following utilities: 
	\vspace{3mm}
	\begin{center}
		\begin{tabular}{ c|c:ccc } 
			& Item 1 & Item 2 & Item 3 & Item 4 \\
			\hline
			Agent 1 & $11$ & $10$ & $5$ & $1$ \\
			Agent 2 & $1$ & $6$ & $1$ & $2$ \\
			Agent 3 & $0$ & $0$ & $4$ & $1$ \\
		\end{tabular}
	\end{center}
	\vspace{3mm}
	In the instance with items $2$, $3$, and $4$ only, we start by giving item $2$ to agent $1$. 
	Now agents~$2$ and $3$ are unenvied, so agent~$3$ gets item $3$ and with it a utility of $4$. Finally, only agent~$2$ is unenvied and receives item~$4$.
	
	When we include item~$1$ in the set, the first iteration assigns item~$1$ to agent~$1$. 
	Agents~$2$ and $3$ are now eligible, so agent~$2$ takes item $2$. 
	At this point, there is no envy. 
	So agent~$1$ takes item $3$, leaving agents~$2$ and $3$ as the unenvied agents. 
	Finally, agent~$2$ gets item $4$. Since agent $3$ receives no item, her valuation drops to $0$ even though the set of items was augmented.
\end{proof}

The adjusted winner procedure in its original form applies to dividing a collection of \textit{divisible} items between two (unweighted) agents, and produces an envy-free and Pareto optimal allocation with at most one item split between the agents. 
When the items are indivisible, there are different ways to adjust the procedure to obtain an EF1 and Pareto optimal allocation---here we will focus on the variant put forward by \citet{ChakrabortyIgSu20}.

\begin{proposition}\label{prop:aw_resmon}
	In the unweighted setting, the adjusted winner procedure for indivisible items does not satisfy resource-monotonicity.
\end{proposition}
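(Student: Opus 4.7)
The plan is to construct an explicit two-agent counterexample, mirroring the style of Propositions~\ref{prop:MNW-negative} and~\ref{prop:lipton_resmon}. First I would recall the precise version of the adjusted winner procedure for indivisible items from \citet{ChakrabortyIgSu20}: each agent's utility function is normalized to a common total, each item is tentatively assigned to the agent who values it more, and then items are transferred one at a time from the currently richer agent to the poorer one---in the order dictated by the loser-to-winner utility ratios---halting as soon as the resulting allocation becomes EF1 (Pareto optimality is preserved throughout).

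Next, I would hunt for a base instance with two agents and a small number of items in which the procedure produces a bundle of value $v > 0$ for agent~$1$, and then design one additional item whose inclusion causes the procedure to hand agent~$1$ a bundle of value strictly less than $v$. Two mechanisms are available: the new item can flip which agent is initially richer after the tentative assignment (by being very valuable to the originally poorer agent), or it can shift the ratio-sorted transfer sequence so that a different item is handed over before the EF1 condition is met. I would target the first mechanism because it tends to produce the largest swing: in the augmented instance the roles of ``winner'' and ``loser'' in the transfer phase are swapped, forcing the procedure to strip a high-value item from agent~$1$ rather than leave her alone or top her up. In other words, the extra item perturbs the initial imbalance enough to trigger a transfer that was previously unnecessary, and this transfer costs agent~$1$ more than she gains elsewhere.

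The main obstacle will be calibrating the utility values so that (i) both runs of the algorithm are deterministic under the paper's fixed tie-breaking convention, (ii) the EF1 halting rule triggers at genuinely different stages in the two runs, and (iii) the net effect on agent~$1$ is a strict decrease rather than an exchange of items of equal value. A brief hand-construction with small integer utilities, in the spirit of Proposition~\ref{prop:MNW-negative}, or a short case enumeration over low-value instances, should suffice to locate an appropriate example. Once such an instance is in hand, the verification reduces to executing the procedure twice---on the original and the augmented item set---and comparing agent~$1$'s final utility.
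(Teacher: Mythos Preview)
Your high-level approach matches the paper's: both are counterexample constructions followed by two runs of the procedure. However, two points deserve attention.

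First, the entire content of a proof-by-example is the example itself, and your proposal stops just short of producing one. ``A short case enumeration \ldots\ should suffice'' is not a proof; for this proposition the verification is trivial once the instance is in hand, so the example \emph{is} the argument.

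Second, the mechanism you prioritise is not the one that actually works in the paper's construction, and your intuitive justification for it is muddled. You propose adding an item that is ``very valuable to the originally poorer agent'' so that the winner/loser roles swap, ``forcing the procedure to strip a high-value item from agent~$1$.'' But if the item goes tentatively to the originally poorer agent and makes her the richer one, the transfer phase now moves items \emph{toward} the other agent, not away from her; it is not clear why this should cost agent~$1$ anything. The paper instead uses your second mechanism: it inserts a \emph{tiny} item that both agents value equally (utility~$\varepsilon$ each, ratio~$1$). This item lands at the very front of the ratio-sorted list, which shifts the EF1 cutoff back by one position---agent~$1$ now keeps the new $\varepsilon$-item and item~$2$ but loses item~$3$ (worth $2\varepsilon$ to her), a net drop of~$\varepsilon$. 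Concretely, with items $2,3,4$ agent~$1$ ends up with $\{2,3\}$ and utility $1+\varepsilon$, whereas with items $1,2,3,4$ she gets $\{1,2\}$ and utility~$1$.

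A minor correction: the indivisible variant in \citet{ChakrabortyIgSu20} does not normalise the utilities; it works directly with the per-item ratios $u_1(j)/u_2(j)$ and sweeps the cutoff until EF1 holds. The normalisation step you mention belongs to the classical divisible procedure and is irrelevant here (though harmless, since ratios are scale-invariant).
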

\begin{proof}
	Let $\varepsilon\in(0,1)$. Consider four items and two agents, both with weight $1$, with the following utilities:
	\vspace{3mm}
	\begin{center}
		\begin{tabular}{ c|c:ccc } 
			& Item 1 & Item 2 & Item 3 & Item 4\\
			\hline
			Agent 1 & $\varepsilon$ & $1-\varepsilon$ & $2\varepsilon$ & $1$\\
			Agent 2 & $\varepsilon$ & $\frac{3}{2}(1-\varepsilon)$ & $4\varepsilon$ & $3$ 
		\end{tabular}
	\end{center}
	\vspace{3mm}
	First, consider the instance with items~$2$, $3$, and $4$ only. The ratios of agent~$1$'s utility to that of agent~$2$ for these items are $2/3$, $1/2$, and $1/3$, respectively. 
	The procedure iterates over items~$2,3,4$ and performs the following comparisons: (1) $u_1(\{2\}) = 1 - \varepsilon < 1 = u_1(\{4\})$; (2) $u_1(\{2,3\}) = 1 + \varepsilon > 0$. Hence, the procedure terminates and assigns items~$2$ and $3$ to agent~$1$ and item~$4$ to agent~$2$.
	
	In the instance with all four items, the above ratio for item~$1$ being $1$, the procedure now iterates over the items $1,2,3,4$ and performs the following comparisons: (1) $u_1(\{1\}) = \varepsilon < 1 + 2\varepsilon= u_1(\{3,4\})$; (2) $u_1(\{1,2\}) = 1 = u_1(\{4\})$. Hence, the procedure stops and gives items $1$ and $2$ to agent~$1$ and the remaining items to agent~$2$.
	However, agent~$1$'s utility in the latter instance is $u_1(\{1,2\}) = 1$, which is lower than that in the former instance: $u_1(\{2,3\}) = 1 + \varepsilon$.
\end{proof}

\section{Counterexample for the Quota Method}
\label{app:quota-counterexample}

We present an example that the quota method is not weight-consistent.
In fact, the same example can also be used to show that the method is not weight-monotone, although Proposition~\ref{prop:quota-popmon} provides a counterexample for weight-monotonicity with a smaller number of agents.

\begin{proposition}
\label{prop:quota-weights}
The quota method fails weight-consistency and weight-monotonicity. 
\end{proposition}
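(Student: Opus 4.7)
The plan is to exhibit an explicit instance with $n\ge 4$ agents and a modest number of items together with two weight vectors $\textbf{w}=(w_1,w_2,\dots,w_n)$ and $\textbf{w}'=(w_1',w_2,\dots,w_n)$ with $w_1'>w_1$, and then compute the two picking sequences $\pi_\textbf{w}$ and $\pi_{\textbf{w}'}$ produced by the quota method round by round. Since the case $n=2$ is already weight-consistent by Theorem~\ref{thm:quota-weight-consistency} and the $n=3$ case is implicitly settled for weight-monotonicity by Proposition~\ref{prop:quota-popmon}, I will aim for the smallest $n$ at which the full weight-consistency failure can be engineered, which appears to be $n=4$.

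The mechanism I will exploit is the following: raising agent~$1$'s weight from $w_1$ to $w_1'$ strictly decreases the ratio $\frac{w_j}{w_1'+\sum_{i\ne 1}w_i}$ for every $j\ne 1$, which in turn strictly lowers the upper-quota threshold $\frac{w_j \cdot k}{\sum_{i'} w_{i'}}$ that governs $j$'s eligibility. Consequently, in some carefully chosen round $k$, an agent $j\ne 1$ who is eligible under $\textbf{w}$ becomes ineligible under $\textbf{w}'$, forcing a different eligible agent $j'\notin\{1,j\}$ to pick in her place; one round later, when $j$ becomes eligible again, her pick ends up \emph{after} that of $j'$, producing a swap of two non-$1$ picks relative to $\pi_\textbf{w}$. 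I will choose the weights as small rationals (say with a common small denominator) so that verifying the minimizer of $(t_i+1)/w_i$ over the eligible set in each round, together with the consistent tie-breaking rule, is a short arithmetic check.

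Once such a swap is produced, the violation of weight-consistency follows by direct inspection: the definition permits only (i) moving agent~$1$'s picks earlier, (ii) inserting agent~$1$'s picks, and (iii) trimming the suffix; none of these operations can change the relative order of two picks that both belong to agents different from~$1$. For weight-monotonicity, I will attach a utility profile to the \emph{same} instance so that the item agent~$1$ obtains in the position where the swap occurs is strictly less valuable to her than the item she obtains in the same position of $\pi_\textbf{w}$, while all other picks contribute equally to $u_1$. This gives $u_1(M_1)>u_1(M_1')$ despite $w_1'>w_1$, establishing the second failure with the same example.

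The main obstacle is the bookkeeping: eligibility, the minimizer of $(t_i+1)/w_i$, and the consistent tie-breaking must all be tracked simultaneously for both $\textbf{w}$ and $\textbf{w}'$ across every round, and the perturbation $w_1'-w_1$ must be large enough to flip eligibility at the intended round but small enough not to disturb any other round. I expect to find workable parameters by first fixing the desired picking sequences $\pi_\textbf{w}$ and $\pi_{\textbf{w}'}$ and the round at which the swap takes place, then reading off the finitely many strict inequalities on the $w_i$'s and $w_1'$ that these choices impose, and finally choosing rational weights satisfying the resulting system. Given the freedom of four weights and one perturbation, the system should be easy to satisfy, so the entire proof reduces to a case-by-case verification.
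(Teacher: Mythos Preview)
Your approach is the same as the paper's at the structural level: exhibit explicit weight vectors $\textbf{w}$ and $\textbf{w}'$ with $w_1'>w_1$, compute the two quota-method sequences, and observe that they differ by a reordering of two picks neither of which belongs to agent~$1$. Your key observation---that the operations (i)--(iii) in the definition of weight-consistency preserve the relative order of all non-$1$ picks, so any such reordering is fatal---is exactly how the paper verifies the consistency failure.

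The paper's concrete example uses $n=9$ agents and $m=7$ items, producing $\pi_{\textbf{w}}=(1,2,3,1,2,4,1)$ versus $\pi_{\textbf{w}'}=(1,2,1,2,3,1,4)$ after raising $w_1$ from $8/24$ to $9/24$. The mechanism there is the dual of yours: rather than a non-$1$ agent \emph{losing} eligibility because the total weight grows, agent~$1$ \emph{gains} eligibility earlier because her own threshold rises, and this early pick cascades into a swap of agents~$2$ and~$3$ in the non-$1$ subsequence. Either mechanism can in principle work.

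One caution: your plan assumes the two sequences agree exactly through round $k-1$ and then diverge only at the designated swap. But raising $w_1$ simultaneously raises agent~$1$'s threshold, lowers every other agent's threshold, and lowers agent~$1$'s Jefferson ratio $(t_1+1)/w_1$, so at every earlier round where a non-$1$ agent was selected you must check that she remains eligible under $\textbf{w}'$ \emph{and} that agent~$1$ does not now undercut her. Likewise, at round~$k$ you need agent~$1$ not to seize the pick that $j$ vacates. These constraints interact, and it is not obvious that $n=4$ suffices; you should not assert the value of $n$ until you have the explicit numbers in hand. Since the proposition is purely existential, the proof ultimately \emph{is} the example---your outline is sound, but the weights, rounds, and utilities still need to be written down and checked.
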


\begin{proof}
Consider nine agents and seven items with the following utilities:

\vspace{3mm}
\begin{center}
\begin{tabular}{ c|ccccccc } 
  & Item 1 & Item 2 & Item 3 & Item 4 & Item 5 & Item 6 & Item 7 \\
  \hline
 Agent 1 & $3$ & $0$ & $0$ & $2$ & $0$ & $0$ & $1$ \\
 Agent 2 & $0$ & $3$ & $2$ & $0$ & $1$ & $0$ & $0$ \\
 Agent 3 & $0$ & $0$ & $2$ & $0$ & $0$ & $0$ & $1$ \\
 Agents 4--9 & $0$ & $0$ & $0$ & $0$ & $0$ & $1$ & $0$\\
\end{tabular}
\end{center}
\vspace{3mm}

In the first instance, the weights of the agents are given by $w_1=8/24,w_2=7/24,w_3=3/24$ and $w_i=1/24$ for all $i \in \{4,5,6,7,8,9\}$. Observe that the weights of the agents sum up to $1$. 
We claim that the picking sequence induced by the quota method in this example is $\pi_1 = (1,2,3,1,2,4,1)$. 
For completeness, we go through the seven rounds in the following. 
In the first round, all agents are eligible and agent~$1$ uniquely minimizes $1/w_i$. 
In the second round, all but agent~$1$ are eligible and agent~$2$ minimizes $1/w_i$ among the eligible agents. 
In the third round, agents $3$--$9$ are eligible and hence, the next pick goes to agent $3$. 
In round four, agents $1$ and $2$ become eligible again while agent $3$ is not eligible anymore, and because $2/w_1 < 1/w_4$, agent~$1$ gets assigned the next pick. 
Similarly, agent~$2$ gets another pick in round five. 
In round six, the set of eligible agents contains only agents $4$--$9$ and we assume without loss of generality that the tie is broken in favor of agent~$4$. 
Finally, agent~$1$ becomes eligible again in the last round and gets assigned the final pick. 

We modify the instance by increasing the weight of agent~$1$ to $w'_1 = 9/24$ and keeping the remaining weights unchanged. 
Observe that the weights now sum up to $25/24$. 
We claim that the picking sequence induced by the quota rule for the modified instance is $\pi_2 = (1,2,1,2,3,1,4)$. For completeness, we go through the seven rounds in the following. 
The first two rounds proceed similarly as before, i.e., agents $1$ and $2$ are selected in rounds $1$ and $2$, respectively. 
In round three, however, the increased weight of agent~$1$ leads to agent $1$ being eligible already in this round. Since, $2/w_1 < 1/w_3$, agent~$1$ is selected again. 
Similarly, agent~$2$ is eligible in round four and giving her a second pick is preferred over giving a first pick to agent~$3$. 
In round five, the set of eligible agents contains agents $3$--$9$ and hence, agent~$3$ finally gets her first pick. 
In round six, agent~$1$ is eligible and since $3/w_1 < 1/w_4$, agent~$1$ gets her third pick in this round. 
Lastly, the set of eligible agents consists of agents $4$--$9$ and agent~$4$ is selected without loss of generality by tie-breaking. 

The picking sequence $\pi_1$ induces a utility of $6$ for agent~$1$, as she can pick items $1$, $4$, and $7$. 
By contrast, the picking sequence $\pi_2$ induces a utility of only $5$ for agent~$1$. 
This is because by the time of her third pick, agent~$3$ has already chosen item~$7$, leaving agent~$1$ with items $1$ and $4$ along with one item for which she has zero utility. 
This shows that the quota method fails weight-monotonicity. Moreover, one can check that $\pi_2$ cannot be constructed from $\pi_1$ by shifting picks of agent~$1$ to the front, inserting additional picks for agent~$1$, and trimming the resulting sequence. 
Hence, the quota method also fails weight-consistency. 
\end{proof}

\end{document}